\newcommand{\RR}{\mathbb{R}}
\newcommand{\CC}{\mathbb{C}}
\newcommand{\ZZ}{\mathbb{Z}}
\newcommand{\dd}{\partial}
\newcommand{\db}{\bar\partial}
\newcommand{\tot}{\mathrm{tot}}
\newcommand{\simeom}{\underset{\mathrm{e.o.m.}}{\sim}}
\newcommand{\lan}{\left\langle}
\newcommand{\ran}{\right\rangle}
\newcommand{\mc}{\mathcal}
\newcommand{\mr}{\mathrm}
\newcommand{\ra}{\rightarrow}
\newcommand{\til}{\widetilde}
\newcommand{\OO}{\mathcal{O}}
\newcommand{\cF}{\mathcal{F}}
\newcommand{\reg}{\mathrm{reg.}}
\newcommand{\rg}{\mathrm{reg}^{(\infty)}}
\newcommand{\DD}{\mathsf{D}}
\newcommand{\II}{\mathbb{I}}
\newcommand{\JJ}{\mathbb{J}}
\newcommand{\cl}{\mathrm{cl}}
\newcommand{\g}{\mathfrak{g}}
\newcommand{\gm}{\pmb{\gamma}}
\newcommand{\ad}{\mathrm{ad}}
\newcommand{\nl}{:\!}
\newcommand{\nr}{\!:}
\newcommand{\soak}{\Theta}
\newcommand{\tsoak}{\widetilde{\Theta}}
\newcommand{\tDelta}{\widetilde{\Theta}}
\newcommand{\idg}{\mathds{1}}
\newcommand{\ff}{\mathbbm{f}}
\newcommand{\sg}{\mathsf{g}}
\theoremstyle{definition}
\newtheorem{remark}{Remark}[section]
\newtheorem{example}[remark]{Example}
\newtheorem{lemma}[remark]{Lemma}
\newtheorem{proposition}[remark]{Proposition}
\newtheorem{corollary}[remark]{Corollary}
\begin{document}

\title[Non-abelian $BF$ theory as a conformal field theory]{Two-dimensional non-abelian $BF$ theory in Lorenz gauge as  
a solvable
logarithmic TCFT  
}

\begin{abstract}
We study two-dimensional non-abelian $BF$ theory in Lorenz gauge and prove that it is a topological conformal field theory. This opens the possibility to compute topological string amplitudes (Gromov-Witten invariants). We found that the theory is exactly solvable in the sense that all correlators are given by finite-dimensional convergent integrals. Surprisingly, this theory turns out to be logarithmic in the sense that there are  correlators given by polylogarithms and powers of logarithms.
Furthermore, we found fields with ``logarithmic conformal dimension'' (elements of a Jordan cell for $L_0$). We also found certain vertex operators with anomalous dimensions that depend on the non-abelian coupling constant. The shift of dimension  of composite fields may be understood as arising from the dependence of subtracted singular terms on local coordinates. This 
generalizes the well-known explanation of anomalous dimensions of vertex operators in the free scalar field theory.
\end{abstract}

\author{Andrey S. Losev}
\address{Federal Science Centre ``Science Research Institute of System Analysis at Russian Science Academy''
(GNU FNC NIISI RAN), Moscow, Russia}
\address{Institute for Theoretical and Experimental Physics, Moscow, Russia}
\address{National Research University Higher School of Economics, Russian Federation}
%\address{Laboratory of Mirror Symmetry NRU HSE, Moscow, Russia}
\address{Moscow Institute of Physics and Technology (MIPT), Dolgoprudnyi, Russia}
\address{Wu Key Lab, USTC}
%\address{Higher School of Economics, Usacheva str., 6, 119048 Moscow, Russia}
\email{aslosev2 @gmail.com}

\author{Pavel Mnev}
\address{
University of Notre Dame %, Notre Dame, Indiana 46556, USA
}
\address{
St. Petersburg Department of V. A. Steklov Institute of Mathematics of the Russian Academy of Sciences
%, Fontanka 27, 191023 St. Petersburg,  Russia
}
\email{pmnev @nd.edu}
%\thanks{The author acknowledges partial support of RFBR Grant No. 17-01-00283a.}

\author{Donald R. Youmans}
\address{
%Section de math\'ematiques,
Universit\'e de Gen\`eve
%, 2--4, rue du Li\`evre, c.p. 64
%1211 Gen\`eve 4, Switzerland
}
\email{Donald.Youmans @unige.ch}

\thanks{The work of A. L. was accomplished in GNU
FNC NIISI RAN program No.6, theme 36.20,
and was partially supported by Laboratory of Mirror Symmetry NRU HSE, RF Government grant, ag. no. 14.641.31.0001.
 P. M. acknowledges partial support of RFBR Grant No. 17-01-00283a. Research of D. Y. was supported by the Grant 178794 %165666 
 and the NCCR SwissMAP of the Swiss National Science Foundation.}

\date{\today}
\maketitle

\setcounter{tocdepth}{3} 
\tableofcontents

\section*{Introduction and outline}

A two-dimensional conformal field theory is called topological if it contains an odd symmetry $Q$ satisfying $Q^2=0$ and such that the stress-energy tensor is $Q$-exact: $$T=Q(G),\; \bar{T}=Q(\bar{G})$$
Given a topological conformal field theory, one can consider so-called  ``coupling to topological gravity.'' This amounts to considering  correlators of fields $G$ and $\bar{G}$ (in the presence of vertex operators) as differential forms on moduli spaces of complex structures on surfaces with marked points. Periods of these differential forms are called the (generalized) amplitudes in topological string theory (generalized Gromov-Witten invariants), see \cite{WittenCSstring}.
%One way to get such theories is to 

%Mostly, as examples of TCFT, people study  A-twisted and B-twisted (2,2)-superconformal field theories.
In the study of TCFTs, mostly A-twisted and B-twisted (2,2)-superconformal field theories were considered in the literature.
In the  end of 1980s another class of topological theories were studied, coming from gauge-fixing of a gauge theory with topological (diffeomorphism-invariant) action -- for instance, Chern-Simons theory.
One may expect that in two dimensions, in a proper gauge-fixing (like Lorenz gauge), these topological theories would be also conformal.
%However,  $BF$ theory in the Lorenz gauge  would provide a gauge example of TCFT. However, it turns out that the distinction between these two types of theories is superficial. In particular, 
In fact, in our previous work \cite{LMY} 
%we showed that there is no immanent distinction between twisted and gauge conformal topological theories. 
 we showed that the abelian $BF$ theory in Lorenz gauge is a type
 B-twisted (2,2)-superconformal theory with target being an odd complex plane ($\Pi \mathbb{C}$ or $\mathbb{C}[1]$).  

This work is devoted to the study of % generalizing these results to 
the two-dimensional non-abelian $BF$ theory in Lorenz gauge as a topological confromal field theory.

In Section \ref{s: classical} we consider the two-dimensional $BF$ theory for an arbitrary Lie algebra $\g$, with fields being a $\g$-valued one-form $A$ with curvature $F$  and a $\g^*$-valued function $B$.
We start by considering the classical action that appears after imposing  Lorenz gauge. In %Lorenz gauge 
the gauge-fixed theory in dimension two we  have conformal invariance
on the classical level. Since the metric enters the action only through gauge-fixing, we find that the stress-energy tensor is classically $Q$-exact.
As it is clear from the form of the action, the non-abelian deformation violates accidental symmetries of the abelian theory. The only conserved currents in the deformed theory are:
the holomorphic piece of the stress-energy tensor $T$, its complex conjugate $\bar{T}$, the superpartner $G$ of $T$ and its complex conjugate $\bar{G}$ and the total BRST current $J$ (that is a sum of $J^{(1,0)}$
and $J^{(0,1)}$ pieces that are not conserved separately).

\begin{comment}From the point of view of B-twisted theory on $\Pi \mathbb{C}$, the non-abelian deformation is the deformation by the even BV Hamiltonian $ \frac12 f^a_{be} c^*_ac^b c^e $ of the Chevalley vector field.
%\footnote{
%This Hamiltonian is $ \frac12 f^a_{be} c^*_ac^b c^e $.
%}
%that is even 
Note that on $\mathbb{C}$ only even polyvectors, like functions (superpotentials), bivectors (Poisson structures), etc.  -- or odd polyvectors with values in odd-degree forms, like the Beltrami differential -- 
have even BV Hamiltonians.\footnote{
The BV Hamiltonian for the Beltrami differential $\mu$ is: $x^*_{i}\mu^i_{\bar{j}}\bar\psi^{\bar{j}}$.
}
So here the change of parity of the target
leads to new phenomena with respect to the ordinary B-twisted model.
\end{comment}

The characteristic feature of $BF$ theories in any dimension is the upper-triangular structure of the interaction. Thus, we expect to
get only tree level and one loop contributions in Feynman diagrams. In Section \ref{s: correlators} we show that this property is preserved by Lorenz gauge-fixing.
To our surprise, we find that for reductive Lie algebras (the exact condition is written in Subsection \ref{sec: general correlators and admissible Feynman graphs}) the one-loop contribution vanishes
due to cancellation between ghost and gauge fields. So, unexpectedly, on the level of correlators of fundamental fields, the theory
is classical and hence finite.\footnote{Therefore, it would be interesting to relate this theory to the instantonic theory (in the sense of \cite{FLN}) for instantonic
equations 
$
dA+A^2=0, \;\; d^*A=0
$
but we will leave this for further studies.}
%In Section \ref{s: correlators} we also find that, for a reductive Lie algebra, the correlators offundamental fields are finite. 
Thus we conclude 
that the theory is conformal (since it does not need to have ultraviolet regularization and renormalization). We proceed by computing
simplest correlators on the complex plane. Here we meet another surprise -- the correlators involve logarithms, dilogarithms and so on. Thus, in this section we start to get evidence that the theory is logarithmic 
%we have a logarithmic theory 
-- this will be confirmed in Section \ref{ss: vertex operators}. It would be interesting to compare this with the logarithmic theories arising as instantonic theories in \cite{FLN}.
We conclude this section by describing soaking observables (delta-functions of scalar fields) that allow one to pass from the plane to the sphere.
Note that Witten in \cite{Witten2dYM} had a different way to deal with the zero-modes of the  field $B$.
The insertion of delta-functions of scalar fields can be interpreted in terms of a modification of the moduli space of flat connections -- we are planning to return to this question in the nearest future.
%may be considered in particular as a passage from a quotient by the full algebra of gauge transformations to a quotient by its subalgebra consisting of gauge 
%transformations that are restricted at marked points, say are equal to 1. We are planning to return to this question and to give an interpretation of correlators in $BF$ theory in terms of some modification of the moduli space of flat connections in the nearest future.

In Section \ref{ss: OPE} we compute OPEs of fundamental fields and observe unusual coefficient functions like  $\log|z-w|$ and
 $\frac{\bar{z}-\bar{w}}{z-w}$. We think that such coefficient functions are characteristic features of a logarithmic conformal theory. We also find that not only correlators have finitely contributing diagrams,
 %many diagrams contributing to them, 
 but also
  OPEs, which is a much stronger statement.
%  are given by a sum of finitely many Feynman diagrams with external leaves.

Since we would like to study correlators of $T$, $G$ and $J$ and they are composite fields, we extend our considerations  to composite fields in Section \ref{ss: composite fields}. We start by defining the composite field as a result of consecutive mergings of
fundamental fields accompanied by subtraction of singular parts. In this way the composite field depends on the order of mergings. Moreover,
we can define in a similar way the bilinear product of composite fields -- the result of merging of two composite fields accompanied by subtraction of the singular piece. Here we have an open question -- does this product satisfy the pre-Lie algebra identity (\ref{pre-Lie identity})?
Proceeding to the  fields $T,G,J$, we are surprised to find that these fields are independent of the order of merging and have zero singular subtractions. It would be interesting to understand this
%without doing tedious calculations. 
more conceptually, not merely as a result of a long computation.
In this section we also present examples of correlators containing higher powers of logs.

%There are two kinds of composite fields. 
%In Section \ref{s: conf and Q-invariance}, 
Despite the fact that $T,G,J$ are independent of the order of merging and contain no singular subtractions, it is a priori not clear why they are conserved and why their OPE is the standard one.
To leave no doubt, we prove these properties directly in Section \ref{s: conf and Q-invariance}.
It would be %very 
interesting to understand if these properties of $T,G,J$ could be deduced from the cancellations of singular subtractions we found in Section \ref{ss: composite fields}.

In Section \ref{ss: vertex operators} we compute conformal dimensions of some composite fields.
We find that there are fields $V^{(n)}$ (\ref{V^n}), for $n\geq 0$, with logarithmic anomalous dimensions in the sense that
$$L_0 V^{(n)}=V^{(n)} + g V^{(n-1)}\quad, \quad \bar{L}_0 V^{(n)}= g V^{(n-1)}$$ where $g$ is the non-abelian coupling constant.
This confirms the logarithmic nature of the theory.
Moreover, we can build a vertex operator 
$$
V=\sum_{n}  V^{(n)}
$$
with anomalous dimension $(1+ g, g)$.
Like in the case of free scalar theory, the origin of the anomalous dimension may be explained 
as arising from the dependence of the singular subtraction on the local coordinate. 

In conclusion we should mention that we have constructed and studied a novel class of topological 
logarithmic conformal field theories. Our next step would be the construction
 of topological string amplitudes in such theories. We plan to do that in the nearest future.

\begin{comment}
3. Where are we going to go
a) Amplitudes of topological string based on BF theory
b) Relation of correlators to moduli space of flat connection
c) Relation to instantonic theories
\end{comment}

\subsection*{Acknowledgements}
This work originated from discussions of A. L. with Nikita Nekrasov more than 20 years ago.
We thank Anton Alekseev, Nikita Nekrasov, Brant Pym and Konstantin Wernli for insightful comments.
%\marginpar{Thanks to: Konstantin Werli, Brant Pym? (Dilogarithm discussions) }

\section{Classical non-abelian theory}\label{s: classical}
In this section we discuss the classical two-dimensional non-abelian $BF$ theory, paralleling the treatment of the abelian case in \cite{LMY}.

%Fix a finite-dimensional Lie algebra $\g$ which is \textit{unimodular}, i.e., $\mr{tr}\, [t,-]=0$ for any $t\in \g$. In particular, $\g$  can be any reductive Lie algebra.
Fix a finite-dimensional %reductive 
Lie algebra $\g$.\footnote{ 
When discussing quantization, we will need to assume  that $\g$ is \textit{strongly unimodular}, see (\ref{trace identity}).  In particular, this assumption
holds for all semisimple and nilpotent Lie algebras, or sums of those.
%We require that $\g$ is unimodular, i.e., $\mr{tr}\, [X,-]=0$ for any $X\in\g$.
}

We consider the non-abelian $BF$ theory on the complex plane $\CC$,\footnote{
%In the classical discussion, 
Throughout this section 
we can everywhere replace $\CC$ by any surface $\Sigma$ equipped with a metric (needed for the gauge-fixing). We specialize to $\CC$ right away, as it will be the case of relevance in the discussion of 
quantization.
} defined classically by the action
\begin{equation}\label{S^cl}
S^\cl=\int_\CC \lan B, dA + \frac{g}{2}[A,A] \ran
\end{equation}
Here the classical fields are: a $\g$-valued $1$-form $A$ on $\CC$ and a $\g^*$-valued $0$-form $B$; $d$ is the de Rham operator; $\lan-,-\ran$ is the canonical pairing between the coefficient Lie algebra $\g$ and its dual $\g^*$; $g$ is a coupling constant (deformation parameter corresponding to the deformation of the abelian theory into the non-abelian one).

The equations of motion are: 
$$dA+\frac{g}{2}[A,A]=0\; \mbox{(flatness of $A$ as a connection),}\qquad  dB+g[A,B]=0$$  
Here $[A,B]=\mr{ad}^*_A(B)$ is a notation for the coadjoint action of $\g$ on $\g^*$; it is consistent with the case when $\g^*$ is identified with $\g$ via non-degenerate Killing form. The action $S^\cl$ is invariant  under infinitesimal gauge transformations 
$$A\mapsto A+d\alpha+g[A,\alpha],\qquad B\mapsto B+g[B,\alpha]$$ 
with generator $\alpha$ a $\g$-valued $0$-form.

\subsection{Gauge-fixing in BRST formalism}
We consider the non-abelian $BF$ theory in \textit{Lorenz gauge} $d*A=0$, with $*$ the Hodge star on $\CC$. The corresponding Faddeev-Popov gauge-fixed action is:
\begin{equation}\label{S}
S=\int_\CC \lan B, dA + \frac{g}{2}[A,A] \ran + \lan \lambda, d*A \ran + \lan b, d*d_A c \ran
\end{equation}
Here $\lambda$ is the Lagrange multiplier imposing the gauge condition and $b,c$ are Faddeev-Popov ghosts (anti-commuting scalar fields); $d_A=d+g[A,-]$ is the de Rham operator twisted by $A$. Action $S$ is a function on the space of BRST fields:
$$ \cF= \underbrace{\Omega^1(\g)}_A\oplus \underbrace{\Omega^0(\g^*)}_B\oplus \underbrace{\Omega^0(\g^*)}_\lambda\oplus \underbrace{\Omega^0(\g^*)[-1]}_b\oplus \underbrace{\Omega^0(\g)[1]}_c $$
where $\Omega^p(\cdots)$ stands for the space of $p$-forms on $\CC$ with coefficients in $\cdots$; $[\pm1]$ are homological degree shifts and correspond to assigning ghost degree $-1$ to $b$ and $+1$ to $c$. The BRST operator acts as follows:
\begin{equation}\label{Q}
Q:\quad A\mapsto d_A c,\quad B\mapsto g[c,B],\quad c\mapsto \frac{g}{2}[c,c] , \quad b\mapsto \lambda,\quad \lambda \mapsto 0
\end{equation}
Action (\ref{S}) is a shift of the classical action (\ref{S^cl}) by a $Q$-coboundary:
\begin{equation}\label{S=S^cl+Q Psi}
S=S^\cl+Q(\Psi)
\end{equation}
with 
$\Psi=\int_\CC \lan b, d*A\ran$ the gauge-fixing fermion. Euler-Lagrange equations for the action (\ref{S}) read:
\begin{multline}\label{eom in real fields}
dA+\frac{g}{2}[A,A]=0, \quad d*A=0,\quad d*d_A c=0,\quad d_A*d b=0,\\
d_A B-*d\lambda-g[c,*db]=0
\end{multline}

\begin{remark}[Superfields]
One can combine the fields $A,B,c,b$ into two \textit{superfields} (or, more precisely, ``gauge-fixed AKSZ superfields'') valued in non-homogeneous forms:
\begin{equation}\label{superfields}
\mc{A}=c+A,\qquad \mc{B}=B-*db
\end{equation}
Written in terms of superfields $\mc{A},\mc{B}$ and the Lagrange multiplier $\lambda$, the action (\ref{S}) is:
\begin{equation}
S=\int_\CC \lan \mc{B}, d\mc{A}+\frac{g}{2}[\mc{A},\mc{A}] \ran+\lan \lambda, d*\mc{A} \ran
\end{equation}
Here we are integrating the $2$-form component of the integrand. The integrand above differs from the integrand of (\ref{S}) by a total derivative $d(\cdots)$ which is inconsequential.
\end{remark}

\subsection{Complex fields} Let $x^1,x^2$ be the real coordinates on $\CC\sim \RR^2$ and $z=x^1+i x^2$ the complex coordinate. 

We split the $1$-form field $A$ into $(1,0)$ and $(0,1)$-form components: $A=dz\, a+d\bar{z}\, \bar{a}$ where %$z$ is the complex coordinate on $\CC$ and 
$a,\bar{a}$ are $\g$-valued scalars. Also, we combine the field $B$ and the Lagrange multiplier $\lambda$ into a $\g^*$-valued complex scalar field $\gamma=\frac12 (\lambda+iB)$ and its complex conjugate $\bar\gamma =\frac12 (\lambda-iB)$.

Written in terms of fields $(a,\bar{a},\gamma,\bar\gamma,b,c)$, the action (\ref{S}) takes the following form:
\begin{multline}\label{S nonab in cx fields}
S=4\int_\CC d^2 x \Big(\lan \gamma, \db a \ran + \lan \bar\gamma,\dd \bar{a} \ran + \lan b, \dd \db c \ran -\\
-\frac{g}{2} \lan \gamma-\bar\gamma,[a,\bar{a}]\ran-\frac{g}{2}\lan \dd b, [\bar{a},c] \ran-\frac{g}{2} \lan \db b,  [a,c] \ran \Big)
\end{multline}
Here $d^2x =dx^1 dx^2=\frac{i}{2}dz\, d\bar{z}$ is the standard area form on $\CC$ and $\dd=\frac{\dd}{\dd z}$, $\db=\frac{\dd}{\dd \bar{z}}$ are the partial derivatives (not the holomorphic/anti-holomorphic Dolbeaux operators: we do not include $dz,d\bar{z}$ in $\dd,\db$ in our notations). 

Equations of motion (\ref{eom in real fields}) written in complex fields take the form
\begin{equation}\label{eom in cx fields}
\begin{aligned}
%\begin{array}{cc}
\db a-\frac{g}{2} [a,\bar{a}]&=0,& \dd \bar{a}+\frac{g}{2} [a,\bar{a}]&=0,\\
\db \gamma+\frac{g}{2} [\bar{a},\gamma-\bar\gamma]-\frac{g}{2} [c,\db b] &=0, & 
\dd \bar\gamma-\frac{g}{2} [a,\gamma-\bar\gamma]-\frac{g}{2} [c,\dd b] &=0,\\
\dd \db b+\frac{g}{2}[a,\db b]+\frac{g}{2}[\bar{a},\dd b]&=0,& \dd \db c+\frac{g}{2} \db [a,c]+\frac{g}{2} \dd [\bar{a},c]&=0
%\end{array}
\end{aligned}
\end{equation}
Finally, the BRST operator $Q$ becomes the following:
\begin{equation}\label{Q in cx fields}
Q:\qquad 
\begin{aligned}
a&\mapsto -\dd c - g[a,c], & \bar{a} &\mapsto -\db c-g[\bar{a},c], \\
\gamma&\mapsto \frac{g}{2}[c,\gamma-\bar\gamma],& \bar\gamma&\mapsto -\frac{g}{2}[c,\gamma-\bar\gamma], \\
b&\mapsto \gamma+\bar\gamma,& c&\mapsto \frac{g}{2}[c,c]
\end{aligned}
\end{equation}

\subsection{BRST current}
The Noether current associated to BRST symmetry 
%$\phi\ra \Phi+\epsilon Q\phi$ (with $\phi$ any field and $\epsilon$ an odd parameter) 
is
\begin{equation}\label{J^tot}
%\begin{aligned}
J^\mr{tot} %&= \lan B,d_A c \ran+ \lan \lambda,*d_A c \ran-\frac{g}{2}\lan *db,[c,c] \ran \\
=-2i(dz\, J - d\bar{z}\, \bar{J})
%\end{aligned}
\end{equation}
where 
\begin{equation}\label{J}
\begin{aligned}
J&=\lan \gamma, \dd c \ran + g \lan \gamma,[a,c] \ran-\frac{g}{4} \lan \dd b,[c,c] \ran,\\
\bar{J}&=\lan \bar\gamma, \db c \ran + g \lan \bar\gamma,[\bar{a},c] \ran-\frac{g}{4} \lan \db b,[c,c] \ran
\end{aligned}
\end{equation}
The current $J^\mr{tot}$ is conserved: 
\begin{equation}\label{d J^tot = 0}
dJ^\mr{tot}\simeom 0
\end{equation} 
(where $\simeom$ means equivalence \textit{modulo equations of motion}). 

\textbf{Warning:} The 
$(1,0)$- and $(0,1)$-form components $J,\bar{J}$ of the current  are not conserved separately (unlike in abelian $BF$ theory): $\db J\not\sim 0$, $\dd \bar{J}\not\sim 0$.

In terms of real fields, the BRST current spells
$$ J^\mr{tot}= \lan B,d_A c \ran+ \lan \lambda,*d_A c \ran-\frac{g}{2}\lan *db,[c,c] \ran $$

\begin{comment}
\marginpar{Do we want this remark?}
\begin{remark}
One can also consider the global symmetry given by constant gauge transformations $\phi\mapsto \phi+\epsilon g [\phi,X]$ with $\phi$ any of the fundamental fields $A,B,\lambda,b,c$ and $X\in \g$ any Lie algebra element -- the constant generator of the gauge transformation. The corresponding Noether current is $<j^\tot,X>$ with $j^\tot$ a $\g^*$-valued current
$j^\tot=-2i (dz\, j-d\bar{z}\, \bar{j})$
with
$$ j=g[\gamma,a]-\frac{g}{2}[\dd b,c]+\frac{g}{2} [b,\dd c]+\frac{g^2}{2}[b,[a,c]] $$
\end{remark}
and $\bar{j}$ is given by the complex conjugate expression. As for the BRST current, $d j^\tot\simeom 0$ (by Noether theorem), but the components are not conserved separately: $\db j\not\sim 0$, $\dd\bar{j}\not\sim 0$.
\end{comment}

\subsection{Classical conformal invariance and the $Q$-exact stress-energy tensor}
Action (\ref{S}) can be considered on any surface $\Sigma$ endowed with a Riemannian metric $\xi$, which enters the integrand via the Hodge star. Let us denote the action on $(\Sigma,\xi)$ by $S_{\Sigma,\xi}$. Since Hodge star acts in (\ref{S}) only on $1$-forms, 
the action $S_{\Sigma,\xi}$ is invariant under Weyl transformations -- rescaling of $\xi$ by a positive function on $\Sigma$. 
%we have invariance under Weyl transformations: $S_{\Sigma,\Omega\cdot \xi}=S_{\Sigma,\xi}$ for $\Omega>0$ any positive function on $\Sigma$.
Thus, the action $S_{\Sigma,\xi}$ depends only on the conformal class of the metric.

%Let us go back to the case of the plane $\Sigma=\CC$ but allow for the moment an arbitrary metric $\xi=\xi_{\mu\nu}dx^\mu\cdot dx^\nu$.
One defines the stress-energy tensor $T^\tot=T_{\mu\nu}dx^\mu dx^\nu$ (a field-dependent section of the symmetric square of the cotangent bundle $\mr{Sym}^2 T^*\Sigma$) as the reaction of $S_{\Sigma,\xi}$ to an infinitesimal change of metric $\xi$. More precisely, $T^\tot$ is defined via
\begin{equation*}%\label{T via delta S}
\delta_\xi S_{\Sigma,\xi}=-\int_\Sigma \sqrt{\det\xi}\, d^2 x\, T_{\mu\nu}\,\delta \xi^{\mu\nu}
\end{equation*}
Here $x^1,x^2$ are local coordinates on $\Sigma$, $\xi^{\mu\nu}$ are the components of the inverse metric $\xi^{-1}$; $\delta_\xi$  stands for the variation w.r.t. a change of metric. 

Since the action can be written as $S^\cl+Q(\Psi_\xi)$ where $S^\cl$ and $Q$ are manifestly independent of the metric and only the gauge-fixing fermion $\Psi_\xi=\int \lan b,d*_\xi A \ran$ is metric-dependent (via the Hodge star), the stress-energy tensor is $Q$-exact: 
\begin{equation}\label{T=Q(G)}
T^\tot=Q(G^\tot)
\end{equation}
The primitive $G^\tot=G_{\mu\nu}dx^\mu dx^\nu$ is defined in terms of the variation of the gauge-fixing fermion w.r.t. the metric: $\delta_\xi \Psi_{\xi}=-\int_\Sigma \sqrt{\det\xi}\, d^2 x\, G_{\mu\nu}\,\delta \xi^{\mu\nu}$. The explicit calculation is the same as in the abelian theory \cite{LMY} (since $\Psi$ does not depend on the deformation parameter $g$) and yields the result
\begin{equation}\label{G^tot}
G^\tot=(dz)^2\underbrace{\lan a , \dd b \ran}_{G} + (d\bar{z})^2 \underbrace{\lan \bar{a},\db b \ran}_{\bar{G}}
\end{equation}
-- This result is valid for an arbitrary surface $\Sigma$, with $z,\bar{z}$  local complex coordinates (compatible with the conformal class of a given metric $\xi$), and, as a special case, for $\Sigma=\CC$ with standard metric and $z$ the global complex coordinate.

Next, we calculate the stress-energy tensor from (\ref{T=Q(G)}) and the explicit formula (\ref{G^tot}) for $G^\tot$:
\begin{equation}
T^\tot=(dz)^2 T+ (d\bar{z})^2 \bar{T}
\end{equation}
where the holomorphic component is:
\begin{equation}\label{T explicit}
T=Q(G)=\lan \dd b, \dd c+g[a,c] \ran + \lan a,\dd(\gamma+\bar\gamma) \ran\
\end{equation}
Modulo equations of motion, one can simplify it to an equivalent form 
\begin{equation}\label{T equivalent}
T\simeom \lan \dd b, \dd c \ran + \lan a,\dd\gamma\ran +\frac{g}{2}\lan \dd b,[a,c] \ran
\end{equation}
The anti-holomorphic component $\bar{T}$ of the stress-energy tensor is given by the complex conjugate of (\ref{T explicit}), (\ref{T equivalent}).  Note that the stress-energy tensor $T^\tot$ does not have a $dz \,d\bar{z}$ component, which is tantamount to conformal invariance of the action.

%Both $G^\tot$ and $T^\tot$ are conserved: $dG^\tot\sim 0$, $dT^\tot\sim 0$. In fac
%Chiral 
The
components of the stress-energy tensor and its primitive are holomorphic/anti-holomorphic
%satisfy holomorphicity/anti-holomorphicity 
modulo equations of motion:
\begin{equation}\label{G,T conservation}
\db G\simeom 0,\quad \dd \bar{G}\simeom 0,\qquad \db T\simeom 0,\quad \dd \bar{T}\simeom 0
\end{equation}

\subsection{Non-abelian theory as a deformation of abelian theory by a 2-observable}
Setting $g=0$ in all the formulae, we get the \textit{abelian} $BF$ theory (with coefficients in $\g$ viewed as a vector space, or, equivalently, $\dim\g$ non-interacting copies of abelian $BF$ theory with coefficients in $\RR$). We will indicate objects corresponding to the abelian theory by a subscript ``$0$'': we have the action $S_0$, BRST operator $Q_0$, stress-energy tensor $T_0$ etc. In particular, the abelian action 
\begin{equation}\label{S_0 via cx fields}
S_0=4\int d^2x \left(\lan a, \db \gamma \ran+ \lan \bar{a},\dd \bar\gamma \ran + \lan b,\dd \db c \ran\right)
\end{equation}
is a sum of three %non-interacting free (quadratic) 
conformal field theories: a holomorphic and an anti-holomorphic first-order $\beta\gamma$-system and a second-order ghost system. 
The three constituent conformal theories are free (quadratic) and do not interact between each other on the level of the action but are intertwined by the BRST operator $Q_0$.

We say that a sequence $\OO^{(0)},\OO^{(1)},\OO^{(2)}$ of composite fields (local expressions in terms of fundamental fields of the theory and their derivatives of finite order at a given point), such that $\OO^{(p)}$ is valued in $p$-forms, forms a Witten's hierarchy of observables if Witten's descent equation 
$$Q \OO^{(p)}\simeom d\OO^{(p-1)}$$
is satisfied (modulo equations of motion) for $p=0,1,2$; we understand the equation for $p=0$ as $Q\OO^{(0)}=0$. We then say that $\OO^{(p)}$ is a ``$p$-observable'' (in the sense that its integral over a $p$-cycle is a gauge-invariant expression), and we say that  $\OO^{(p)}$ is obtained from  $\OO^{(p-1)}$ via descent.
%
%Given a $(p-1)$-observable $\OO^{(p-1)}$ (i.e., an observable valued in $(p-1)$-forms), one is interested in associating to it a $p$-observable $\OO^{(p)}$ satisfying Witten's descent equation
%$$Q \OO^{(p)}=d\OO^{(p-1)}$$
%
Starting from a given $0$-observable $\OO^{(0)}$, one can solve the descent equation for $\OO^{(1)}$ and then for $\OO^{(2)}$ directly: one constructs $\OO^{(p)}$ by
%A solution of the descent equation (starting from a given $0$-observable $\OO^{(0)}$) can be constructed explicitly 
using the operator product expansion of $\OO^{(p-1)}$ with fields $G,\bar{G}$ - components of the BRST primitive of the stress-energy tensor. Explicitly (see \cite{LMY} for details):
$$ \OO^{(p)}_w=-\frac{1}{p} \left(dw \oint_{C_w\ni z} \frac{dz}{2\pi i}G_z \OO^{(p-1)}_w - 
d\bar{w} \oint_{C_w\ni z} \frac{d\bar{z}}{2\pi i}\bar{G}_z \OO^{(p-1)}_w
\right) $$
where the integration is over a simple closed contour $C_w$, going around $w$ once counterclockwise; subscripts $z,w$ are the points where the fields are inserted. The equality is understood as an equality under the correlator with any number of test fields inserted outside the integration contour $C_w$.

Within the abelian theory, starting from a $Q_0$-closed observable  $\OO^{(0)}$, %in a free theory, 
one is interested in constructing the corresponding descents $\OO^{(1)}$ and  $\OO^{(2)}$. Then one can deform the action of the abelian theory by
\begin{equation}\label{S_0 deformation to S_0+g O^2 }
S_0\mapsto S_0+g\int_\CC \OO^{(2)}
\end{equation}
with $g$ a deformation parameter.

The non-abelian deformation of the abelian theory corresponds to choosing
$$ \OO^{(0)}=\frac12 \lan B,[c,c] \ran $$
The corresponding first and second descent are:
\begin{align}
\OO^{(1)}&= \lan B,[A,c] \ran -\frac12 \lan *db,[c,c] \ran \\
\OO^{(2)} &= \frac12 \lan B,[A,A] \ran-\lan *db,[A,c] \ran
\end{align}
-- see the calculation in \cite{LMY}. Thus, the deformed action is 
\begin{equation}\label{S non-ab as S_0+g O^2}
S=
\underbrace{\int_\CC \lan B,dA \ran+ \lan \lambda, d*A \ran+\lan b,d*d c \ran}_{S_0} + \underbrace{\frac{g}{2}\lan B,[A,A] \ran -g \lan *db,[A,c] \ran }_{g\, \OO^{(2)}}
\end{equation}
It coincides with the gauge-fixed non-abelian action (\ref{S}). 
%Note that the integrands above and in (\ref{S}) differ by a total derivative $d(g \lan b,[*A,c] \ran)$, which does not affect the integrated action. 

We also note that the hierarchy of observables $\OO^{(0)},\OO^{(1)},\OO^{(2)}$ are in fact homogeneous components of form degree $0,1,2$ of the expression $\frac12 \lan \mc{B},[\mc{A},\mc{A}] \ran$ written in terms of superfields (\ref{superfields}).

Non-abelian action (\ref{S non-ab as S_0+g O^2}) is a deformation of the free conformal field theory defined by the abelian action $S_0$ (\ref{S_0 via cx fields}) by a $2$-observable, which is in fact an \textit{exact  marginal operator}, i.e.: 
\begin{itemize}
\item $\OO^{(2)}$ is a primary field of conformal dimension $(1,1)$, 
%(this property uses unimodularity of $\g$),
\item  the operator product expansion $\OO^{(2)}_z \OO^{(2)}_w$ does not contain the singularity $\frac{1}{|z-w|^2}$ (which would destroy the conformal invariance of the deformed theory on the quantum level).
\end{itemize}
See Section \ref{sss: ab OPEs with O^2} below for the check of these properties; in particular, the first property relies on unimodularity of $\g$. %\marginpar{Insert a reference}

Alongside the deformation  (\ref{S_0 deformation to S_0+g O^2 }) of the action, the other relevant objects of the theory deform:
\begin{equation}\label{Q,T,J deformation}
Q_0 \mapsto Q=Q_0+g Q_1, \quad
T_0 \mapsto T=T_0+g T_1, \quad
J_0 \mapsto J=J_0+g J_1
\end{equation}
We read off the abelian part $(\cdots)_0$ and the deformation $(\cdots)_1$ (the subscript  corresponds to the order in Taylor expansion in $g$ of the object) as constant in $g$ and linear in $g$ terms in formulae (\ref{Q}), (\ref{T explicit}), (\ref{J}). Note that the BRST primitive of the stress-energy tensor is the one object which does not deform: $G_0=G=\lan a,\dd b \ran$.

\begin{remark} %\marginpar{do we want this remark?}
%Denote by $\alpha$ the Noether $1$-form on the space of fields valued in $1$-forms on $\CC$, defined by $\delta L=\sum_i EL_i \delta\phi^i+d\alpha$, where $L$ is the Lagrangian density of the action (\ref{S non-ab as S_0+g O^2}), the summation is over species of fields $\phi^i$, $EL_i$ is the Euler-Lagrange equation arising from variation of field $\phi^i$ and $\delta$ the de Rham operator on the space of fields $\cF$ (as opposed to $d$ -- the de Rham operator on the spacetime $\CC$). Then we have $\alpha=\alpha_0+g\alpha_1$, where 
%$\alpha_0=-\lan B,\delta A \ran-\lan \lambda,*\delta A \ran+\lan *db,\delta c \ran+\lan b,*d\delta c \ran$ is the Noether current of abelian theory and $\alpha_1=-\lan \delta b,[*A,c] \ran$ is the deformation. 
In the context of a general deformation of any $n$-dimensional gauge theory by an $n$-observable, $S_0\mapsto S=S_0+g\int \OO^{(n)}$,
Noether theorem gives the deformation 
of the BRST current in the form $J^\tot_0\mapsto J^\tot=J^\tot_0+g J^\tot_1+O(g^2)$ with 
%$$ J_0 \mapsto J=J_0+g(\underbrace{\OO^{(1)}-\iota_{Q_0}\alpha_1}_{J_1}) $$
\begin{equation}\label{J_1 = O^1+...}
J^\tot_1= \OO^{(n-1)}-\iota_{Q_0}\alpha_1
\end{equation}
Here $\alpha_1$ is the deformation of the Noether $1$-form $\alpha=\alpha_0+g\alpha_1$, viewed as a $1$-form on the space of fields valued in $(n-1)$-forms on the spacetime and defined from $\delta L=\sum_i EL_i \delta\phi^i+d\alpha$. Here $L$ is the Lagrangian density of the action, 
%(\ref{S non-ab as S_0+g O^2}), 
the summation is over species of fields $\phi^i$, $EL_i$ is the Euler-Lagrange equation arising from variation of the field $\phi^i$ and $\delta$ the de Rham operator on the space of fields (as opposed to $d$ -- the de Rham operator on the spacetime). 
In (\ref{J_1 = O^1+...}) the first term is the $(n-1)$-observable linked to the $n$-observable deforming the action by descent (in non-deformed theory): $Q_0 \OO^{(n)}=d \OO^{(n-1)}$. One finds (\ref{J_1 = O^1+...}) from the expression for the BRST current given by Noether theorem, 
$$J^\tot=\rho-\iota_Q\alpha$$ 
where $\rho$ is defined by $QL=d\rho$. Restricting to $O(g^1)$ terms in this formula, one finds $\rho_1=\OO^{(n-1)}+\iota_{Q_1}\alpha_0$ which leads to (\ref{J_1 = O^1+...}).
In our case -- non-abelian deformation (\ref{S non-ab as S_0+g O^2}) of 2D abelian $BF$ theory -- we have 
$$ \alpha=\underbrace{-\lan B,\delta A \ran-\lan \lambda,*\delta A \ran+\lan *db,\delta c \ran+\lan b,*d\delta c \ran}_{\alpha_0} \underbrace{- g \lan \delta b,[*A,c] \ran}_{g \alpha_1} $$
and formula (\ref{J_1 = O^1+...}) yields the $O(g^1)$ part of the current (\ref{J^tot}).
\end{remark}

%\marginpar{mention other deformations, e.g. superpotential and Poisson sigma model}

\subsubsection{Symmetries not surviving in the deformed theory.}
A part of symmetries/conservation laws of the abelian theory gets destroyed by the non-abelian deformation. Most importantly, the left/right components $J,\bar{J}$ of the BRST current are conserved in the abelian theory but not in the deformed theory. In the abelian theory, conservation of $J,\bar{J}$ ultimately leads to the realization of abelian theory as a twisted $\mc{N}=(2,2)$ superconformal field theory \cite{LMY}, with $G,J,\bar{G},\bar{J}$ corresponding via type B-twist to the two pairs of supercurrents. This picture does not carry over to the deformed theory. In particular, 
abelian theory has the conserved $R$-symmetry current $\lan \gamma,a \ran$; in non-abelian theory this expression is not conserved (and does not correspond to a symmetry of the action).
%the  $R$-symmetry current 
%$\lan \gamma,a \ran$ is conserved in abelian theory but is not conserved in the deformed theory. 

In summary, we have the following table of conserved quantities on abelian vs. non-abelian side:

\noindent
\begin{tabular}{l||l|ll}
& abelian & non-abelian  & (notes) \\ \hline
stress-energy tensor & $T_0, \bar{T}_0$ & $T,\bar{T}$ & \\
BRST primitive for stress-energy & $G_0,\bar{G}_0$ & $G,\bar{G}$ & $G=G_0$ does not deform \\ 
BRST current &$J_0,\bar{J}_0$ & only $J^\tot$ \\
$R$-symmetry current & $\lan \gamma,a \ran$, $\lan \bar\gamma,\bar{a} \ran$ & --- &
\end{tabular}

Also, fields $a,\gamma,\dd b, \dd c$ are holomorphic in abelian theory (i.e. satisfy $\db (\cdots)\sim 0$ modulo equations of motion) but this property also does not carry over to the deformed theory (as one sees immediately from the equations of motion (\ref{eom in cx fields})).

\section{Correlators}\label{s: correlators}

\begin{comment}
\marginpar{1st PAR: EDIT/REMOVE}
Let $\{t_a\}$ be a basis of generators in $\g$ and $\{t^a\}$ the dual basis in $\g^*$. Then the fundamental fields of the theory can be decomposed in components as 
$a=a^a t_a$, $\gamma=\gamma_a t^a$ (and similarly for complex conjugates), $c=c^a t_a$, $b=b_a t^a$. We denote $f^c_{ab}=\lan t^c,[t_a,t_b]\ran$ the structure constants of the Lie algebra.
\end{comment}

In quantum theory, we are interested in the correlation functions of local fields $\Phi_1,\ldots,\Phi_n$ placed at points $z_1,\ldots,z_n\in \CC$. We assume the points to be pairwise distinct, $z_i\neq z_j$ for $i\neq j$. Such a correlator is formally defined by the path integral % expression
\begin{equation}\label{correlator PI}
\lan \Phi_1(z_1)\cdots \Phi_n(z_n) \ran = \frac{1}{Z}\int_\cF e^{-\frac{1}{4\pi}S}  \Phi_1(z_1)\cdots \Phi_n(z_n)
\end{equation}
where $Z$ is the normalization (partition function), such that $\lan 1 \ran=1$. Fields $\Phi_j$ are the fundamental fields of the theory $\gamma,\bar\gamma,a,\bar{a},b,c$ or their derivatives of arbitrary order. 
More generally, one can allow $\Phi_j$ to be a product of such objects -- a \textit{composite field}. For example, one can have $\Phi(z)=(\phi\psi)(z)$ with $\phi,\psi$ linear in fundamental fields. Such a product is understood as ``\textit{renormalized},''\footnote{Another possible term is the ``\textit{normally ordered}'' product. We do not use this term here as it is somewhat ambiguous in a non-free theory (our prescription has nothing to do with creation/annihilation operators).
} %\marginpar{change "normally ordered" to "regularized/renormalized product" ??} 
i.e., as a limit 
\begin{equation}\label{normal ordering}
\Phi(z)=\lim_{z'\ra z}\Big(\phi(z')\psi(z)-[\phi(z')\psi(z)]_\mr{sing}\Big)
\end{equation}
under the correlator with other fields; here the last term stands for the singular part, as $z'$ approaches $z$, of the operator product expansion $\phi(z')\psi(z)$, see Section \ref{ss: OPE} below. Furthermore, formula (\ref{normal ordering}) can be applied to (renormalized) composite fields $\phi,\psi$, to construct their renormalized product.
%\marginpar{refer to the discussion of OPEs and normal ordering}
Thus, correlators of composite fields can be obtained from correlators of fundamental fields (or their derivatives), by merging some of the points $z_i$ (and subtracting the singular terms). We defer the detailed discussion of composite fields and the procedure of building them as renormalized products until Section \ref{sss: order of collapse ambiguity}.
%\marginpar{Link to subsec 2.6.1}

\begin{comment}
We denote the space of composite fields at a point $z$ \marginpar{Remove? Do we need this notation?}
%-- the free supercommutative algebra generated by fundamental fields and their derivatives -- 
by $\mathbb{F}_z$.\footnote{Here, unlike the convention of \cite{LMY}, we do not quotient by equations of motion.}
%\marginpar{Correct ``supercommutative'' in view of ordering ambiguity}
\end{comment}
 
In abelian $BF$ theory, one has correlators $\lan \Phi(z_1)\cdots \Phi(z_n) \ran_0$  defined as in (\ref{correlator PI}), but with $S$ replaced by the free action $S_0$. These free theory correlators are given by Wick's lemma with propagators
%\begin{comment}
\begin{equation}\label{propagators}
\begin{gathered}
\lan c(w)\otimes b(z) \ran_0 =  \big(2\log|w-z| +C \big)\,  \idg ,\\
\lan a(w) \otimes \gamma(z) \ran_0 = \frac{\idg}{w-z}, \qquad 
\lan \bar{a}(w) \otimes \bar{\gamma}(z) \ran_0 = \frac{\idg}{\bar{w}-\bar{z}}
\end{gathered}
\end{equation}
%\end{comment}
\begin{comment}
\begin{equation}
\begin{gathered}
\lan c(w) b(z) \ran_0 =   2\log|w-z| +C,\\
\lan a(w) \gamma(z) \ran_0 = \frac{1}{w-z}, \qquad 
\lan \bar{a}(w) \bar{\gamma}(z) \ran_0 = \frac{1}{\bar{w}-\bar{z}}
\end{gathered}
\end{equation}
\end{comment}
%\marginpar{Do I need to comment more on $\otimes$ in the correlator?}
These expressions are understood as valued in $\g \otimes\g^*$ 
(and tensor product sign in the correlator of $\g,\g^*$-valued fields refers to this);
 we denote by $\idg$  the identity in $\mr{End}(\g)\simeq \g\otimes\g^*$. Here $C$ is an undetermined constant.\footnote{
\label{footnote: C in bc propagator}
If the theory is regularized by an infrared cut-off, by imposing a Dirichlet boundary condition on $b,c$ on a circle of large radius $R$, then $C=-2\log R$.} Propagators for other pairs of fields from the list $\{a,\bar{a},\gamma,\bar\gamma,b,c\}$ are zero. Propagators (\ref{propagators}) are obtained as Green's functions for the operators $\dd \db, \db, \dd$ in quadratic action (\ref{S_0 via cx fields}). Parameterizing the space of fields by the superfields (\ref{superfields}) and the Lagrange multiplier $\lambda$, one has the following propagators:
\begin{equation}\label{propagators for superfields and lambda}
\lan \mc{A}(w)\otimes \mc{B}(z) \ran_0 =  2\,d \arg(w-z)\cdot\idg,\qquad \lan \mc{A}(w)\otimes\lambda(z) \ran_0 = 2\, d_w \log|w-z|\cdot\idg
\end{equation}
Here in the first formula, the propagator is understood as a $1$-form on the configuration space $\mr{Conf}_2(\CC)\subset \CC\times \CC$ of two distinct points $(w,z)$ on $\CC$; $d=d_w+d_z$ is the total de Rham operator on the configuration space, where $d_w,d_z$ are the de Rham operators on the first and second copy of $\CC$. 

Using $S=S_0+g\int \OO^{(2)}$ in (\ref{correlator PI}),
the correlator of the deformed theory is expressed in terms of correlators in the abelian theory with insertions of $N\geq 0$ copies of the deforming observable $\OO^{(2)}$ at points $u_1,\ldots,u_N$ integrated over $\CC$:
\begin{multline}\label{correlator - exp(O2) expansion}
\lan  \Phi_1(z_1)\cdots \Phi_n(z_n) \ran = \lan \Phi_1(z_1)\cdots \Phi_n(z_n)\;\; e^{-\frac{g}{4\pi}\int_{\CC\,\ni u} \OO^{(2)}(u)}\ran_0 \\
=\sum_{N\geq 0}\frac{1}{N!}\left(-\frac{g}{4\pi}\right)^N \int_{\CC^N\;\ni (u_1,\ldots,u_N)} 
\lan \Phi_1(z_1)\cdots \Phi_n(z_n)\;\; \OO^{(2)}(u_1)\cdots \OO^{(2)}(u_N) \ran_0
\end{multline}
Free theory correlator on the r.h.s. is evaluated using Wick's lemma, as a sum of Wick's contractions of fields $\Phi_1(z_1),\ldots,\Phi_n(z_n)$ and $N$ copies of $\OO^{(2)}$ and yields a sum of Feynman graphs with $N$ internal vertices  decorated by $\OO^{(2)}$  (their positions $u_i$ are integrated over $\CC$) and $n$ vertices decorated by $\Phi_1,\ldots,\Phi_n$ at fixed points $z_1,\ldots,z_n$. In the case when $\Phi_i$ are fundamental fields or their derivatives, the corresponding fixed vertices are uni-valent.

For Feynman graphs, we adopt the convention where the graphs are oriented, with half-edges decorated by fields $a,\bar{a},c$ (or their derivatives) oriented towards the incident vertex and $\gamma,\bar\gamma,b$ (or derivatives) oriented away from the vertex.
In particular, the interaction vertex (cf. the cubic part of (\ref{S nonab in cx fields})) is:
\begin{equation} \label{vertex}
\vcenter{\hbox{
\includegraphics[scale=0.4]{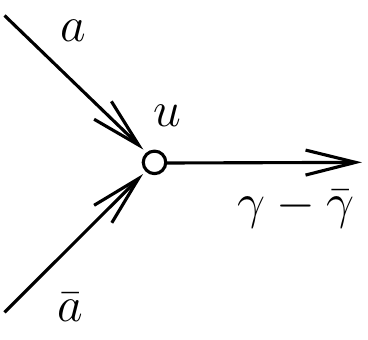}
}}\qquad + \qquad 
\vcenter{\hbox{
\includegraphics[scale=0.4]{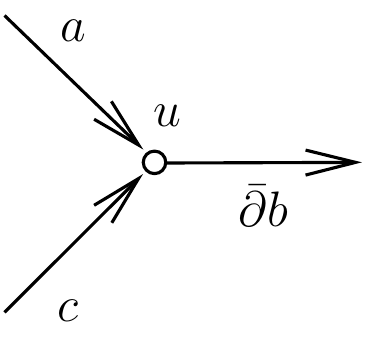}
}} \qquad + \qquad 
\vcenter{\hbox{\includegraphics[scale=0.4]{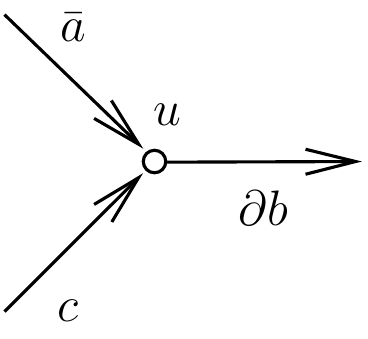}}}
\end{equation}
-- Here we list the possible decorations of half-edges by fields. The vertex is decorated by the expression  $g\,\mathbbm{f}\int_{\CC\,\ni u}\frac{d^2 u}{2\pi}$, where $\ff\in \g\otimes \g^*\otimes \g^*$ is the structure tensor of the Lie algebra $\g$.  In terms of the superfields $\mc{A},\mc{B}$, the vertex is simply 
\begin{equation} \label{vertex super}
\vcenter{\hbox{\includegraphics[scale=0.4]{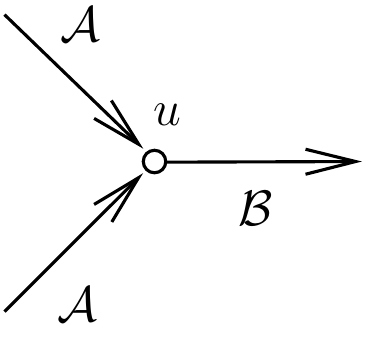}}} 
\end{equation}
and is  decorated by $\left(-\frac{g}{4\pi}\right)\,\frac12\,\ff\int_{\CC\,\ni u}$.

In our convention for Feynman graphs for the correlator (\ref{correlator - exp(O2) expansion}), we have two types of vertices: 
\begin{itemize}
\item black (fixed) vertices corresponding to fields $\Phi_1(z_1),\ldots,\Phi_n(z_n)$ we calculate the correlator of,
\item white (integrated) vertices corresponding to $\OO^{(2)}(u)$ where the point $u$ is integrated over.
\end{itemize}

\subsection{General correlators of fundamental fields: admissible Feynman diagrams}\label{sec: general correlators and admissible Feynman graphs}
Let us introduce a grading on fields -- the ``$\mc{AB}$-charge'' -- by assigning charge $+1$ to fields $\gamma,\bar\gamma,b$  and charge $-1$ to fields $a,\bar{a},c$. The convention is that the charge in unchanged when taking derivatives of a field  and  is additive under multiplication.\footnote{The name $\mc{AB}$-charge is due to the fact that all components of the superfield $\mc{A}$ have charge $-1$ and all components of $\mc{B}$ (plus the Lagrange multiplier $\lambda$) have charge $+1$.} Note that:
\begin{itemize}
\item A free theory correlator $\lan \cdots  \ran_0$ of a collection of fields can only be nonzero if the total charge of the fields vanishes. This is due to the form of propagators (\ref{propagators}) which only pair $+1$-fields to $-1$-fields (or, in other words, due to the fact that the abelian action (\ref{S_0 via cx fields}) has total charge zero).
\item The charge of the deforming observable $\OO^{(2)}$ is $-1$.
\end{itemize}
In particular, Feynman graphs contributing to the non-abelian correlator $\lan \Phi_1\cdots \Phi_n \ran$ must have exactly $N$ internal vertices, with $N$ the total charge of fields $\Phi_i$.  Thus, the correlator is proportional to $g^N$ with no other powers of $g$ present.

As follows from the form of the interaction vertex (\ref{vertex}), (\ref{vertex super}), with two incoming half-edges and one outgoing half-edge, Feynman graphs contributing to a correlation function $\lan \Phi_1 \cdots \Phi_n \ran$ of fundamental fields (or their derivatives) can have connected components of the following two types:
\begin{enumerate}[(i)]
\item \label{graph item: tree} Binary rooted trees with leaves (uni-valent vertices with outward orientation of the adjacent half-edge) decorated with fields $\gamma,\bar\gamma,b$  or their derivatives from the list $\{\Phi_i\}$ and the root (uni-valent vertex with inward orientation of the half-edge) decorated by $a,\bar{a},c$ or derivatives. %from  $\{\Phi_i\}$.
For example:
$$ \vcenter{\hbox{\includegraphics[scale=0.5]{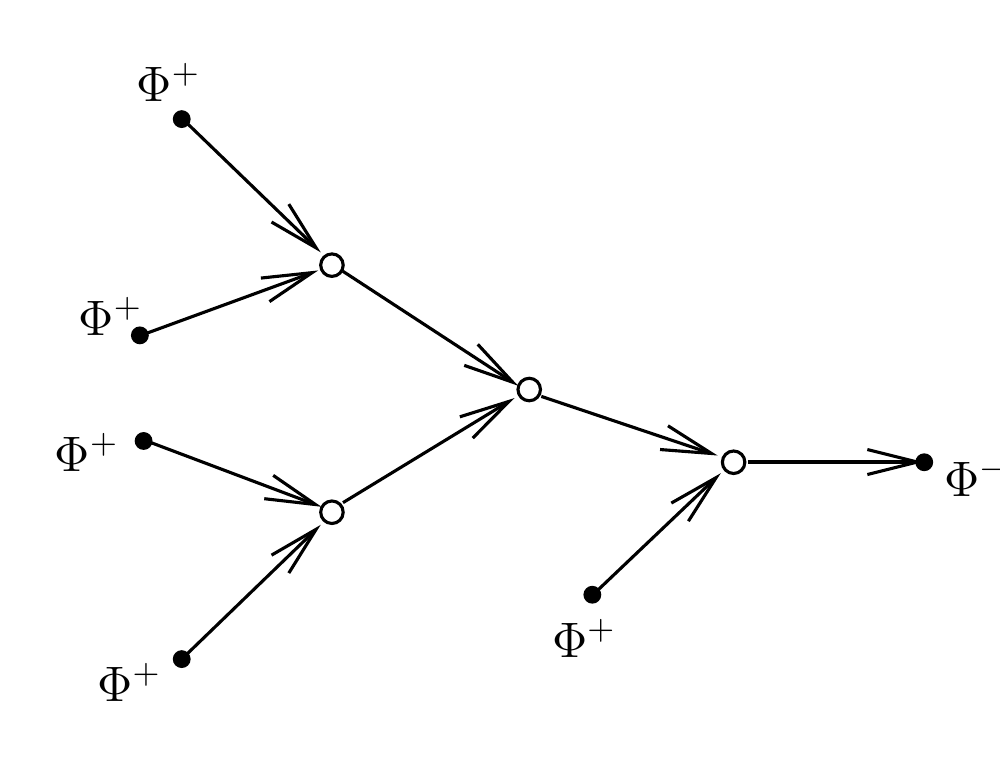}}}  $$
Here the superscript $\pm$ refers to fields of $\mc{AB}$-charge $\pm 1$.
\item \label{graph item: 1-loop} One-loop graphs, having the form of an oriented cycle with several binary trees rooted on the cycle (with leaves decorated by $\gamma,\bar\gamma,b$ or derivatives). 
For example:
$$ \vcenter{\hbox{\includegraphics[scale=0.5]{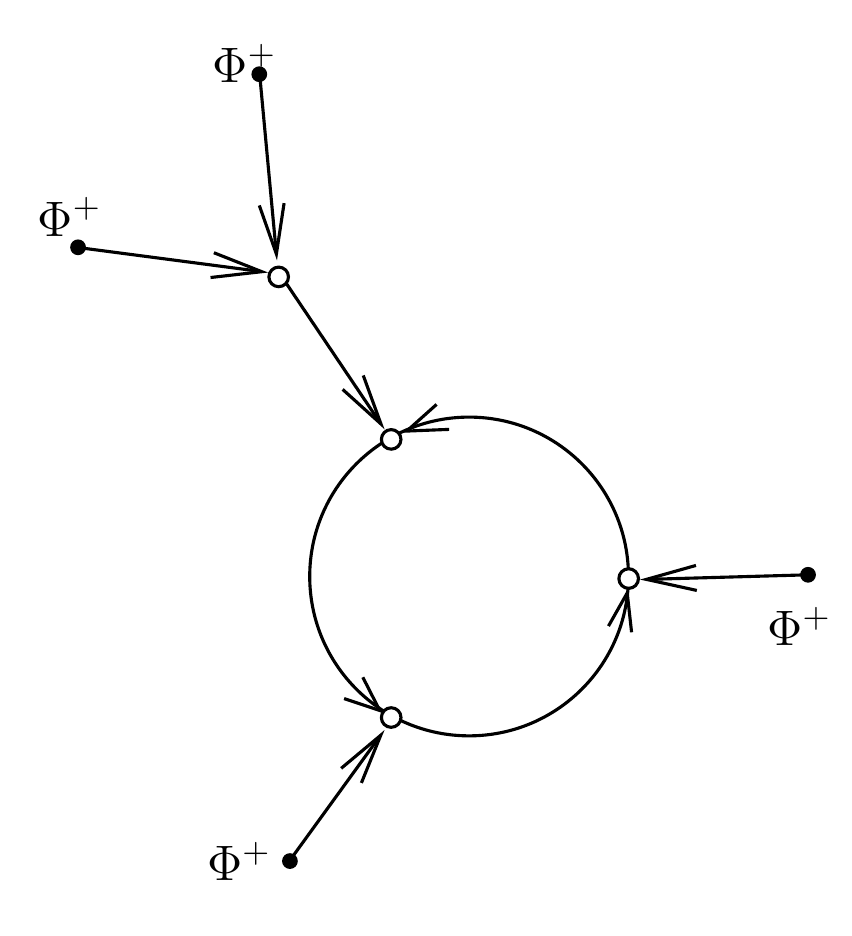}}}  $$
\end{enumerate}
%Here are typical graphs of these two types:
%$$ \vcenter{\hbox{\input{tree.pdftex_t}}} \qquad , \qquad 
%\vcenter{\hbox{\input{1-loop.pdftex_t}}} 
%$$
%Here the superscript $\pm$ refers to fields of $\mc{AB}$-charge $\pm 1$.

We assume that Lie algebra $\g$ is such that one has the identity 
\begin{equation}\label{trace identity}
\mr{tr}_\g \left(\mr{ad}_{X_1}\cdots \mr{ad_{X_k}}\right) = (-1)^k \mr{tr}_\g \left(\mr{ad}_{X_k}\cdots \mr{ad_{X_1}}\right)
\end{equation}
for $X_1,\ldots,X_k\in \g$ arbitrary %Lie algebra 
elements, for any $k\geq 1$.
This identity 
holds for the following classes of Lie algebras: 
\begin{itemize}
\item any semisimple $\g$,\footnote{Indeed, using the Killing form (which is nondegenerate due to semisimplicity) to identify $\g^*\simeq \g$, in the l.h.s. of (\ref{trace identity}) we have a trace of a product of $k$ \textit{anti-symmetric} matrices. Applying transposition under the trace, we get the r.h.s.
}
\item any nilpotent $\g$ (in a trivial way: the traces are zero),
\item a direct sum of a semisimple and a nilpotent Lie algebras, e.g., any reductive $\g$.
\end{itemize}
We call an algebra satisfying (\ref{trace identity}) \textit{strongly unimodular}, since $k=1$ case is equivalent to the usual unimodularity condition $\mr{tr}_\g [X,-]=0$.\footnote{
This condition appeared in \cite{AT} in the context of Kashiwara-Vergne problem.
}

\begin{lemma}[Boson-fermion cancellation in the loop]\label{lemma: boson-fermion cancellation}
Under assumption (\ref{trace identity}), %contributions of 
graphs of type (\ref{graph item: 1-loop}) %to correlators 
vanish, when summed over admissible decorations in the loop.
\end{lemma}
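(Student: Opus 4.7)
The plan is to compute the sum of loop contributions over all admissible decorations for a generic one-loop graph of type (\ref{graph item: 1-loop}) with $k$ loop vertices $v_1,\dots,v_k$ arranged in an oriented cycle, with trees attached contributing elements $X_1,\dots,X_k\in\g$ at their roots, and to show that this sum vanishes. The overall structure is that each decoration factorizes the loop integrand into a kinematic piece (a product of propagators, integrated against the positions $u_i$) times a Lie-algebraic piece that, after contracting $\ff$ with the $X_i$ at each vertex and closing up around the cycle, takes the form of a trace $\mr{tr}_\g(\ad_{X_1}\cdots\ad_{X_k})$ in the adjoint representation.

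First I would isolate the Lie-algebraic piece at each loop vertex. At a vertex with tree input $X_i$, contracting one slot of $\ff$ with $X_i$ produces $\pm\ad_{X_i}$ acting on the $\g$-index carried by the loop propagator, the sign depending on the antisymmetry of $\ff$ and on which slot of $[-,-]$ the tree is attached to. Going cyclically around the loop closes these factors into the advertised trace.

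Next I would enumerate the admissible decorations of the loop. Each loop edge can realize any of the three propagators $\lan a\gamma\ran$, $\lan\bar a\bar\gamma\ran$, or $\lan cb\ran$ (with derivatives inherited from the vertices), subject to compatibility with the vertex types at the two ends. The super-field formulation (\ref{superfields})--(\ref{propagators for superfields and lambda}) is useful here: it unifies the three propagators into the single super-propagator $2\,d\arg(w-z)$, so that the kinematic factors for bosonic and fermionic decorations agree up to controllable signs, the distinguishing feature being the standard $(-1)$ for a closed ghost loop in the fermionic case. A direct pairing shows that for loops of even length $k$, the bosonic and fermionic contributions cancel by the boson-fermion mechanism alone. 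For odd $k$ the pairwise cancellation leaves a residue proportional to $\mr{tr}_\g(\ad_{X_1}\cdots\ad_{X_k})$; summing over the two orientations of the loop converts this residue into the combination $\mr{tr}_\g(\ad_{X_1}\cdots\ad_{X_k})+(-1)^k\mr{tr}_\g(\ad_{X_k}\cdots\ad_{X_1})$, which vanishes identically by the strong unimodularity hypothesis (\ref{trace identity}).

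The main obstacle is the careful bookkeeping of signs from several sources at once: the antisymmetry of $\ff$ in its lower indices, the relative signs of the four interaction vertices in (\ref{S nonab in cx fields}), Grassmann reorderings for the $(b,c)$-ghost pair, and the sign coming from reversal of loop orientation in the super-kinematic factor. A cleaner route I would consider is to perform the computation entirely in the super-field formulation (\ref{vertex super}), where the loop integrand is manifestly a single super-kinematic factor times $\mr{tr}_\g(\ad_{X_1}\cdots\ad_{X_k})$, so that the only remaining step is to argue that summing over loop orientations produces precisely the combination on the left-hand side of (\ref{trace identity}).
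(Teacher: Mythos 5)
Your overall strategy -- extract the adjoint-representation trace around the loop and cancel the ghost loop against the gauge loop using strong unimodularity -- is the paper's, but the cancellation mechanism itself is misstated, and the even/odd dichotomy you build on it fails. The point you gloss over is \emph{where the propagator one-form sits}: in the superpropagator (\ref{propagators for superfields and lambda}) the $\lan A\,B\ran$ component is a form in the position of $A$ (the head of the oriented edge), while the $\lan c\,(-*db)\ran$ component is a form in the position of $b$ (the tail). Hence a ghost loop carries the same kinematic product of $d\arg$'s as the gauge loop with the \emph{reversed} orientation, not with the same one. The pointwise (integrand-level) cancellation is therefore between the boson loop with orientation $O$ and the ghost loop with orientation $\bar{O}$ -- exactly the pairing in (\ref{loop cancellation}), (\ref{loop_boson}), (\ref{loop_ghost}) -- and these two carry the trace in \emph{opposite} cyclic order, so the identity (\ref{trace identity}), whose factor $(-1)^k$ offsets the Koszul signs from reordering the one-forms, is needed for \emph{every} $k\geq 3$ (for $k=1,2$ it degenerates to unimodularity, resp.\ cyclicity of the trace). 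Your proposed same-orientation pairing does not cancel at the level of integrands once $k\geq 3$ (the $d\arg$'s are attached to different vertices), and your claim that even-length loops cancel ``by the boson-fermion mechanism alone'' would make the lemma independent of (\ref{trace identity}) for even $k$, which is false: $\mr{tr}_\g(\ad_{X_1}\cdots\ad_{X_k})$ and $\mr{tr}_\g(\ad_{X_k}\cdots\ad_{X_1})$ are unrelated for a general Lie algebra whenever $k\geq 3$, even or odd.

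There is also a sign error in your concluding step: by (\ref{trace identity}) the combination $\mr{tr}_\g(\ad_{X_1}\cdots\ad_{X_k})+(-1)^k\,\mr{tr}_\g(\ad_{X_k}\cdots\ad_{X_1})$ equals $2\,\mr{tr}_\g(\ad_{X_1}\cdots\ad_{X_k})$ and does not vanish; what the hypothesis kills is the difference $\mr{tr}_\g(\ad_{X_1}\cdots\ad_{X_k})-(-1)^k\,\mr{tr}_\g(\ad_{X_k}\cdots\ad_{X_1})$, and it is this difference (gauge loop of one orientation versus ghost loop of the reversed orientation, after the fermionic minus sign and the form reordering) that actually arises. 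Summing over the two loop orientations is indeed part of summing over admissible decorations, but it enters uniformly for all $k$ -- each orientation of the gauge loop is cancelled by the oppositely oriented ghost loop -- not as a rescue device for odd $k$ only. (A minor further point: in complex fields the bosonic loop edges may mix $\lan a\,\gamma\ran$ and $\lan\bar{a}\,\bar\gamma\ran$ propagators; passing to real fields $A,B$, as the paper does, collapses all of these into a single bosonic decoration, which is what makes the two-term pairing clean.)
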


%In fact, graphs of type (\ref{graph item: 1-loop}) vanish 
%-- under a mild assumption on the Lie algebra $\g$, see (\ref{trace identity}) below -- 
%by \textbf{boson-fermion cancellation in the loop} mechanism. -- 

\begin{proof}
Given a one-loop graph $\Gamma$, there are two possible decorations of the half-edges in the loop -- by alternating fields $A$ and $B$ vs. by alternating $c$ and $-*db$ (for this argument, it is convenient to switch to real fields) -- and they give identical contributions of opposite sign:
\begin{equation}\label{loop cancellation}
\vcenter{\hbox{\includegraphics[scale=0.4]{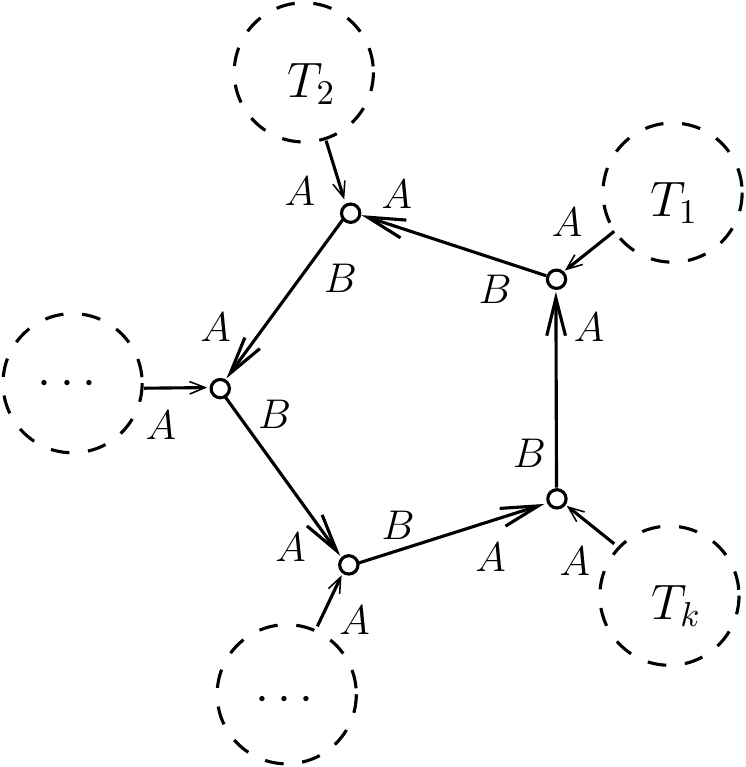}}}\qquad + \qquad
 \vcenter{\hbox{\includegraphics[scale=0.4]{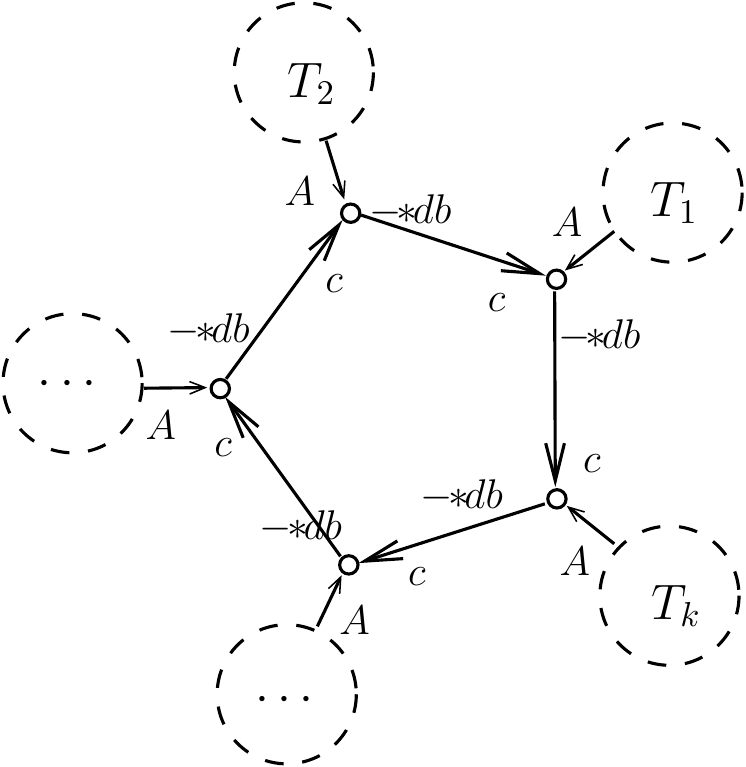}}} \qquad =0
\end{equation}
Here $T_1,\ldots,T_k$ are arbitrary trees rooted on the cycle; note that the 
%order of the trees as they occur along the cycle 
orientation of the cycle
is switched between the two summands. 
% 
%
\begin{comment}
%An identity in the Lie algebra crucial to 
This cancellation relies on the following identity 
\begin{equation}\label{trace identity}
\mr{tr}_\g \left(\mr{ad}_{X_1}\cdots \mr{ad_{X_k}}\right) = (-1)^k \mr{tr}_\g \left(\mr{ad}_{X_k}\cdots \mr{ad_{X_1}}\right)
\end{equation}
for $X_1,\ldots,X_k\in \g$ arbitrary Lie algebra elements. We assume that $\g$ is such that (\ref{trace identity}) is true. This identity 
holds for the following classes of Lie algebras: 
\begin{itemize}
\item any semisimple $\g$,\footnote{Indeed, using the Killing form (which is nondegenerate due to semisimplicity) to identify $\g^*\simeq \g$, in the l.h.s. of (\ref{trace identity}) we have a trace of a product of $k$ \textit{anti-symmetric} matrices. Applying transposition under the trace, we get the r.h.s.
%The transpose of this product is $(-1)^k$ times the product in the opposite order  
%using an orthonormal basis $\{t_a\}$ in $\g$ with respect to the (nondegenerate, as implied by semisimplicity of $\g$) Killing form, the  l.h.s. of (\ref{trace identity}) is $X_1^{b_1}\cdots X_k^{b_k} f_{a_k b_1 a_1} f_{a_1 b_2 a_2}\cdots f_{a_{k-1} b_k a_k} = (-1)^k X_k^{b_k}\cdots X_1^{b_1} f_{a_k b_k a_{k-1}}\cdots f_{a_2 b_2 a_1} f_{a_1 b_1 a_k}$ -- the r.h.s. of (\ref{trace identity}). Here we use total anti-symmetry of the structure tensor $f_{abc}=(t_a,[t_b,t_c])$, with $(,)$ the Killing form.
}
\item any nilpotent $\g$ (in a trivial way: the traces are zero),
\item a direct sum of a semisimple and a nilpotent Lie algebras, e.g., any reductive $\g$.
\end{itemize}
\end{comment}
%
%

To see the cancellation (\ref{loop cancellation}) explicitly, we observe that the first graph contains the expression
\begin{multline}\label{loop_boson}
\wick{
\langle \c3{B},[A,\c1{A}] \rangle_{u_k} \c1{ } \cdots \langle {B},[A,\c2{A}] \rangle_{u_2} \langle \c2{B},[A,\c3{A}] \rangle_{u_1}   
}
\\ = 
-   2^k \mr{tr}_\g \Big(d_1 \varphi_{1k} \  \mr{ad}_{A(u_k)} \  d_{k}\varphi_{k\,k-1}\  \mr{ad}_{A(u_{k-1})} \cdots d_{2} \varphi_{21}\ \mr{ad}_{A(u_1)} \Big)
\end{multline}
where remaining fields $A(u_i)$ are Wick-contracted with trees $T_1,\ldots, T_k$. Here $\varphi_{ij}=\arg(u_i-u_j)$ and $d_i$ is the de Rham differential in $u_i$. Likewise, the second graph in (\ref{loop cancellation}) contains the expression
\begin{multline}\label{loop_ghost}
\wick{
\langle -*d\c3{b},[A,\c1{c}] \rangle_{u_1}\langle -*d\c1{b},[A,c] \rangle_{u_2}  \cdots \c2{ } \langle -*d\c2{b},[A,\c3{c}] \rangle_{u_k}
}
\\
=-   2^k\mr{tr}_\g\Big(d_1\varphi_{1k}\ \mr{ad}_{A(u_1)}\ d_2\varphi_{21}  \ \mr{ad}_{A(u_2)}\cdots \mr{ad}_{A(u_k)}\  d_k\varphi_{k\,k-1} \Big)
\end{multline}
Using Lie algebra identity (\ref{trace identity}), one can see that expressions (\ref{loop_boson}) and (\ref{loop_ghost}) are the same, up to a minus sign.
\end{proof}

%\textbf{Properties of correlators of fundamental fields.}

\begin{proposition}[\textit{Properties of correlators of fundamental fields}]\label{Prop: corr properties}
\mbox{}\\
%Therefore, we know the following about 
A correlator $\lan \Phi_1\cdots \Phi_n \ran$ of fundamental fields or their derivatives satisfies the following properties:
\begin{enumerate}[(1)]
\item \label{corr properties (i)} It is given by finitely many diagrams $\Gamma$ which are unions of binary rooted trees $\Gamma=\sqcup_{j=1}^p T_j$. 
\item\label{corr properties (ii)} The number $N$ of interaction vertices (and thus the order of $\lan\cdots \ran$ in $g$)  equals the total $\mc{AB}$-charge of fields $\Phi_i$. 
\item\label{corr properties (iii)} The number $p$ of trees equals the number of $-1$-charged fields among $\{\Phi_i\}$. 
\item\label{corr properties (iv)} The contribution of each diagram is given by an integral over $\CC^N$ which is convergent if the field $b$ is always hit by derivatives in $\{\Phi_i\}$. 
%-- see the argument sketched below. 
\item\label{corr properties (v)} If the bare field $b$ occurs among $\{\Phi_i\}$, an infrared regularization may be necessary (see Remark \ref{rem: <gamma b c>} below). %\marginpar{Argument for convergence?}
\end{enumerate}
\end{proposition}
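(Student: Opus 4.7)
The plan is to combine the structural classification of Feynman graphs given in items (\ref{graph item: tree})--(\ref{graph item: 1-loop}) before the statement with the boson-fermion cancellation of Lemma \ref{lemma: boson-fermion cancellation}, and then carry out a power-counting analysis of the resulting tree integrals.

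For part (\ref{corr properties (i)}), I would start by observing that when all external fields $\Phi_i$ are fundamental (or derivatives thereof), every external vertex is univalent and every internal vertex is trivalent with two incoming and one outgoing half-edge, as in (\ref{vertex}). A standard connectivity/orientation argument on such directed graphs shows that any connected component is either a rooted binary tree (type (\ref{graph item: tree})) or a graph containing a unique oriented cycle with rooted binary trees attached (type (\ref{graph item: 1-loop})); there are no multi-loop graphs because every internal vertex has exactly one outgoing edge. Lemma \ref{lemma: boson-fermion cancellation} then kills the sum over decorations of each one-loop component, leaving only disjoint unions of binary rooted trees. Finiteness of the sum will follow once (\ref{corr properties (ii)}) fixes the number of internal vertices, together with the combinatorial fact that there are only finitely many binary rooted trees on a fixed number of leaves and roots.

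For parts (\ref{corr properties (ii)}) and (\ref{corr properties (iii)}), the argument is a charge bookkeeping. The propagators (\ref{propagators}) pair a $+1$-field with a $-1$-field, so $\lan\cdots\ran_0$ vanishes unless the total $\mc{AB}$-charge of all inserted fields is zero. Each vertex $\OO^{(2)}$ carries charge $-1$ (one outgoing $\mc{B}$-type half-edge and two incoming $\mc{A}$-type half-edges), so requiring the total external charge $N$ of $\{\Phi_i\}$ plus $-N_{\mr{int}}$ to equal zero forces $N_{\mr{int}}=N$, giving (\ref{corr properties (ii)}). For (\ref{corr properties (iii)}), in each binary rooted tree coming from (\ref{corr properties (i)}) the root is the unique univalent vertex with an outgoing half-edge orientation into the tree, hence decorated by a $-1$-charged field from $\{\Phi_i\}$; conversely every external $-1$-charged field sits at the root of the tree containing it. Thus $p$ equals the number of $-1$-charged $\Phi_i$'s.

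For parts (\ref{corr properties (iv)}) and (\ref{corr properties (v)}) I would analyze convergence of the integrals $\int_{\CC^N}\prod \mr{(propagators)}$ coming from a fixed tree diagram. Ultraviolet convergence as integration points $u_i$ collide (with each other or with an external $z_j$) follows from the mildness of the singularities in (\ref{propagators}): the only singularities are $\log|w-z|$, $1/(w-z)$ and $1/(\bar w-\bar z)$, each locally $L^1_{\mr{loc}}$ in two real dimensions, and binary-tree topology prevents accumulation of singularities at a single point. Infrared convergence as some $u_i\to\infty$ is the delicate point: the $\lan a\otimes\gamma\ran_0$ and $\lan\bar a\otimes\bar\gamma\ran_0$ propagators decay like $|u|^{-1}$, but a bare $bc$-propagator grows like $\log|u|$. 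Therefore the key observation to verify is that if every occurrence of $b$ in $\{\Phi_i\}$ carries at least one derivative, then every $b$-line in the diagram is differentiated at least once, turning $\log|w-z|$ into $1/(w-z)$ or $1/(\bar w-\bar z)$; a tree power-counting argument on the root edges then shows absolute convergence of each $u_i$-integral, giving (\ref{corr properties (iv)}). When bare $b$'s are present no such decay is available and the constant $C$ in (\ref{propagators}) becomes relevant, so one must impose an infrared cutoff as in footnote \ref{footnote: C in bc propagator}, yielding (\ref{corr properties (v)}).

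The main obstacle I foresee is the IR analysis in (\ref{corr properties (iv)}): one needs to show that on a binary tree the $1/|u|$ decay from the $a\gamma$ and $\partial b$ propagators is sufficient at every subtree, not just at the root. I would handle this inductively from the leaves, verifying that the integral over a subtree of total outgoing charge $+1$ decays like $|u_{\mr{root}}|^{-1}$ at large distance, which is what is needed for the parent vertex to integrate.
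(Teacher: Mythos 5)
Your handling of items (1)--(3) coincides with the paper's: those parts are indeed just the graph classification (trees vs.\ one oriented cycle, which follows from every vertex having out-degree at most one), Lemma \ref{lemma: boson-fermion cancellation}, and the $\mc{AB}$-charge bookkeeping, so nothing is missing there. Your remark that internal ghost lines always join $c$ to $\dd b$ or $\db b$, so that undifferentiated logarithms occur only when a bare $b$ is inserted externally, is also correct and is exactly the source of item (\ref{corr properties (v)}).

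The gap is in the convergence argument for item (\ref{corr properties (iv)}). On the ultraviolet side, ``each propagator is locally $L^1$ and binary-tree topology prevents accumulation of singularities'' is not a proof: a product of locally integrable factors need not be locally integrable, and the non-accumulation statement is precisely what must be quantified. The paper does this by a region-by-region power count: if $r\geq 2$ interaction points collide at mutual distance of order $\epsilon$, the tree structure gives at most $r-1$ propagators joining points of the colliding set, so the integrand is $O(\epsilon^{-(r-1)})$ against a $2(r-1)$-dimensional collision volume; if the collision happens at an insertion point $z_j$, there are at most $r$ such propagators against a $2r$-dimensional volume. On the infrared side, your leaf-to-root induction is left unfinished (you flag it yourself), and its induction hypothesis is false as stated: integrating out even the smallest subtree produces $\frac{2}{u-z}\log\left|\cdots\right|$, i.e.\ decay $\log|u|/|u|$ rather than $1/|u|$, and deeper subtrees produce higher powers of logarithms, so the hypothesis must be weakened to $|u|^{-1}$ times a polynomial in $\log|u|$ (and the iterated integration then still needs justification). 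The paper instead bounds the full integrand directly: if $r$ interaction points are at scale $R$ from the insertions, the rest of the $u$'s, and each other, trivalence gives $k\leq r-1$ internal propagators within the cluster and $3r-2k$ propagators leaving it, hence at least $2r+1$ factors of order $1/R$, which beats the $R^{2r}$ volume. Finally, the mixed regime --- a cluster of $u_i$'s colliding with each other far away from all insertions --- is covered by neither of your two arguments and requires combining the two counts, as the paper notes; without it (and without the corrected UV/IR counts) the convergence claim in (\ref{corr properties (iv)}) is not established.
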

\begin{proof}
Properties (\ref{corr properties (i)}--\ref{corr properties (iii)}) summarize the discussion above. We proceed to show the convergence properties (\ref{corr properties (iv)}--\ref{corr properties (v)}). %\marginpar{promote convergence to a Lemma/Proposition?}

%\subsubsection{Convergence} \label{sss: convergence}
First, consider the correlator $\lan \Phi_1(z_1)\cdots \Phi_n(z_n) \ran$ where all the fields $\Phi_i$ are from the list $\{a,\bar{a},\gamma,\bar\gamma,c, \dd b, \db b\}$ -- with no additional derivatives and no bare $b$ ghost. Corresponding Feynman diagrams are given by integrals of the type
\begin{equation} \label{Feynman diag convergence}
 \int_{\CC^N} d^2 u_1\cdots d^2 u_N \prod_{\{x,y\}\subset \{z_1,\ldots,z_n,u_1,\ldots,u_N\}} P(x,y) 
\end{equation}
where the product is over pairs of points corresponding to the edges of the Feynman graph $\Gamma$ and the propagator $P(x,y)$ is either $\frac{1}{x-y}$ or $\frac{1}{\bar{x}-\bar{y}}$, depending on which pair of fields are connected by the edge. We need to analyze the potential obstructions to convergence arising from a collision of $2$ or more points (ultraviolet problems) or from one or more points $u_i$ going to infinity (infrared problems). We have the following possibilities.
\begin{enumerate}[(a)]
\item \label{convergence item (a)} Collision %positions $u_{i_1},\ldots,u_{i_r}$ 
of $r\geq 2$ interaction vertices. % in an $\epsilon$-neigborhood.
More precisely, consider the situation when points $u_{i_1},\ldots,u_{i_r}$ are at the distance between $C_1\epsilon$ and $C_2 \epsilon$ from each other, with $C_1<C_2$ some constants and $\epsilon$ arbitrarily small. The integrand of (\ref{Feynman diag convergence}) behaves at $\epsilon\ra 0$ as $O(\epsilon^{-(r-1)})$, since there are %among the colliding $r$ points 
at most $(r-1)$ propagators connecting a pair of points from the set of $r$ colliding points, %which in turn follows from the fact that 
since the Feynman graph $\Gamma$ is a tree. Thus, fixing $u_{i_1}$ we have an integrable singularity for integration over $(u_{i_2},\ldots, u_{i_r})$ (which is a $2(r-1)$-fold integral). Therefore, there is no ultraviolet divergence in this case.
\item \label{convergence item (b)} Collision of $r\geq 1$ interaction vertices at $z_j$ -- the place of insertion of $\Phi_j$. I.e. we consider the situation where points $u_{i_1},\ldots,u_{i_r},z_j$ are at distance between $C_1\epsilon$ and $C_2\epsilon$ from each other. For the same reason as in (\ref{convergence item (a)}), there are at most $r$ propagators connecting pairs of points from the colliding set. Therefore, the integrand in (\ref{Feynman diag convergence}) behaves as $O(\epsilon^{-r})$ and we have an integrable singularity for integration over $u_{i_1},\ldots,u_{i_r}$. Thus, again we have no ultraviolet divergence.
\item \label{convergence item (c)} Situation where $r\geq 1$ points $u_{i_1},\ldots,u_{i_r}$ go to infinity. -- If these points are at a distance $> C_1 R$ from $z$'s, the rest of $u$'s, and from each other, the integrand of (\ref{Feynman diag convergence}) behaves as $O(\frac{1}{R^{2r+1}})$ 
at $R\ra \infty$, since there are $k\leq (r-1)$ propagators connecting pairs points in the set $\{u_{i_1},\ldots,u_{i_r}\}$ and $3r-2k$ propagators connecting points in this set to other points in the ``finite'' region (recall that the interaction vertices are trivalent), and thus overall $(3r-2k)+k\geq 2r+1$ propagators involving points $u_{i_1},\ldots,u_{i_r}$.
Thus, the $2r$-fold integral over $u_{i_1},\ldots,u_{i_r}$ is convergent and there is no infrared divergence.
\item One could have a potential mixed infrared/ultraviolet problem when several $u_i$'s collide in an $\epsilon$-neighborhood a large distance $R$ away from $z$'s and rest of $u$'s. This situation is treated by a combination of the arguments of (\ref{convergence item (a)}) and (\ref{convergence item (c)}) -- it also does not lead to a divergence.
\end{enumerate}

In the case of a correlator involving higher derivatives of fundamental fields, we simply take respective derivatives of the correlator of the fundamental fields, given by convergent integrals. 

In the case when bare ghost $b$ is present among $\Phi$'s, the potential ultraviolet problems become even milder (as the ghost propagator (\ref{propagators}) has just a $\log$ singularity, as opposed to a pole). However, the power counting in the case (\ref{convergence item (c)}) can fail, see an example in Remark \ref{rem: <gamma b c>}, thus such correlators may require an infrared regularization.
\end{proof}

\begin{corollary}
Since the theory is ultraviolet-finite and since the Lagrangian contains no dimensionful parameters, the theory is conformal.
\end{corollary}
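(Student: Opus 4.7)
The plan is to leverage what has already been established: classical conformal invariance of $S$ on $(\Sigma,\xi)$ (only the conformal class of $\xi$ enters, via the Hodge star in the gauge-fixing term) together with the ultraviolet-finiteness of all correlators of fundamental fields given by Proposition \ref{Prop: corr properties}. In a generic QFT, quantum conformal invariance can be broken by the conformal (Weyl) anomaly arising when one is forced to introduce a dimensionful UV regulator $\mu$; the regulator breaks scale invariance, and the anomaly survives even after $\mu\to\infty$. Here I would argue that this mechanism simply does not turn on.

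First I would note that the Lagrangian density in (\ref{S}) (equivalently (\ref{S nonab in cx fields})) contains no dimensionful parameters: assigning mass dimension zero to $A,B,b,c,\lambda$ in two dimensions, every term is marginal, and in particular the coupling $g$ is dimensionless. Equivalently, the deforming field $\OO^{(2)}$ is a $(1,1)$-primary (this is the exact marginality statement already announced in Section 1.5 and to be verified in Section \ref{sss: ab OPEs with O^2}), so the deformation $S_0\mapsto S_0+g\int \OO^{(2)}$ introduces no new length scale.

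Next, I would invoke parts (\ref{corr properties (i)})--(\ref{corr properties (iv)}) of Proposition \ref{Prop: corr properties}: each correlator of fundamental fields (or their derivatives, excluding bare $b$) is a finite sum of tree Feynman diagrams whose position-space integrals converge absolutely, thanks to the case-by-case power counting in (\ref{convergence item (a)})--(\ref{convergence item (c)}) in the proof of that proposition. Consequently no UV cutoff $\mu$ is ever needed: correlators are well-defined functions of the insertion points alone, and the partition function $Z$ carries no hidden scale. In particular the renormalized products (\ref{normal ordering}) used to build composite operators involve only the subtraction of coordinate-dependent singular pieces, not of cutoff-dependent counterterms. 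Therefore the path-integral measure $e^{-\frac{1}{4\pi}S}\,\mc{D}\phi$ inherits the classical Weyl invariance of $S$, and the quantum correlators on $(\Sigma,\xi)$ depend only on the conformal class of $\xi$.

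Finally I would translate this into the statement that the quantum stress-energy tensor obeys the tracelessness condition on the nose. Combining classical tracelessness $T^\tot_{z\bar z}=0$ (the absence of the $dz\,d\bar z$ component of $T^\tot$ noted after (\ref{T equivalent})) with the absence of anomalous terms (no counterterms, no regulator), the Weyl Ward identity holds inside all correlators, which is precisely quantum conformal invariance. The only cautionary step, which I would address briefly, is correlators involving the bare ghost $b$: by item (\ref{corr properties (v)}) these may require an infrared cutoff, but an IR regulator does not break local conformal symmetry (it only affects global zero-mode data such as the constant $C$ in (\ref{propagators})), so it does not spoil the conclusion. The main subtle point is thus not UV but rather ruling out a scale-anomalous contribution from IR effects, and I expect this to be handled by the observation that $\ff$ enters only through dimensionless structure constants and that the IR-regulated propagators differ from the unregulated ones by constants, not by scale-breaking local terms.
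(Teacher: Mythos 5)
Your argument is essentially the paper's own: the corollary is justified exactly by the observation that Proposition \ref{Prop: corr properties} shows no ultraviolet regularization or renormalization is ever needed, so no scale is introduced and the classical Weyl invariance of the gauge-fixed action (metric entering only through the Hodge star) survives quantization; your elaborations on the absence of a Weyl anomaly and the harmlessness of the infrared cutoff for bare $b$ insertions are consistent with how the paper later substantiates conformal invariance (holomorphicity of $T$ under correlators and the $TT$ OPE in Section \ref{s: conf and Q-invariance}). So the proposal is correct and follows the same route.
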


\begin{remark}%\marginpar{think}
In the case when fields $\Phi_1,\ldots,\Phi_n$ %are polynomials in 
belong to the subset of
real fields $\{A,B,c,*db\}$ (but no $\lambda$ and no bare $b$ field),
%do not contain bare $b$ field and $\gamma,\bar\gamma$ always occur in the combination $B=-i(\gamma-\bar\gamma)$ (i.e. field $\lambda=\gamma+\bar\gamma$ is absent), 
the product of propagators (which are proportional to $d\arg (u_i-u_j)$) 
extends to a smooth form on the compactified configuration space of $n$ points, so the integral is automatically convergent, as in the case of perturbative Chern-Simons theory \cite{AS} and Poisson sigma model \cite{Kontsevich}. The argument we gave above is more general: it allows the $\lambda$ field (or equivalently, allows $\gamma$ and $\bar\gamma$ independently, not just in the combination $B=-i(\gamma-\bar\gamma)$).
\end{remark}

\subsubsection{Weights (naive conformal dimensions) of fields}\label{sss: conf dimension}
We assign the 
holomorphic/anti-holomorphic \textit{weight} $(h,\bar{h})$ to fields as follows: for $a$, we set $(h,\bar{h})=(1,0)$. For $\bar{a}$, we set $(h,\bar{h})=(0,1)$; for the remaining fundamental fields, $\gamma,\bar\gamma,b,c$, we set $(h,\bar{h})=(0,0)$.\footnote{
This assignment corresponds to $a\, dz$ being classically a $(1,0)$-form, $\bar{a}\, d\bar{z}$ being a $(0,1)$-form and $\gamma,\bar\gamma,b,c$ being scalars.
} Weight is additive with respect to multiplication of fields; applying $\dd$ to a field increases $h$ by $1$, while applying $\db$ increases $\bar{h}$ by $1$. These rules define the weight for any composite field.

These weights could be understood as the ``naive'' conformal dimensions of the fields. Later we will show that weights of  fundamental fields  coincide with their actual conformal dimensions -- see Section \ref{sss: T quantum, primary fields}. However, for composite fields there will be an interesting difference (see Section \ref{ss: vertex operators}). 

\begin{comment}
Under the  transformation $z\ra \Lambda z$ with $\Lambda\in\CC$ a fixed factor (scaling plus rotation), a fundamental field $\Phi$  of dimension $(h,\bar{h})$ transforms as follows:\footnote{
This transformation property holds for a larger class of fields but can fail for a composite field if the OPE between its constituent fields contains logarithms. %fails when a field has a logarithmic partner.
More explicitly, a field can occur as a component $\Phi=\Phi_k$ of a \textit{logarithmic multiplet} $\{\Phi_0,\Phi_1,\ldots,\Phi_r\}$ (see \cite{Rasmussen} and footnote \ref{footnote: logCFT Jordan cell} below)  and then the transformation rule (\ref{scaling transformation of Phi}) gets replaced by 
$\Phi_k(z) \ra \Lambda^{h}\bar\Lambda^{\bar{h}}\sum_{j=0}^k \frac{\log^j|\Lambda|^2}{j!}\Phi_{k-j}(\Lambda z)$.
}
%Generally, a field $\Phi$ has conformal dimension $(h,\bar{h})$ if  under the  transformation $z\ra \Lambda z$ with $\Lambda\in\CC$ a fixed factor (scaling plus rotation), it transforms as%\marginpar{WRONG for logarithmic fields - repair}
\begin{equation}\label{scaling transformation of Phi}
\Phi(z) \ra \Lambda^{h}\bar\Lambda^{\bar{h}}\Phi(\Lambda z) 
\end{equation}
In particular, conformal dimensions of fields in the correlator (of fundamental fields) prescribe how the correlator changes when the set of points $z_1,\ldots,z_n$ is acted on by a dilatation and a rotation. 
%If a field $\Phi$ has 
\end{comment}

To summarize the various degrees of fields we introduced, we have the ghost degree, the $\mc{AB}$-charge and the weight. For fundamental fields they are as follows.

\begin{tabular}{l|cccccc}
& $a$ & $\bar{a}$ & $\gamma$ & $\bar\gamma$ & $b$ & $c$ \\ \hline
ghost degree & $0$ & $0$ & $0$ & $0$ & $-1$ & $1$ \\
$\mc{AB}$-charge & $-1$ & $-1$ & $1$ & $1$ & $1$ & $-1$ \\
weight $(h,\bar{h})$ & $(1,0)$ & $(0,1)$ & $(0,0)$ & $(0,0)$ & $(0,0)$ & $(0,0)$
\end{tabular}

\subsection{Example: 3-point function of fundamental fields}
Consider the $3$-point correlation function 
\begin{equation}\label{3-point fun}
\lan \gamma(z_1)\otimes \bar\gamma(z_2)\otimes a(z_3) \ran
\end{equation}
We have the following diagram:
$$\vcenter{\hbox{\includegraphics[scale=0.35]{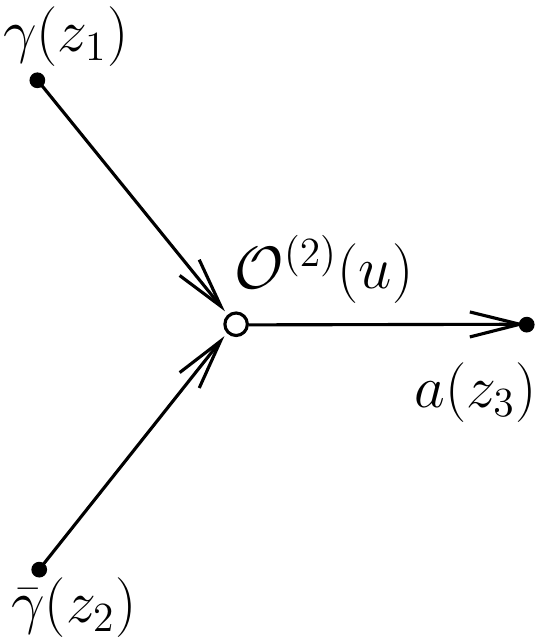}}} \qquad\qquad =\qquad 
\vcenter{\hbox{\includegraphics[scale=0.35]{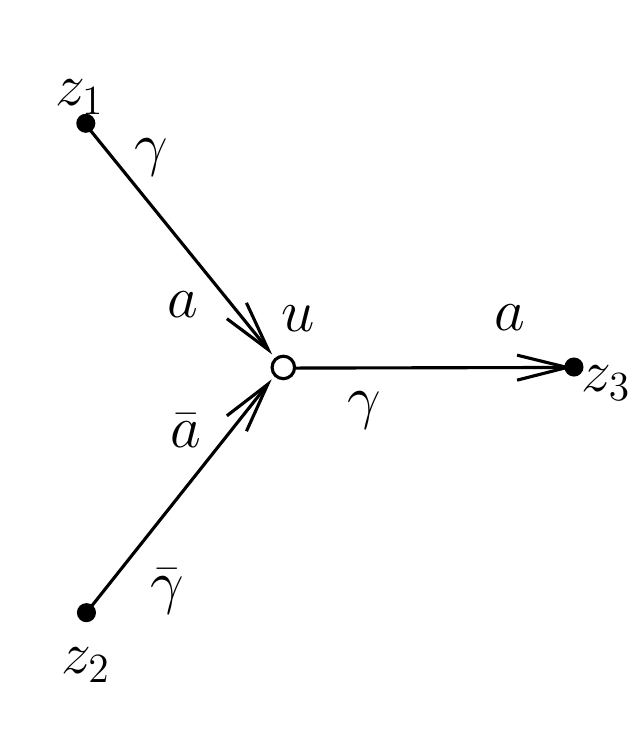}}}$$
The corresponding contribution to the correlator is the integral over $u$ (the place of insertion of the deforming observable $\OO^{(2)}$) of the product of three propagators:\footnote{
We originally introduced the structure tensor $\ff$ as an element of $\g\otimes\g^*\otimes\g^*$. By an abuse of notations, we use the same letter $\ff$ for the structure tensor with cyclically permuted $\g,\g^*,\g^*$ factors -- in this case (as is clear from the correlator we are looking at), $\ff$ is regarded as an element of $\g^*\otimes\g^*\otimes\g$.
}
\begin{multline}
\Big \langle \wick{\c1{\gamma(z_1)} \otimes \c3{\bar\gamma(z_2)} \otimes \c2{a(z_3)}\; g\int\frac{d^2 u}{2\pi}  \langle [\c1{a(u)},   \c3{\bar{a}(u)}], \c2{\gamma(u)}}\rangle \Big\rangle_0
\\  =
g\, \ff\cdot \int_{\CC\,\ni u}\frac{d^2 u}{2\pi} \frac{1}{(u-z_1)(\bar{u}-\bar{z}_2)(z_3-u)}
%g\, f^c_{ab} \frac{1}{z_1-z_3}\log\left|\frac{z_1-z_2}{z_3-z_2}\right| 
\end{multline}
%\marginpar{Or introduce a separate notation, $\tilde{\ff}$, for the permuted structure tensor?}  
In fact, the diagram above is the only contribution to the correlator (\ref{3-point fun}) -- the total $\mc{AB}$-charge of the fields $\gamma,\bar\gamma,a$ is $+1$ and thus a contributing diagram has to be a tree with a single interaction vertex.
%cf. the discussion of admissible Feynman graphs in Section \ref{sec: general correlators and admissible Feynman graphs}. 
Thus, evaluating the integral above (see (\ref{3-point integral})) we get the %exact (as opposed to modulo higher-order terms in $g$)
explicit result for the correlator:
\begin{equation}\label{3p_1}
\lan \gamma(z_1) \otimes \bar\gamma(z_2)\otimes a(z_3) \ran = g\, \ff\cdot \frac{1}{z_1-z_3}\log\left|\frac{z_1-z_2}{z_3-z_2}\right| 
\end{equation}
\begin{remark} The appearance of logs in correlators indicate that we are dealing with a logarithmic CFT \cite{Gurarie}, see Section \ref{ss: vertex operators}.
\end{remark}

By a similar calculation to (\ref{3p_1}), one finds
\begin{align*}
\lan \gamma(z_1)\otimes \dd b(z_2)\otimes c(z_3) \ran =& -  g\, \ff\cdot \frac{1}{z_1-z_2}\log\left|\frac{z_1-z_3}{z_2-z_3}\right| \\
\lan \bar\gamma(z_1)\otimes \dd b(z_2)\otimes c(z_3) \ran =& -  g\, \ff\cdot \frac{1}{z_2-z_3}\log\left|\frac{z_2-z_1}{z_3-z_1}\right|
\end{align*}

These $3$-point functions and their complex conjugates exhaust the nonvanishing $3$-point functions $\lan \phi_1(z_1)\phi_2(z_2) \phi_3(z_3)\ran$, with fields $\phi$ in the list $\{a,\bar{a},\gamma,\bar\gamma,c,\dd b,\db b\}$. 
By taking derivatives of these answers, one obtains $3$-point functions of arbitrary dervatives of the fundamental fields.

\begin{remark}\label{rem: <gamma b c>}
Note that here we did not consider $3$-point functions involving the ghost $b$ not hit by derivatives -- such correlators 
%, e.g. $$\lan  \gamma_a(z_1)  b_b(z_2) c^c(z_3) \ran$$ 
are given by more involved integrals of dilogarithmic type, which  contain an infrared divergence at $u\ra \infty$.
%contain an ambiguity due to an indeterminate constant $C$ in the $bc$-propagator (\ref{propagators}), and also require an infrared regularization.
For instance:
\begin{equation}\label{3-point gamma b c}
\lan  \gamma(z_1)\otimes  b(z_2)\otimes c(z_3) \ran = g\, \ff\cdot \int_{\CC\,\ni u}
\frac{d^2 u}{2\pi} \;\;\frac{2\log|u-z_2|+C}{(u-z_1)(\bar{u}-\bar{z}_3)}
\end{equation}
One needs an infrared regularization, e.g. by 
restricting the integration domain to a disk of large radius $R$,
%imposing a cut-off $|u|\leq R$ for $R$ a large radius, 
to have a convergent integral. Note that the constant $C$ in the $bc$ propagator (\ref{propagators}) also depends on the infrared regularization (see footnote \ref{footnote: C in bc propagator}). %The limit $R\ra \infty$ of the correlator (\ref{3-point gamma b c}) is infinite. 
At $R\ra \infty$, the correlator  (\ref{3-point gamma b c}) behaves as $\sim -g\,\ff \log^2 R$.
% (e.g. by imposing Dirichlet boundary conditions on fields $b,c$ on a circle of large radius $R$)
\end{remark} 

\subsection{Example: $4$-point functions and dilogarithm}
Consider the $4$-point function
%\marginpar{We could make do with just $X_2,X_4$, and have a $\g\otimes\g^*$-valued correlator with slightly prettier formulas. 
%(But (52) would still require $X_3$..)
%}
\begin{equation}\label{4-point <a gamma bargamma gamma>}
\Big\langle \langle a(z_1),\theta_1\rangle\, \langle\gamma(z_2),X_2\rangle\, \langle\bar\gamma(z_3),X_3\rangle \,\langle \gamma(z_4),X_4\rangle  \Big\rangle
\end{equation}
with $\theta_1\in\g^*$ and $X_2,X_3,X_4\in\g$ any four fixed vectors in the coefficient space.\footnote{
In a fashion similar to the previous section, we could consider the $\g\otimes(\g^*)^{\otimes 3}$-valued correlator $\langle a(z_1)\otimes \gamma(z_2) \otimes \bar\gamma(z_3)\otimes \gamma(z_4) \rangle$ -- however, that would require us to resort to the formalism of Jacobi diagrams to describe the relevant tensors in $\g\otimes(\g^*)^{\otimes 3}$ appearing in the answer. For presentation purposes, we chose instead to consider the 4-point correlator of scalar components of fields, fixed by a choice of vectors $\theta_1,X_2,X_3,X_4$; this is tantamount to pairing the tensor-valued correlator %$\langle a\otimes \gamma\otimes\bar\gamma \otimes \gamma \rangle$ 
$\langle a(z_1)\otimes \gamma(z_2) \otimes \bar\gamma(z_3)\otimes \gamma(z_4) \rangle$
with $\gamma_1\otimes X_2\otimes X_3\otimes X_4$.
}
There are two contributing diagrams:
\begin{equation}\label{4-point diagrams}
\vcenter{\hbox{\includegraphics[scale=0.5]{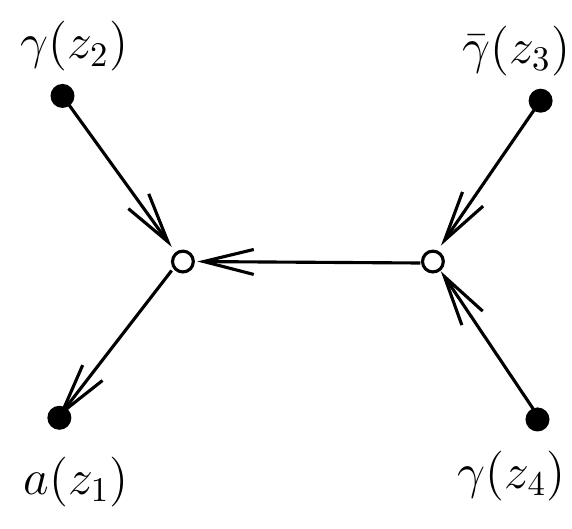}}} \qquad + \qquad \vcenter{\hbox{\includegraphics[scale=0.5]{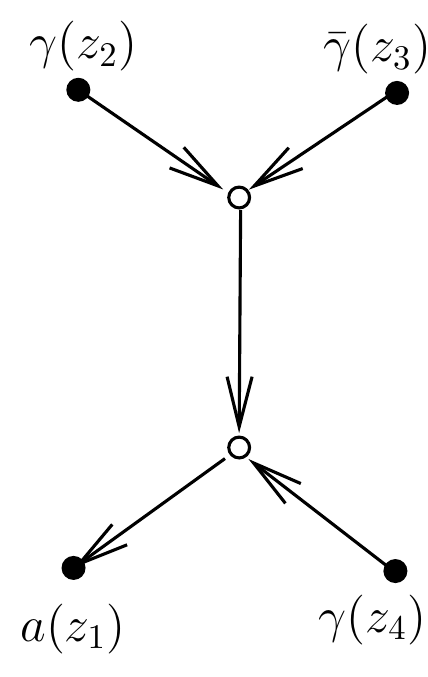}}}
\end{equation}
The first diagram yields
\begin{multline}\label{4-point fun computation}
%\!\!\!\!\!\!\!\!\!\!\!\!\!\!\!\!
{\scriptstyle
\wick{\Big\langle \langle \c1{a(z_1)},\theta_1\rangle\, \langle\c2{\gamma(z_2)},X_2\rangle\, \langle \c3{\bar\gamma(z_3)},X_3\rangle\, \langle\c4{\gamma(z_4)},X_4\rangle \cdot  g \int_{\CC\ni u}\frac{d^2 u}{2\pi}\langle\c1{\gamma}, [\c2{a},\c5{\bar{a}}]\rangle (u)\cdot g \int_{\CC\ni u'}\frac{d^2 u'}{2\pi}\langle-\c5{\bar\gamma}, [\c4{a},\c3{\bar{a}}]\rangle (u') \Big \rangle }
}
\\
=g^2 \langle\theta_1,[X_2,[X_3,X_4]] \rangle \int_{\CC^2\ni (u,u')}\frac{d^2 u}{2\pi} \frac{d^2 u'}{2\pi}\frac{1}{(z_1-u)(u-z_2)(\bar{u}-\bar{u}')(\bar{u}'-\bar{z}_3)(u'-z_4)}
\\
%\!\!\!\!\!\!\!\!\!
= -g^2  \langle\theta_1,[X_2,[X_3,X_4]] \rangle \int_{\CC\ni u}\frac{d^2 u}{2\pi} \frac{\log\left|\frac{z_3-z_4}{u-z_4}\right|}{(z_1-u)(u-z_2)(\bar{u}-\bar{z}_3)} \\
= g^2 \langle\theta_1,[X_2,[X_3,X_4]] \rangle\cdot \mathbb{I}(z_1,z_2,z_3,z_4)
%\frac{g^2}{2} f^a_{be}f^e_{cd} \frac{1}{z_{12}}\left(i\DD\left(\frac{z_{34}}{z_{14}}\right)-i\DD\left(\frac{z_{34}}{z_{24}}\right)+\log\left|\frac{z_{34}}{z_{14}} \right|\cdot \log\left|\frac{z_{23}}{z_{13}} \right| +\log\left|\frac{z_{14}}{z_{24}} \right|\cdot \log\left|\frac{z_{23}}{z_{34}} \right|\right)
\end{multline}
Here we introduced the notation
\begin{multline}\label{4-point I}
\mathbb{I}(z_1,z_2,z_3,z_4)=\\ 
= \frac{1}{2z_{12}}\left(i\DD\left(\frac{z_{34}}{z_{14}}\right)-i\DD\left(\frac{z_{34}}{z_{24}}\right)+
\log\left|\frac{z_{34}}{z_{14}} \right|\cdot \log\left|\frac{z_{23}}{z_{13}} \right| +
\log\left|\frac{z_{14}}{z_{24}} \right|\cdot \log\left|\frac{z_{23}}{z_{34}} \right|\right)
\end{multline}
where $z_{ij}=z_i-z_j$ and $\DD(-)$ is the Bloch-Wigner dilogarithm function \cite{Zagier} (see Appendix \ref{appendix: dilog integral} for a quick recap of the relevant properties). The integral over $u'$ in (\ref{4-point fun computation}) is evaluated using (\ref{3-point integral}) and the remaining integral over $u$ is evaluated using (\ref{3-point integral}), (\ref{3-point dilog integral}). The integral (\ref{4-point fun computation}) was considered in the literature, see \cite{Schnetz} (Section 5).

%The function has the following behavior when 

The full result for the $4$-point function (\ref{4-point <a gamma bargamma gamma>}) is:
\begin{multline}
\Big\langle \langle a(z_1),\theta_1\rangle\, \langle\gamma(z_2),X_2\rangle\, \langle\bar\gamma(z_3),X_3\rangle \,\langle \gamma(z_4),X_4\rangle  \Big\rangle = \\
=g^2 \Big(\langle\theta_1,[X_2,[X_3,X_4]] \rangle\cdot \mathbb{I}(z_1,z_2,z_3,z_4) + \langle\theta_1,[X_4,[X_3,X_2]] \rangle\cdot\mathbb{I}(z_1,z_4,z_3,z_2) \Big)
\end{multline}
The two terms here corresponds to the two diagrams (\ref{4-point diagrams}). Note that they are obtained from one another by interchanging points $z_2$ and $z_4$ and indices $b$ and $d$.

By a similar computation, one finds the $4$-point function
\begin{multline}\label{4-point a bargamma gamma bargamma}
%\lan a^a(z_1) \bar\gamma_b(z_2) \gamma_c(z_3) \bar\gamma_d(z_4)  \ran = 
\Big\langle \langle a(z_1),\theta_1\rangle\, \langle\bar\gamma(z_2),X_2\rangle\, \langle\gamma(z_3),X_3\rangle \,\langle \bar\gamma(z_4),X_4\rangle  \Big\rangle = \\
=g^2 \Big(\langle\theta_1,[X_2,[X_3,X_4]] \rangle\cdot \mathbb{J}(z_1,z_2,z_3,z_4) + \langle\theta_1,[X_4,[X_3,X_2]] \rangle\cdot\mathbb{J}(z_1,z_4,z_3,z_2)\Big)
\end{multline}
where 
\begin{multline}\label{4-point J}
\mathbb{J}(z_1,z_2,z_3,z_4)= -\int_{\CC^2}\frac{d^2 u}{2\pi}\frac{d^2 u'}{2\pi}\frac{1}{(z_1-u)(\bar{u}-\bar{z}_2)(u-u')(u'-z_3)(\bar{u}'-\bar{z}_4)} \\
=\frac{1}{2z_{13}}\left(i\DD\left(\frac{z_{14}}{z_{24}}\right)-i\DD\left(\frac{z_{34}}{z_{24}}\right)+ \log\left|\frac{z_{34}}{z_{24}}\right|\cdot \log\left|\frac{z_{12}}{z_{23}}\right| + 
\log\left|\frac{z_{34}}{z_{14}}\right|\cdot \log\left|\frac{z_{12}}{z_{24}}\right|
\right)
\end{multline}

We also have $4$-point functions involving ghosts which are computed similarly and are also expressed in terms of functions $\mathbb{I},\mathbb{J}$:
\begin{multline}
\label{4-point c gamma db gamma}
%\lan c^a(z_1) \gamma_b (z_2) \dd b_c(z_3) \gamma_d(z_4) \ran = 
\Big\langle \langle c(z_1),\theta_1\rangle\, \langle\gamma(z_2),X_2\rangle\, \langle\dd b(z_3),X_3\rangle \,\langle \gamma(z_4),X_4\rangle  \Big\rangle = \\
%-g^2 (f^a_{be} f^e_{cd} \mathbb{I}(z_3,z_4,z_1,z_2) + f^a_{de} f^e_{cb} \mathbb{I}(z_3,z_2,z_1,z_4)), 
=-g^2 (\langle\theta_1,[X_2,[X_3,X_4]] \rangle\cdot \mathbb{I}_{3412} + \langle\theta_1,[X_4,[X_3,X_2]] \rangle\cdot \mathbb{I}_{3214}), 
\end{multline}
\begin{multline}
\label{4-point c bargamma db bargamma}
%\lan c^a(z_1) \bar\gamma_b (z_2) \dd b_c(z_3) \bar\gamma_d(z_4) \ran = 
\Big\langle \langle c(z_1),\theta_1\rangle\, \langle\bar\gamma(z_2),X_2\rangle\, \langle\dd b(z_3),X_3\rangle \,\langle \bar\gamma(z_4),X_4\rangle  \Big\rangle = \\
%-g^2 (f^a_{be} f^e_{cd} \mathbb{J}(z_1,z_2,z_3,z_4) + f^a_{de} f^e_{cb} \mathbb{J}(z_1,z_4,z_3,z_2)),
=-g^2 (\langle\theta_1,[X_2,[X_3,X_4]] \rangle\cdot \mathbb{J}_{1234} + \langle\theta_1,[X_4,[X_3,X_2]] \rangle\cdot \mathbb{J}_{1432}),
\end{multline}
\begin{multline}
\label{4-point c db bargamma gamma}
%\lan c^a(z_1)  \bar\gamma_b(z_2) \dd b_c (z_3) \gamma_d(z_4) \ran = 
\Big\langle \langle c(z_1),\theta_1\rangle\, \langle\bar\gamma(z_2),X_2\rangle\, \langle\dd b(z_3),X_3\rangle \,\langle \gamma(z_4),X_4\rangle  \Big\rangle 
=\\
=-g^2 \Big(\langle\theta_1,[X_3,[X_2,X_4]] \rangle\cdot (\mathbb{I}_{1324}+\mathbb{J}_{3142})
+  \\
+\langle\theta_1,[X_2,[X_3,X_4]] \rangle\cdot \II_{3421}+\langle\theta_1,[X_4,[X_3,X_2]] \rangle\cdot\JJ_{4132} \Big)
\end{multline}
%\marginpar{check the last one! two more diagrams missing?}
where for brevity we denoted $\II_{ijkl}=\II(z_i,z_j,z_k,z_l)$ and $\JJ_{ijkl}=\JJ(z_i,z_j,z_k,z_l)$. Note that (\ref{4-point c bargamma db bargamma}) is simply minus the correlator (\ref{4-point a bargamma gamma bargamma}).

We remark that $\II$ and $\JJ$, our building blocks for $4$-point functions,  have the following symmetries:%\marginpar{check the middle one}
$$ \II_{2134}= \II_{1234},\quad \II_{1243}= \frac{\overline{z}_{12}}{z_{12}}\,\overline{\II_{1234}},\quad \JJ_{3412}=-\JJ_{1234} $$

Formulae (\ref{4-point <a gamma bargamma gamma>}, \ref{4-point a bargamma gamma bargamma}, \ref{4-point c gamma db gamma}, \ref{4-point c bargamma db bargamma}, \ref{4-point c db bargamma gamma}) and their complex conjugates exhaust all nonzero $4$-point function of fields from the set $\{a,\bar{a},\gamma,\bar\gamma, c, \dd b, \db b\}$ with total $\mc{AB}$-charge of fields under the correlator equal to $+2$. The other possibility is to have total $\mc{AB}$-charge zero; in this case the correlator coincides with the abelian one and is the sum of products of propagators, e.g. 
$$ \Big\langle \langle a(z_1),\theta_1\rangle\, \langle a(z_2),\theta_2\rangle\,  
\langle\gamma(z_3),X_3\rangle\,\langle \gamma(z_4),X_4\rangle \Big\rangle = 
\frac{\langle\theta_1,X_3\rangle\, \langle \theta_2,X_4\rangle}{z_{13}z_{24}}+\frac{\langle\theta_1,X_4\rangle\, \langle \theta_2,X_3\rangle}{z_{14}z_{23}}$$
with $\theta_1,\theta_2\in\g^*$, $X_3,X_4\in\g$ fixed vectors.

\subsection{Aside: from correlators on the plane to correlators on the sphere. Restoring M\"obius-invariance}\label{sec: sphere}
Consider the two-point function 
\begin{equation}\label{<a gamma>}
 \lan a(z_1)\otimes  \gamma(z_2) \ran = \frac{\idg}{z_1-z_2}  
\end{equation}
-- it coincides with the abelian propagator (\ref{propagators}), as there are no admissible Feynman graphs apart from the edge connecting $z_1,z_2$. As we will see (Section \ref{sss: T quantum, primary fields}, Proposition \ref{Prop T}),
%\marginpar{reference}
field $a$ is primary, of conformal dimension $(\Delta,\bar{\Delta})=(1,0)$ and $\gamma$ is primary of dimension $(0,0)$. Global conformal  invariance  implies that two-point functions of primary fields of non-matching dimensions must vanish (see e.g. \cite{DMS}). Thus, (\ref{<a gamma>}) seems to be in contradiction with conformal invariance.

%\marginpar{cite Witten on $\beta\gamma$ systems}
The explanation to the apparent paradox is that (\ref{<a gamma>}) is indeed not compatible with the global conformal symmetry of the sphere $\CC P^1$ (the group of M\"obius transformations), but is compatible with the global conformal symmetry of the plane $\CC$ (translations, rotations and scaling). Indeed, on a sphere the kinetic operators $\dd,\db,\dd \db$ appearing in (\ref{S_0 via cx fields}) have zero-modes, which we have killed when constructing propagators (\ref{propagators}) by imposing conditions on fields at $z=\infty$. In other words, a correlator (\ref{correlator PI}) on $\CC$ can be written as a correlator on the sphere with an additional field\footnote{
We refer the reader to Witten \cite{Witten_betagamma} (chapter 10) for details on soaking zero-modes and working with delta-functions of fields in $\beta\gamma$ systems.
}
$$\soak=\delta(\gamma)\delta(\bar\gamma)\delta(b)\delta(c)$$ 
inserted at $z=\infty$, which effectively imposes the necessary conditions on fields at infinity:
\begin{align*} 
\lan \Phi_1(z_1)\cdots \Phi_n(z_n) \ran_\CC =& \frac{1}{Z}\int e^{-\frac{1}{4\pi}S_{\CC P^1}}  \Phi_1(z_1)\cdots \Phi_n(z_n) \soak(\infty) \\
= &
\lan \Phi_1(z_1)\cdots \Phi_n(z_n) \soak(\infty) \ran_{\CC P^1}
\end{align*}
%\marginpar{We are using a basis in $\g$ here..}
Here $\delta(\gamma)=\prod_a\delta(\gamma_a)$ and similarly for the other delta-functions; here $\gamma_a$, with $a=1,\ldots,\dim\g$ are the components of the field $\gamma=\gamma_a t^a$ with respect to some basis $\{t^a\}$ in $\g^*$. Moreover, we have $\delta(c)=\prod_a c^a$, $\delta(b)=\prod_a b_a$ since $b,c$ are odd, with $b_a$ the components of $b=b_a t^a$ and with  $c^a$ the components of $c=c^a t_a$ w.r.t. the dual basis $\{t_a\}$ in $\g$.\footnote{
For the normalization of the fields $\delta(\gamma)$, $\delta(\bar\gamma)$, $\delta(c)$, $\delta(b)$ to be well-defined, we should assume that the coefficient space $\g$ is endowed with a fixed volume element $\rho\in \wedge^{\dim\g}\g^*$ and the basis $\{t_a\}$ must be compatible with it: $t^1\wedge\cdots\wedge t^{\dim\g}=\rho$.
}
 %$\delta(\bar\gamma)$. 

The version of the two-point function (\ref{<a gamma>}) on the sphere is the $3$-point function
\begin{equation}\label{<a gamma Delta>}
\begin{aligned}
\lan dz_1\, a(z_1)\otimes  \gamma(z_2) \;\soak(z_0) \ran_{\CC P^1} = & dz_1\,\left(\frac{1}{z_1-z_2} -\frac{1}{z_1-z_0} \right)\cdot\idg\qquad \\ 
= & dz_1\; \frac{z_2-z_0}{(z_1-z_2)(z_1-z_0)}\cdot\idg
\end{aligned}
\end{equation}
Here we included the factor $dz_1$ with $a(z_1)$ for convenience of tracking invariance properties. This answer on the sphere has %characterized by 
the following properties: 
% constructed using the following logic:
\begin{itemize}
\item It reduces to (\ref{<a gamma>}) in the limit $z_0\ra\infty$ and is invariant under the M\"obius group $PSL_2(\CC)$. (In fact, this property fully characterizes the answer.)
\item Asymptotic behavior at $z_2 \ra z_1$ (with $z_0$ fixed) is given by the pole (\ref{<a gamma>}).
\item At $z_2\ra z_0$ the result vanishes, which is consistent with $(\gamma \delta(\gamma))(z_0)=0$, cf. \cite{Witten_betagamma}.
\end{itemize}

One can also express (\ref{<a gamma Delta>}) in terms of the \textit{Szeg\"o kernel}
$$ \mu_{wz}= \frac{(dw)^{\frac12}\, (dz)^{\frac12}}{w-z}$$
-- a M\"obius-invariant holomorphic half-differential on the configuration space of two points on $\CC P^1$. Indeed, one has
\begin{equation}\label{<a gamma> on CP^1}
\lan dz_1\, a(z_1)\otimes \gamma(z_2)\; \soak(z_0) \ran_{\CC P^1} =  \frac{\mu_{z_1 z_2} \mu_{z_1 z_0}}{\mu_{z_2 z_0}}\cdot\idg
\end{equation}
The benefit of this form of the answer is that it is manifestly M\"obius-invariant.

Likewise, for instance, the $3$-point function (\ref{3p_1}) on the plane arises as the limit $z_0\ra \infty$ of a M\"obius-invariant $4$-point function on the sphere:
$$ \lan \gamma(z_1)\otimes \bar\gamma(z_2)\otimes dz_3\,a(z_3)\; \soak(z_0 )\ran_{\CC P^1} = -g\, \ff\cdot \frac{\mu_{z_3 z_1}\,\mu_{z_3 z_0}}{\mu_{z_1 z_0}} \log\left| \frac{(z_1-z_2)(z_3-z_0)}{(z_3-z_2)(z_1-z_0)} \right|  $$
We have again included the factor $dz_3$ with $a(z_3)$ for convenience. Note that the expression in $\log |\cdots|$ is the cross-ratio of the quadruple of points $(z_1,z_3;z_2,z_0)$ -- an invariant of the M\"obius group. Also, note that the the first factor in the r.h.s. vanishes at $z_1=z_0$ and the factor $\log|\cdots|$ vanishes at $z_2=z_0$.

Conformally invariant version of the two-point function $\lan c(z_1) b(z_2) \ran$ (\ref{propagators}) is the following $4$-point function on the sphere:
\begin{equation}\label{<cb> on CP^1}
\lan c(z_1)\otimes b(z_2)\; \tsoak(z_0)\; \delta(c(z'_0)) \ran_{\CC P^1}=
2\log \left|\frac{(z_1-z_2)(z'_0-z_0)}{(z_1-z_0)(z'_0-z_2)}\right|\cdot\idg
\end{equation}
Here we have split the field $\soak$ into $\tsoak=\delta(\gamma)\delta(\bar\gamma)\delta(b)$ and $\delta(c)$.\footnote{
The insertion of $\delta(c)$ at a point corresponds to requiring the gauge transformations to be trivial at that point. 
% We will elaborate on this point elsewhere.
} The splitting of $\soak$ at a point $z_0$ into $\tsoak$ at $z_0$ and $\delta(c)$ at a different ``nearby'' point $z'_0$ is a version of the ``infrared regularization'' that we needed to define the $bc$ propagator on the plane (cf. footnote \ref{footnote: C in bc propagator}).

\begin{remark} Note that in (\ref{<a gamma> on CP^1}) we could also split $\soak(z_0)$ as $\tsoak(z_0)\delta(c(z_0'))$. The resulting $4$-point function on the sphere 
can be written in the form
\begin{equation}
\lan dz_1\, a(z_1)\otimes  \gamma(z_2)\; \tsoak(z_0)\; \delta(c(z_0')) \ran_{\CC P^1} =  dz_1\, \dd_{z_1} \Big(2 \log \left|\frac{(z_1-z_2)(z'_0-z_0)}{(z_1-z_0)(z'_0-z_2)}\right|\cdot\idg \Big)
\end{equation}
It does not depend on $z_0'$ and coincides with (\ref{<a gamma> on CP^1}).
\end{remark}

As another example, $3$-point function (\ref{3-point gamma b c}) becomes the following $5$-point function on the sphere, with added insertions of $\widetilde{\Theta}$ at $z_0$ and $\delta(c)$ at $z'_0\neq z_0$:
\begin{multline*}
\lan \gamma(z_1)\otimes b(z_2)\otimes c(z_3)\;\tsoak(z_0)\; \delta(c(z'_0)) \ran_{\CC P^1} = \\
= g\, \ff\cdot\frac{i}{4\pi}\int_{\CC\ni u} \frac{\mu_{u z_1}\mu_{u z_0}}{\mu_{z_1 z_0}}\cdot \frac{\bar\mu_{u z_3} \bar\mu_{u z'_0}}{\bar \mu_{z_3 z'_0}}\cdot 2 \log\left|\frac{(u-z_2)(z'_0-z_0)}{(u-z_0)(z'_0-z_2)} \right|
\end{multline*}
Here the three factors under the integral are the conformally invariant replacements of the three propagators constituting the integrand of (\ref{3-point gamma b c}). Note that the integral above is convergent; it can be computed explicitly in terms of dilogarithms, using (\ref{dilog-integral over D_R}).

In summary: every $n$-point correlator $\lan \Phi_1\cdots \Phi_n \ran$ on the plane not containing an infrared divergence (no bare $b$ field among $\Phi_1,\ldots,\Phi_n$) has a unique M\"obius-invariant extension as an $(n+1)$-point function on $\CC P^1$, with an added insertion $\soak(z_0)$. This extension is written in terms of Szeg\"o kernels and cross-ratios. In the case of a plane $n$-point correlator requiring infrared regularization (case when bare field $b$ occurs among $\Phi_1,\ldots,\Phi_n$), the M\"obius-invariant extension on $\CC P^1$ is an $(n+2)$-point function with added insertions of $\tsoak(z_0)$ and $\delta(c)$ at $z'_0\neq z_0$. It is also written in terms of Szeg\"o kernels and cross-ratios, via replacing the propagators in the Feynman diagram expansion of the plane correlator with their $\CC P^1$ counterparts (\ref{<a gamma> on CP^1}), (\ref{<cb> on CP^1}).

\begin{remark}
Fields $\tsoak=\delta(\gamma)\delta(\bar\gamma)\delta(b)$ and $\delta(c)$ which we use to ``soak'' the zero-modes satisfy the following:
\begin{itemize}
\item Both $\tsoak$ and $\delta(c)$ are $Q$-closed. Indeed:
\begin{multline*}
Q\tsoak %= -\frac{g}{2} f^a_{bc} c^b (\gamma-\bar\gamma)_a \frac{\dd}{\dd \gamma_c} \delta(\gamma)\delta(\bar\gamma)\delta(b)+
%\frac{g}{2} f^a_{bc} c^b (\gamma-\bar\gamma)_a \delta(\gamma) \frac{\dd}{\dd \bar\gamma_c} \delta(\bar\gamma)\delta(b)+ \\ +
%(\gamma+\bar\gamma)_a  \delta(\gamma)\delta(\bar\gamma)\frac{\dd}{\dd b_a} \delta(b)
%\quad \\
=-\frac{g}{2}\big\langle \gamma-\bar\gamma,\big[c,\frac{\dd}{\dd \gamma}\delta(\gamma)\big] \big\rangle \delta(\bar\gamma) \delta(b)+
(-1)^{\dim\g} \delta(\gamma) \frac{g}{2}  \big\langle \gamma-\bar\gamma,\big[c,\frac{\dd}{\dd \bar\gamma}\delta(\bar\gamma)\big] \big\rangle \delta(b) + \\
+
\big\langle \gamma+\bar\gamma, \delta(\gamma)\delta(\bar\gamma)\frac{\dd}{\dd b}\delta(b)  \big\rangle
%=-g \langle \mr{tr}\, \ff, c\rangle \tsoak
=g \, \mr{tr}_\g[ c,-]\;  \tsoak
= 0
\end{multline*}
Here we use that $\gamma\delta(\gamma)= \bar\gamma\delta(\bar\gamma)=0$. In the last step, we use
%This property uses 
unimodularity of the Lie algebra $\g$.
% -- the vanishing of the trace of the structure tensor, $\mr{tr}\,\ff=0$ where $\ff$ is understood as an element of $\mr{End}(\g)\otimes\g^*$ and the trace is evaluated on the first factor. 
 Also, 
$$Q\delta(c)=\frac{g}{2}\big\langle [c,c], \frac{\dd}{\dd c}\delta(c)\big\rangle=0$$
-- vanishes as a product of $\dim\g+1$ ghosts at a point (recall that $\delta(c)=c^{\dim\g}\cdots c^2 c^1$ is the product of all components of the $c$-ghost) or in other words because $\wedge^{\dim \g+1}\g^*=0$.\footnote{Note that $Q$-closedness would fail if we would have split $\soak$ instead into $\delta(\gamma)\delta(\bar\gamma)\delta(c)$ and $\delta(b)$.} 
We further note that $\tsoak$ can be split further into $Q$-cocylces:
%\marginpar{use of basis in $\g$ - leave it be?}
\begin{equation}
\tsoak=\prod_{a=1}^{\dim\g} \Big(\delta(\lambda_a)b_a\Big)\cdot \delta(B)
\end{equation}
%with $C$ a constant. 
(with appropriately normalized delta-functions).
Here fields $\delta(\lambda_a)b_a$ are $Q$-closed for each $a$ and $\delta(B)$ is $Q$-closed due to unimodularity of $\g$. 
\item The operator product expansions $\OO^{(2)}(u)\, \tsoak(z)$ and $\OO^{(2)}(u)\, \delta(c(z))$ both have an integrable singularity in $u$ at $u=z$ (see Section \ref{sss: ab OPEs with O^2}).
\item The operator product expansion between the fields $\delta(c)$ and $\tsoak$ in the abelian theory (i.e. at $g=0$) has the form
%\marginpar{PROBABLY TERMS  MISSING HERE, $\gamma c$ INTERACTION - CHECK OR  REMOVE}
\begin{equation}\label{soaking fields OPE}
\delta(c(z))\; \tsoak(w) \sim 
\delta(\gamma)\delta(\bar\gamma)\sum_{p=0}^{\dim\g-1} \frac{1}{p!} \Big(2\log|z-w|\Big)^{\dim\g-p}\cdot \lan c, b\ran^p +\reg
\end{equation}
where all fields on the r.h.s. are at $w$ and composite field $\lan c, b\ran^p%=(c^a b_a)^p
$ is understood as \textit{renormalized}, cf. (\ref{normal ordering}). In the non-abelian theory, there are  additional terms of order $\geq 1$ in $g$, which also come with powers of $\log|z-w|$.
%product of $b$ and $c$ ghosts at the point $w$ (i.e. with the log-singularity subtracted).
\item Note that our way of soaking zero-modes is different from the way proposed by Witten in \cite{Witten2dYM}, by using $\exp(-g_0^2\int\mu\,\mr{tr}B^2)$ with $\mu$ an area form and $g_0$ the standard coupling constant in two-dimensional Yang-Mills theory. 
We will explain the geometrical meaning of our soaking operators in terms of the moduli space of flat connections elsewhere.
\end{itemize}
%We refer the reader to Witten \cite{Witten_betagamma} (chapter 10) on soaking zero-modes and working with delta-functions of fields in $\beta\gamma$ systems.
\end{remark}

\section{Operator product expansions}\label{ss: OPE} %\marginpar{Interchange the roles of $w$ and $z$ in OPEs in this section?}
Given two fields $\Phi_1$, $\Phi_2$, we are interested in the singularity of the correlator
\begin{equation}\label{OPE correlator}
\lan \Phi_1(z) \Phi_2(w)\; \phi_1(x_1)\cdots \phi_n(x_n) \ran
\end{equation}
in the asymptotics $z\ra w$; here $\phi_1,\ldots,\phi_n$ are arbitrary test fields inserted at finite distance away from $z,w$. Operator product expansion (OPE) is an expression of the form 
\begin{equation}\label{OPE}
\Phi_1(z) \Phi_2(w) \sim \sum_{i=1}^s \sigma_i(z-w) \widetilde{\Phi}_i(w)+\mr{reg.}
\end{equation}
with $\widetilde{\Phi}_i$ some fields and $\sigma_i(z-w)$ some singular coefficient functions, typically of form $(z-w)^{-p} (\bar{z}-\bar{w})^{-q}\log^r|z-w|$ with $p+q\geq 0, r\geq 0$; $\mr{reg.}$ stands for terms which are regular (continuous) at $z\ra w$. The number $s$ of singular terms on the r.h.s. depends on fields $\Phi_1,\Phi_2$. Expression (\ref{OPE}) means that one can replace the product $\Phi_1(z) \Phi_2(w)$ with the right hand side in a correlator (\ref{OPE correlator}) with  arbitrary test fields $\phi_1,\ldots,\phi_n$ inserted away from $z,w$, reproducing the correct behavior of the correlator at $z\ra w$, modulo terms having a well-defined limit at $z\ra w$.

Test fields $\phi_1,\ldots,\phi_n$ in (\ref{OPE correlator}) can be assumed to be fundamental fields without loss of generality. 

Let $\Phi_1,\Phi_2$ be two (possibly, composite) fields. %fundamental fields or their derivatives. 
The OPE is given by a sum of Feynman graphs $\gm$ with loose half-edges decorated by respective fundamental fields (or derivatives)
-- their product over the loose half-edges yields the composite field $\widetilde{\Phi}(w)$ in the 
term of the OPE corresponding to $\gm$. Graphs $\gm$ contributing to the OPE have the following properties:
\begin{enumerate}[(i)]
\item\label{OPE subgraph (i)} Graph $\gm$ contains one vertex decorated by $\Phi_1(z)$, one vertex decorated by $\Phi_2(w)$ and $k\geq 0$ interaction vertices decorated by $\OO^{(2)}(u_1),\ldots, \OO^{(2)}(u_k)$, with $u_1,\ldots,u_k$ integrated over $\CC$.
\item\label{OPE subgraph (ii)} Cutting any single edge in $\gm$, we do not create a connected component which contains neither vertex $\Phi_1(z)$, nor vertex $\Phi_2(w)$. This is an analog of the one-particle irreducibility; by an abuse of terminology, we will call graphs with this property 1PI graphs.
\end{enumerate}
Graphs $\gm$ arise as subgraphs of Feynman graphs $\Gamma$ contributing to the correlator (\ref{OPE correlator}). Loose half-edges correspond to edges of $\Gamma$ 
that are severed when cutting out the subgraph.
%connecting vertices in $\gm$ to vertices outside $\gm$.
1PI requirement for $\gm$ is imposed in order to avoid overcounting: for a graph $\Gamma$ contributing to (\ref{OPE}),  there is a unique way to single out the OPE subgraph $\gm$ satisfying (\ref{OPE subgraph (i)}), (\ref{OPE subgraph (ii)}) above. The contribution of the quotient graph $\Gamma/\gm$ (i.e. $\Gamma$ with the subgraph $\gm$ collapsed into a single vertex) to the correlator of the term $i=\gm$ in the r.h.s. of (\ref{OPE}) with the test fields $\phi_1,\ldots,\phi_n$ 
is the same as the contribution of $\Gamma$ to (\ref{OPE correlator})
(up to regular terms at $z\ra w$).

\subsection{%Example: OPE of fundamental fields
OPEs of fundamental fields
}\label{sss: OPE of fundamental fields}
For example, consider the OPE
\begin{equation}\label{a^a gamma_b OPE}
 a(z)\otimes \gamma(w) 
\end{equation}
The only potentially contributing Feynman graphs $\gm$ are  graphs of ``branch'' type
\begin{equation}\label{OPE branch}
\vcenter{\hbox{\includegraphics[scale=0.5]{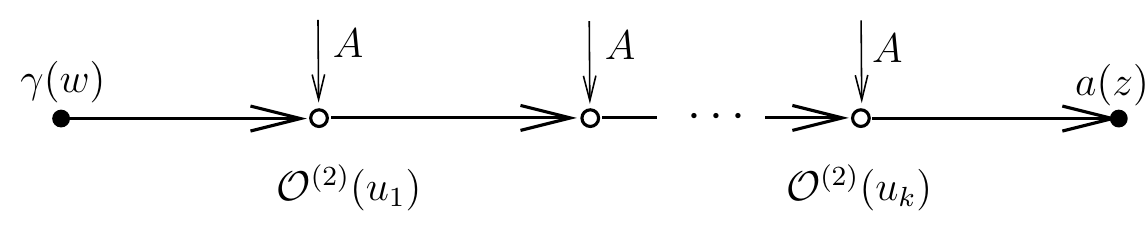}}} 
\end{equation}
with $k\geq 0$ interaction vertices. They arise as subgraphs of trees (or disjoint unions of trees) 
$\Gamma$ contributing to a correlator $\lan a(z)\otimes \gamma(w)\;\phi_1(x_1)\cdots \phi_n(x_n)  \ran $:
$$\vcenter{\hbox{\includegraphics[scale=0.4]{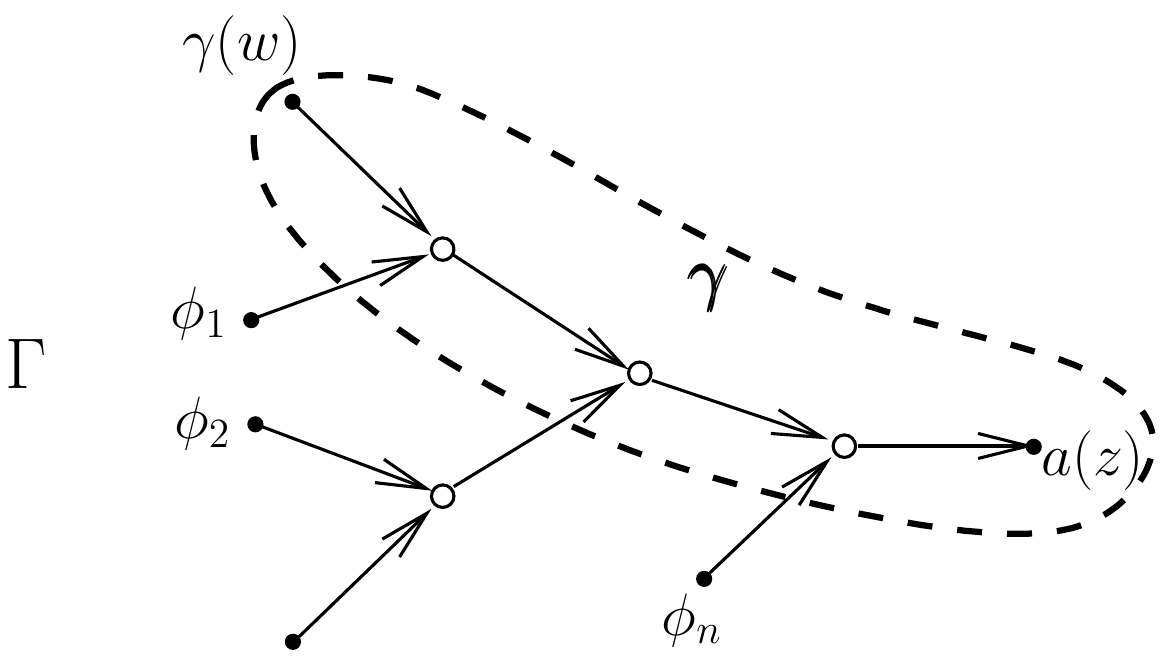}}} $$

For instance, for $k=0$, the branch (\ref{OPE branch}) is a single edge connecting  $\gamma(w)$ and $a(z)$, its contribution to the OPE  (\ref{a^a gamma_b OPE}) is simply the propagator $\frac{\idg}{z-w}$ (times the identity field suppressed in the notation).

For $k=1$, the contribution of the branch graph 
$$\vcenter{\hbox{\includegraphics[scale=0.5]{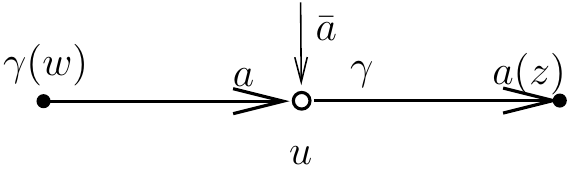}}} $$
to the OPE is:
\begin{equation}\label{a^a gamma_b OPE k=1}
 -g \int_{\CC} \frac{d^2 u}{2\pi}\,\frac{\mr{ad}(\bar{a}(u))}{(z-u)(u-w)} 
\end{equation}
with $\mr{ad}(\cdots)$ the adjoint action. In this expression, we replace the field $\bar{a}^c(u)$ with its Taylor expansion around $w$, 
\begin{equation}\label{abar(u)=abar(w)+R}
\bar{a}(u)=\sum_{i,j\geq 0}\frac{1}{i!j!}\dd^i \db^j \bar{a}(w)\; (u-w)^i (\bar{u}-\bar{w})^j\quad = \bar{a}(w)+\mc{R}(u,w) 
\end{equation}
where split the Taylor expansion into the zeroth term and the remainder (error term) $\mc{R}(u,w)$ behaving as $O(|u-w|)$. Under the correlator with test fields, the term in (\ref{a^a gamma_b OPE k=1}) with $\bar{a}$ replaced by $\mc{R}(u,w)$ is continuous as $z\ra w$: setting $z=w$, we get an integrable singularity of the integrand. Thus, up to a regular term, (\ref{a^a gamma_b OPE k=1}) is equivalent to%\marginpar{fix the factors}
\begin{equation}
-g\int_{\CC} \frac{d^2 u}{2\pi}\,\frac{\mr{ad}(\bar{a}(w))}{(z-u)(u-w)} = 
-\frac{g}{2}\,  \frac{\bar{z}-\bar{w}}{z-w}\; \mr{ad}(\bar{a}(w))
\end{equation}
-- cf. (\ref{int 1/((u-z)(u-x))}) for the evaluation of the integral.

Finally, for $k\geq 2$, branch graphs (\ref{OPE branch}) give regular contributions to the OPE: setting $z=w$, we get an integrable singularity of the integrand as any subset of $u_1,\ldots,u_k$ approaches $z=w$ (by power counting arguments of the proof of Proposition \ref{Prop: corr properties}).

Thus, we have a complete result for the OPE (\ref{a^a gamma_b OPE}):
\begin{equation}\label{a gamma OPE answer}
a(z)\, \gamma(w)\sim \frac{\idg}{z-w}-\frac{g}{2}\;  \frac{\bar{z}-\bar{w}}{z-w}\cdot \mr{ad}(\bar{a}(w))\; + \mr{reg.}
\end{equation}

As a check of this result, we can take the correlator of left and right side of (\ref{a gamma OPE answer}) with the test field $\bar\gamma(x)$. We obtain
$$ %\underbrace{
\lan a(z)\otimes \gamma(w)\otimes \bar\gamma(x)\ran
%}_{g f^a_{bd}\frac{1}{z-w}\log\left|\frac{z-x}{w-x} \right|}   
\stackrel{?}{\sim} -\frac{g}{2}  \; \frac{\bar{z}-\bar{w}}{z-w}\; 
%\underbrace{
\lan \mr{ad}(\bar{a}(w))\otimes\bar\gamma(x) \ran
%}_{\frac{\delta^c_d}{\bar{w}-\bar{x}}}
+\mr{reg.} $$ 
The $3$-point function on the left, known from (\ref{3p_1}), can be written as $\frac{g}{2} \ff\cdot \frac{1}{z-w}{\log|1+\frac{z-w}{w-x}|^2}$ and is indeed equivalent to r.h.s., $\frac{g}{2}\ff\cdot \frac{\bar{z}-\bar{w}}{z-w}\frac{1}{\bar{w}-\bar{x}}$, as $z\ra w$.

As another example, consider the OPE
$$ a(z)\otimes \bar\gamma(w) $$
As in the previous case, %of (\ref{a^a gamma_b OPE}), 
we have branch graphs similar to (\ref{OPE branch}), and graphs with $k\geq 2$ don't contribute to the singular part of the OPE by a power counting argument. Case $k=0$ is now also absent: propagator between $a$ and $\bar\gamma$ is zero. Thus, we only have the contribution of the  $k=1$ graph
$$\vcenter{\hbox{\includegraphics[scale=0.5]{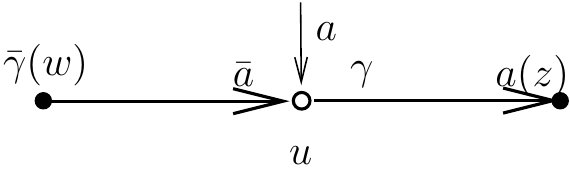}}} $$
This results in the following OPE:
\begin{equation}\label{a bargamma OPE}
\begin{aligned}
a(z)\otimes \bar\gamma(w)\sim & 
g \int_\CC \frac{d^2 u}{2\pi} \frac{\mr{ad}(a(u))}{(z-u)(\bar{u}-\bar{w})}+\mr{reg.} \\
\sim & g \int_\CC \frac{d^2 u}{2\pi} \frac{\mr{ad}(a(w))}{(z-u)(\bar{u}-\bar{w})}+\mr{reg.} \\
\sim &
g  \log |z-w|\cdot \mr{ad}(a(w)) + \mr{reg.}
\end{aligned}
\end{equation}
Here we use the same argument as above to replace $a(u)$ with $a(w)$. The resulting integral over $u$ is logarithmically divergent at $u\ra\infty$ and needs an infrared regularization $|u|<R$.\footnote{
More precisely: we split $a(u)=a(w)+\mc{R}(u,w)$ as in (\ref{abar(u)=abar(w)+R}). Then we have 
$\int d^2 u\frac{a(u)}{(z-u)(\bar{u}-\bar{w})}=\int d^2 u\frac{a(w)}{(z-u)(\bar{u}-\bar{w})}+
\int d^2 u\frac{\mc{R}(u,w)}{(z-u)(\bar{u}-\bar{w})}$. Here the integral on the left is convergent at $u\ra\infty$ when placed under a correlator with a test field. On the right, it is split into two integrals which are both infrared-divergent, but their behavior (after imposing a cutoff $|u|<R$) at $z\ra w$ is easier to analyze.
} %(note that, when paired to a test field)
The regularized integral is given by (\ref{log-integral over D_R}). Changing the cutoff $R$ does not affect the singular part of the result.

By similar computations, we have the following OPEs:
%\begin{equation}\label{OPEs of fund fields}
\begin{align}
\gamma(z)\otimes\bar\gamma(w)\sim & - g\, \log|z-w| \cdot \langle \ff, (\gamma-\bar\gamma)(w)\rangle +\reg
\label{OPE gamma bargamma} \\ \label{OPE c db}
c(z)\otimes \dd b(w) \sim& -\frac{\idg}{z-w}+\frac{g}{2}\, \frac{\bar{z}-\bar{w}}{z-w}\cdot \mr{ad}(\bar{a}(w)) +\\
\nonumber &+ g\, \log|z-w|\cdot \mr{ad}(a(w))+\reg  
\\
c(z)\otimes \gamma(w)  \sim & \; g\, \log|z-w|\cdot \mr{ad}(c(w))+\reg  
\label{OPE c gamma}\\
\dd b(z)\otimes \gamma(w) \sim & -\frac{g}{2}\, \frac{\bar{z}-\bar{w}}{z-w}\;\langle\ff,\db b(w)\rangle + \reg
\label{OPE db gamma}\\
\dd b(z)\otimes  \bar\gamma(w) \sim & -g\,  \log|z-w| \cdot \langle\ff, \dd b(w)\rangle+\reg 
\label{OPE db bargamma}
\end{align}
%\end{equation}
For each OPE, there is also the complex conjugate one.

%\marginpar{Introduce notation $\mr{reg}^{(p)}$ for $p$ times differentiable error terms}
Let us denote $\mr{reg}^{(p)}$ a remainder term in an OPE which has continuous derivatives of order $\leq p$ at $z=w$. In particular, by default we write OPEs up to $\reg = \mr{reg}^{(0)}$ terms.

For the OPE $c(z)\, b(w)$ only the branch graph with $k=0$ (i.e. just a single edge) contributes:
\begin{equation} \label{OPE cb}
 c(z)\otimes b(w) \sim 2\,\log|z-w|\cdot\idg+\reg 
\end{equation}
Note that the ``regular'' part here is just continuous but not differentiable at $z=w$, as implied by presence of $O(g)$ terms in $c(z) \dd b(w)$ OPE (\ref{OPE c db}). %$c(z) \dd b(w)$ OPE in (\ref{OPEs of fund fields}). 
%\marginpar{I am omitting $\otimes$ in inline OPEs - is it ok?}
The latter imply that (\ref{OPE cb}) can be refined to
$$ c(z)\otimes b(w) \sim \log|z-w| \Big( 2\cdot\idg- g\, (z-w)\cdot \mr{ad}(a(w))- g\,  (\bar{z}-\bar{w})\cdot\mr{ad}(\bar{a}(w)) \Big) + \mr{reg}^{(1)}
%\cdots 
$$
where %$\cdots$ is 
the remainder term is differentiable (but not twice differentiable) at $z=w$. The $O(g)$ contribution here can be seen as coming from $k=1$ branch graph for $cb$ OPE which is continuous but not differentiable.

%For an OPE of general derivatives of fundamental fields, branch graphs with $k>1$ can contribute -- but only finitely many of them: for $k$ sufficiently large, the limit $z=w$ of the integral over $u_1,\ldots,u_k$ is convergent by the power-counting argument.\marginpar{elaborate? example?}

\textbf{OPEs which are trivial due to Feynman diagram combinatorics.} The OPEs of the following pairs of fundamental fields are purely regular:
\begin{equation}\label{regular OPEs}
\begin{gathered} a(z)\otimes a(w)\sim \mr{reg}^{(\infty)},\quad 
a(z)\otimes \bar{a}(w) \sim \mr{reg}^{(\infty)}, \quad
%\bar{a}^a(z)\, \bar{a}^b(w) \sim \mr{reg}^{(\infty)}, 
a(z)\otimes c(w)\sim \mr{reg}^{(\infty)}, \\
c(z)\otimes c(w)\sim \mr{reg}^{(\infty)},\quad
b(z)\otimes b(w)\sim \mr{reg}^{(\infty)},\quad
a(z)\otimes b(w)\sim \mr{reg}^{(\infty)}
\end{gathered}
\end{equation}
In each of these cases one can also take arbitrary derivatives of the first and second field and the OPE is still regular -- there are no contributing Feynman graphs (cases $bb$ and $ab$ is slightly more subtle: there is an admissible orientation of branch graphs but no admissible decoration of half-edges by fields). In other words, right hand sides in OPEs (\ref{regular OPEs}) are infinitely-differentiable in $z,\bar{z},w,\bar{w}$ at $z=w$.

On the other hand, we have
$$ \gamma(z)\otimes \gamma(w)\sim\reg,\quad  
b(z) \otimes\gamma(w)\sim \reg 
$$
-- with continuous but non-differentiable r.h.s. at $z=w$. 
In fact,  OPEs (\ref{OPE db gamma}), (\ref{OPE db bargamma}) 
%$\dd b\, \gamma$, $\dd b\, \bar\gamma$ OPEs in (\ref{OPEs of fund fields}) 
imply that
$$ b(z) \otimes\gamma(w)\sim -g\,(\bar{z}-\bar{w})\log|z-w|\cdot\langle \ff,\db b(w)\rangle + \mr{reg}^{(1)} $$
%where $\cdots$ stands for 
with a remainder term which is %further terms, 
differentiable but not twice differentiable at $z=w$. Similarly, $\gamma(z)\gamma(w)$ is non-differentiable because there are contributions of $k=2$ branch diagrams to OPEs $\dd\gamma(z) \gamma(w)$, $\db \gamma(z)\gamma(w)$.

%\textbf{OPEs of derivatives of fundamental fields.} 
\subsection{OPEs of derivatives of fundamental fields}
For an OPE of general derivatives of fundamental fields, branch graphs with $k>1$ can contribute -- but only finitely many of them: for $k$ sufficiently large, the limit $z=w$ of the integral over $u_1,\ldots,u_k$ is convergent. %by the power-counting argument. 
One can also find a bound on $k$ from a weight counting argument (here ``weight'' is understood as in Section \ref{sss: conf dimension}), as follows. 

As an example, consider the OPE 
\begin{equation}\label{OPE d^p a d^q gamma}
\dd^p a(z)\otimes \dd^q  \gamma(w)
\end{equation} 
with some $p,q\geq 0$. The weight of this expression is $(h,\bar{h})=(p+q+1,0)$. A contribution of a branch graph with $k$ interaction vertices to the OPE is a sum of terms of form
\begin{equation}\label{OPE d^p a d^q gamma term}
\prod_{i=1}^r (\dd^{\mu_i} \db^{\nu_i} a)(w)\cdot \prod_{j=1}^s (\dd^{\rho_j} \db^{\sigma_j} \bar{a})(w)\cdot (z-w)^l (\bar{z}-\bar{w})^m \log^\alpha|z-w|  
\end{equation}
with $r+s=k$ (fields $a$, $\bar{a}$ should be appropriately contracted via structure constants); note that derivatives of $a,\bar{a}$ arise from expanding a field inserted at $u$ in a Taylor series centered at $w$. The weight of this expression is $(h,\bar{h})=(r+\sum \mu_i+\sum \rho_j-l, s+\sum \nu_i+\sum\sigma_j-m)$.  It has to coincide with the weight of (\ref{OPE d^p a d^q gamma}). In particular, for the total weight $h+\bar{h}$, we have
$$ \underbrace{r+s}_k+\underbrace{\sum \mu_i + \sum \nu_i + \sum \rho_j+\sum \sigma_j}_{\geq 0}-l-m=p+q+1 $$
In particular, we have $l+m\geq k-(p+q+1)$. If $l+m\geq 1$, the term (\ref{OPE d^p a d^q gamma term}) is non-singular (continuous). 
Thus, one can only have singular terms in the OPE if 
\begin{equation}\label{OPE d^p a d^q gamma k estimate}
k\leq p+q+1
\end{equation}

Similarly, for the OPE $\dd^p \db^{p'}a(z)\otimes \dd^q\db^{q'}\gamma(w)$, only branch graphs with $k\leq p+p'+q+q'+1$ can contribute to the singular part.

%\marginpar{draw branch graphs?}
As another example, for the OPE $\dd^p \db^{p'}\gamma(z)\otimes \dd^q\db^{q'}\gamma(w)$, there are terms containing  $c$, a derivative of $b$ and $(k-2)$ fields $a,\bar{a}$ (each field can come with more derivatives) 
-- for these terms one obtains the estimate $k\leq p+p'+q+q'+1$. There are also terms containing $\gamma-\bar\gamma$ %(or its derivative) 
and $(k-1)$ fields $a,\bar{a}$ %(or derivatives) 
-- these yield the same estimate for $k$. %$k\leq p+p'+q+q'+1$.

%[[CONVERGENCE ARGUMENT IN  CASE  $\dd^p a \dd^q \gamma$]]
\textit{Convergence argument.} As we mentioned above, alternatively to going the route of weight counting,
%as an alternative to the dimension analysis,
one can prove that branch graphs with large $k$ do not contribute to the singular part of the OPE by checking convergence of the integral over $u_1,\ldots,u_k$ in the limit $z=w$. For instance, consider the OPE (\ref{OPE d^p a d^q gamma}). At $z=w$, the corresponding contribution contains an integral of the form
\begin{equation}\label{OPE d^p a d^q gamma integral}
 \int_{u_1,\ldots,u_k} \dd_w^p P_{wu_1}\cdot P_{u_1 u_2}\cdot\; \cdots\; \cdot P_{u_{k-1} u_k} \cdot \dd^q_w P_{u_k w} 
\end{equation}
where each propagator $P_{x y}$ is either $\frac{1}{x-y}$ or $\frac{1}{\bar{x}-\bar{y}}$. 
%(we understand that there are also fields $a,\bar{a}$ at points $u_1,\ldots,u_k$ which connect to the test fields).
We should analyze the potential ultraviolet problems:
\begin{enumerate}
\item If some of the $u_i$'s collapse together (but not at $w$), we have an integrable singularity in (\ref{OPE d^p a d^q gamma integral}) by the argument of (\ref{convergence item (a)}) of the proof of Proposition \ref{Prop: corr properties}.
\item If a proper subset of $u_i$'s, with indices $i\in S\subset \{1,\ldots,k\}$, collapses on $w$, we consider the integral over $\{u_i\}_{i\in S}$ in a disk $D_{w,\epsilon}$ centered at $w$ of small radius $\epsilon$, with the non-collapsing points $u_i$ fixed outside of the disk and regarded as parameters. This gives a product of convergent integrals  (by (\ref{convergence item (b)}) of the proof of Proposition \ref{Prop: corr properties}), one integral per each string of consecutive integers in $S$;\footnote{In other words, we have one integral per connected component of the collapsing subgraph of the branch graph $\gm$. Here the ``collapsing subgraph'' is the full subgraph of $\gm$ with vertices $\{u_i\}_{i\in S}\cup \{w,z\}$}  two of these integrals can be equipped with derivatives $\dd^p_w$ and $\dd^q_w$ which does not affect convergence.
\item \label{OPE d^p a d^q gamma convergence (3)} If \textit{all} $u_i$'s collapse at $w$, the integrand of (\ref{OPE d^p a d^q gamma integral}) behaves as $O(\frac{1}{\epsilon^{p+q+k+1}})$ when all $u_i$'s are at the distance of order $\epsilon$ from $w$ and from each other. Thus, the $2k$-fold integral (\ref{OPE d^p a d^q gamma integral}) may be divergent if $p+q+k+1\geq 2k$ (i.e. when $k\leq p+q+1$) but is convergent otherwise.
\end{enumerate}
Here are the typical collapsing subgraphs of $\gm$ corresponding to these three cases (we draw them on the graph $\gm$ with vertices $z$ and $w$ identified).
$$\vcenter{\hbox{\includegraphics[scale=0.5]{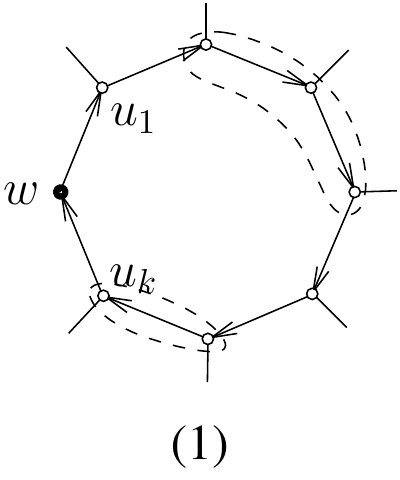}}}\qquad 
\vcenter{\hbox{\includegraphics[scale=0.5]{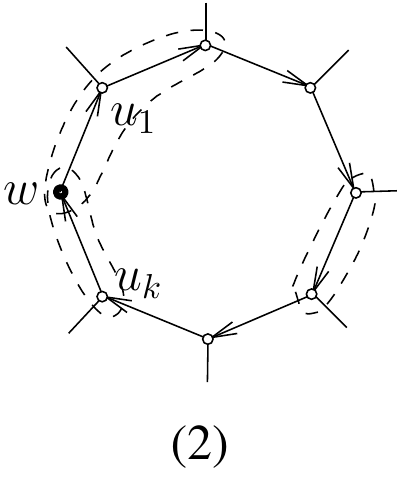}}}\qquad 
\vcenter{\hbox{\includegraphics[scale=0.5]{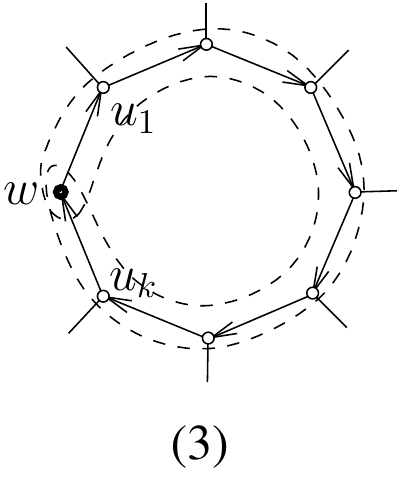}}}
$$

Thus, only the situation (\ref{OPE d^p a d^q gamma convergence (3)}) can lead to an ultraviolet problem at $z=w$, which gives us an upper bound for which $k$ can contribute to the singular part of the OPE. Note that the bound is the same as the one we obtained above from weight counting (\ref{OPE d^p a d^q gamma k estimate}).

\begin{example} For instance, the following OPEs contain the contributions of branch diagrams with $k\leq 2$:
\begin{gather}
a(z)\otimes  \dd \gamma(w) \sim \label{a  dd gamma OPE} \\ \nonumber
\sim \frac{\idg}{(z-w)^2}
%- \frac{g}{2}\langle \mr{ad}^*(X)\theta , \frac{\bar{z}-\bar{w}}{(z-w)^2}\bar{a} + \frac{\bar{z}-\bar{w}}{z-w}\dd\bar{a}+\frac{1}{2}\frac{(\bar{z}-\bar{w})^2}{(z-w)^2}\db \bar{a}\rangle+
%
-\frac{g}{2}\Big(  \frac{\bar{z}-\bar{w}}{(z-w)^2}\cdot \ad(\bar{a}) +
 \frac{\bar{z}-\bar{w}}{z-w}\cdot \ad(\dd\bar{a}) +
 \frac{1}{2}\frac{(\bar{z}-\bar{w})^2}{(z-w)^2} \cdot \ad(\db\bar{a})
\Big)
 \\ \nonumber
+\frac{g^2}{4}\Big(\frac{\bar{z}-\bar{w}}{z-w}\cdot \ad(a)\ad(\bar{a})-  \frac12 \frac{(\bar{z}-\bar{w})^2}{(z-w)^2}\cdot \ad(\bar{a})\ad(\bar{a}) \Big) +\reg \\
\dd c(z)\otimes \dd b(w) \sim   \label{dd c  dd b OPE} 
\frac{\idg}{(z-w)^2}+
\\ \nonumber 
+
\frac{g}{2}\Big(\frac{1}{z-w}\cdot \ad(a)+
\frac{\bar{z}-\bar{w}}{z-w}\cdot \ad(\db a)-
\frac{\bar{z}-\bar{w}}{(z-w)^2}\cdot \ad(\bar{a})-\frac12 \frac{(\bar{z}-\bar{w})^2}{(z-w)^2}\cdot\ad(\db \bar{a})\Big)
\\ \nonumber
-\frac{g^2}{4}\Big( 
2 \log|z-w|  \cdot\ad(a)\ad(a) +\frac{\bar{z}-\bar{w}}{z-w}  \cdot\ad(a)\ad(\bar{a}) \\ \nonumber
 + \frac{\bar{z}-\bar{w}}{z-w} \cdot\ad(\bar{a})\ad(a)- \frac12 \frac{(\bar{z}-\bar{w})^2}{(z-w)^2}  \cdot\ad(\bar{a})\ad(\bar{a})
\Big) + \reg
\\
\dd c(z)\otimes  \gamma(w) \sim  \label{dd c  gamma OPE} 
\frac{g}{2}\Big(\frac{1}{z-w}\cdot\ad(c)+\frac{\bar{z}-\bar{w}}{z-w}\cdot\ad(\db c) \Big)-
\\ \nonumber
-\frac{g^2}{4}\Big(
2\log|z-w| \cdot\ad(a)\ad( c)+ \frac{\bar{z}-\bar{w}}{z-w} \cdot\ad(\bar{a})\ad( c) 
\Big)+\reg
\end{gather}
Here %$\theta\in\g^*$, $X\in\g$ are two arbitrary vectors; %\footnote{Our reasons for considering OPEs of \textit{components} of fields here (selected by vectors $\theta,X$), rather than tensor-valued OPEs, are similar to section..: } 
all the fields on the r.h.s. are at $w$. These particular OPEs will be important when studying the stress-energy tensor $T$ and BRST current $J$ as composite fields in Section \ref{ss: composite fields}.
\end{example}

\subsection{Some important OPEs in abelian theory involving $\OO^{(2)}$}\label{sss: ab OPEs with O^2}
Limit $g=0$ of an OPE between two composite fields $\Phi_1$, $\Phi_2$ is given by a sum of Wick contractions of some of the constituent fundamental fields of $\Phi_1$ with some of the constituent fundamental fields of $\Phi_2$, i.e., by Feynman graphs with two vertices corresponding to $\Phi_1$ and $\Phi_2$, with no interaction vertices and with loose half-edges allowed. Short loops are not allowed (which corresponds to the assumption that $\Phi_1$, $\Phi_2$ are renormalized/normally ordered\footnote{
An implicit assumption here is that the order in which fundamental fields (or their derivatives) are merged when building $\Phi_1,\Phi_2$ is such that $g=0$ limit coincides with the usual normal ordering prescription in a free theory; one can always choose such an order, see Section \ref{sss: order of collapse ambiguity}.
}). Such OPEs in abelian $BF$ theory were studied in \cite{LMY}. Here, for the study of non-abelian theory as a deformation of the abelian one, we are interested in several OPEs involving the deforming observable $\OO^{(2)}$.

As an example, consider the OPE $\OO^{(2)}(z)\, \OO^{(2)}(w)$ in the free theory. We  have the following
diagrams: %\marginpar{also a diagram with $z$ and $w$ exchanged}
$$ \vcenter{\hbox{\includegraphics[scale=0.5]{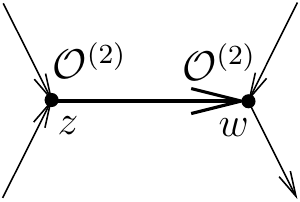}}}\qquad, \qquad 
\vcenter{\hbox{\includegraphics[scale=0.5]{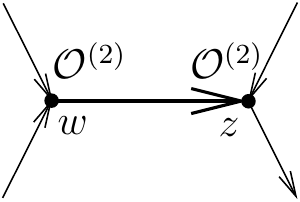}}}\qquad,\qquad
\vcenter{\hbox{\includegraphics[scale=0.5]{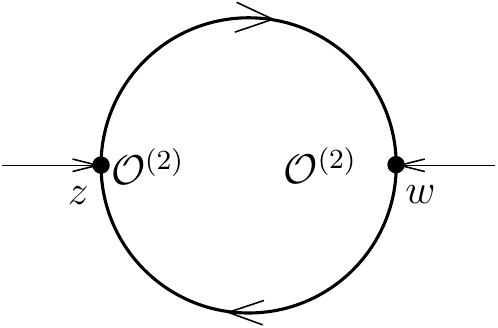}}}
$$
%(Here the thick lines are the edges and thin lines are the external half-edges.) 
%first diagram is accompanied by a similar one, with the roles of $z$ and $w$ exchanged.) 
The singular contribution of the first one, taking into account all possible decorations of half-edges is:
$$\Big(\frac12 \lan  B , [A,[A,A]] \ran + \Big\langle -*db, [A,[A,c]]  + \frac12 [c,[A,A]] \Big\rangle\Big)\; 2\, d\mr{arg}(z-w) = 0 $$
-- vanishes due to Jacobi identity in $\g$. The second diagram is similar. Contribution of the third diagram vanishes by boson-fermion cancellation in a loop mechanism. Thus, the free theory OPE is trivial:
\begin{equation}\label{O^2 O^2 ab OPE}
\OO^{(2)}(z)\, \OO^{(2)}(w)  \underset{g=0}{\sim } \reg
\end{equation}

Next, consider the free theory OPE of the (abelian) stress-energy tensor $T_0=\lan a,\dd \gamma \ran+\lan \dd b, \dd c \ran$ with $\OO^{(2)}$. We have the following contributing diagrams:
$$ \vcenter{\hbox{\includegraphics[scale=0.5]{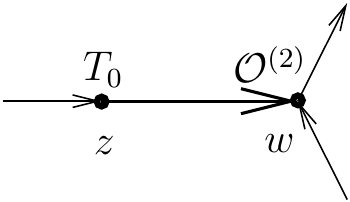}}}\qquad, \qquad
\vcenter{\hbox{\includegraphics[scale=0.5]{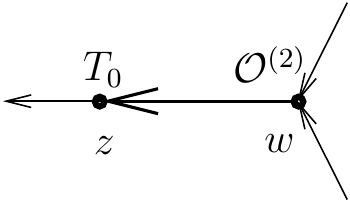}}} \qquad,\qquad
\vcenter{\hbox{\includegraphics[scale=0.5]{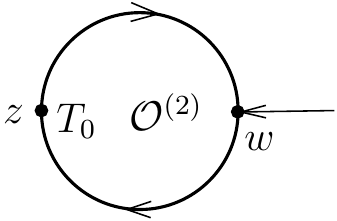}}}
$$
Here the last diagram gives
$$ 4\,d^2 w\,\frac{ \mr{tr}_\g\mr{ad}(\bar{a}(w))}{(z-w)^3} $$
-- a potential third order pole contribution to the OPE, which vanishes due to unimodularity. Thus, the OPE is given by the first two graphs, which yield
\begin{equation}\label{T_0 O^2 ab OPE}
T_0(z)\, \OO^{(2)}(w) \underset{g=0}{\sim } \frac{\OO^{(2)}(w)}{(z-w)^2}+\frac{\dd \OO^{(2)}(w)}{z-w}+\reg
\end{equation}
Together with the complex conjugate OPE, this implies that $\OO^{(2)}$ is a primary field of conformal dimension $%(h,\bar{h})=
(1,1)$ in the abelian theory.\footnote{If $z$ and $w$ are allowed to collide,   %must be corrected by 
one must include an additional
contact term in the OPE (\ref{T_0 O^2 ab OPE}), see (\ref{T_0 O^2 with contact term}) below.}

As another example, consider the OPE of $\OO^{(2)}$ with the ``soaking field'' $\tsoak=\delta(\gamma)\delta(\bar\gamma)\delta(b)$ which appeared in Section \ref{sec: sphere}. We have:
\begin{multline}
\OO^{(2)}(z) \tDelta(w) \underset{g=0}{\sim } -2 d^2 z\, \Big(\frac{1}{|z-w|^2}  \langle(\gamma-\bar\gamma)(z),[\frac{\dd}{\dd\gamma},\frac{\dd}{\dd\bar\gamma}]\rangle\, \tDelta(w)+\\
+2\frac{\log|z-w|}{z-w}\langle \db b(z),[\frac{\dd}{\dd\gamma},\frac{\dd}{\dd b}]\rangle\, \tDelta(w) + c.c.
%2\frac{\log|z-w|}{\bar{z}-\bar{w}}\dd b_a(z)\;\big(\frac{\dd^2}{\dd\bar\gamma_b \dd b_c}\tDelta\big)(w)
 \Big) + O\big(\frac{1}{|z-w|}\big) 
\end{multline}
Here $c.c.$ stands for the complex conjugate of the second term. Note that the first term contains $\frac{1}{|z-w|^2}$ (coming from Wick contractions of $a,\bar{a}$ from $\OO^{(2)}$ with $\gamma,\bar{\gamma}$ from $\tDelta$) times a sum of two expressions vanishing as $(z-w)$ and as $(\bar{z}-\bar{w})$ respectively -- these zeroes arise from $\gamma(z) \delta(\gamma(w))$ and $\bar\gamma(z) \delta(\bar\gamma(w))$. Therefore the worst singularities in this OPE are in fact $\frac{\log|z-w|}{z-w}$ and $\frac{\log|z-w|}{\bar{z}-\bar{w}}$ coming from the second term and its complex conjugate -- these terms arise from the pair of Wick contractions of $a,c$ from $\OO^{(2)}$ with $\gamma,b$ from $\tDelta$ and the conjugate situation. In particular, this OPE has an integrable singularity %for integration over
in $z$ at $z=w$. 

By a similar argument, the OPE of $\OO^{(2)}$ with the second ``soaking field'' $\delta(c)$ behaves as $O(1)$:
\begin{equation}
\OO^{(2)}(z)\, \delta(c(w)) \underset{g=0}{\sim } -2 d^2 z  \Big( \frac{\bar{z}-\bar{w}}{z-w}\; \langle [\bar{a}, \db c], \frac{\dd}{\dd c} \delta(c)\rangle\, (w) + c.c. \Big) \; +\reg
\end{equation}

\subsection{A remark on OPEs of composite fields}\label{sss: a remark on OPEs of composite fields}
%\marginpar{move to sec 2.6?}
Consider the OPE $\Phi_1(z)\Phi_2(w)$  for $\Phi_1, \Phi_2$ two composite fields. It is given, according to the general principle, as a sum of Feynman graphs  $\gm$ with leaves satisfying properties (\ref{OPE subgraph (i)}), (\ref{OPE subgraph (ii)}) above. Part of the contributions come from branch graphs connecting one constituent fundamental field (or derivative of a fundamental field) $\phi_1$ from $\Phi_1$ and one (derivative of) fundamental field $\phi_2$ from $\Phi_2$ (one can think of such a contribution as a ``dressed Wick contraction'' of $\phi_1(z)$ and $\phi_2(z)$). Let us call the sum of these diagrams the ``tree part'' of the OPE, $[\Phi_1(z)\Phi_2(w)]_\mr{tree}$ -- it is readily calculated from OPEs of (derivatives of) fundamental fields.  

Generally, in addition to tree diagrams there are loop diagrams with $l\geq 1$ loops. Let us focus on the case when $\Phi_1$ and $\Phi_2$ are at most linear in fields $b, \gamma, \bar\gamma$ or their derivatives (fields of $\mc{AB}$-charge $+1$). This case is of particular relevance, since several important composite fields in the theory -- $G$, $T$, $J$, $\OO^{(2)}$ -- have this property. Under this assumption, OPE $\Phi_1(z)\Phi_2(w)$ cannot contain diagrams with $\geq 2$ loops but can contain $1$-loop diagrams of form
\begin{equation}\label{1-loop graphs for Phi_1 Phi_2 OPE linear in B}
\vcenter{\hbox{\includegraphics[scale=0.5]{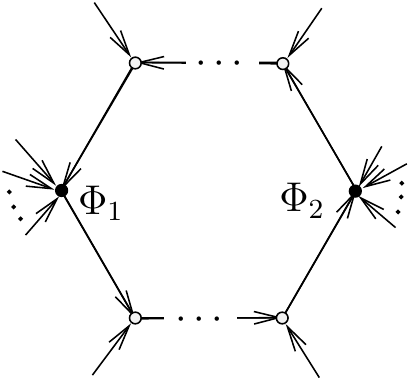}}}\quad, \quad 
\vcenter{\hbox{\includegraphics[scale=0.5]{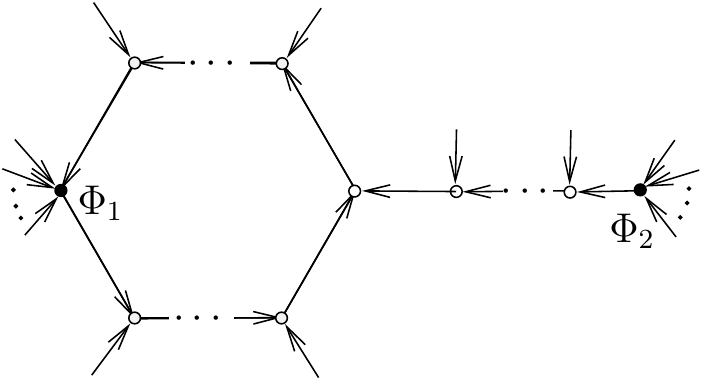}}}
\quad, \quad 
\vcenter{\hbox{\includegraphics[scale=0.5]{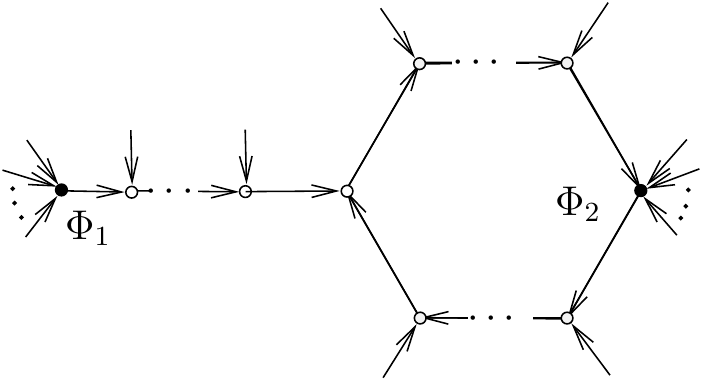}}}
\end{equation}
%Here in the diagram on the right, the roles of $\Phi_1$ and $\Phi_2$ can be exchanged; 
Note that diagrams where all vertices in the loop are $\OO^{(2)}$ cancel out by Lemma \ref{lemma: boson-fermion cancellation}.

In case when $\Phi_1$ is linear in fields $b, \gamma, \bar\gamma$ or derivatives while $\Phi_2$ does not contain them, we have 
\begin{equation}\label{OPE = tree OPE}
\Phi_1(z)\Phi_2(w) = [\Phi_1(z)\Phi_2(w)]_\mr{tree}
\end{equation}

%\marginpar{Give an example of $(a\gamma)_z \bar\gamma_w$ ?}

\section{%Composite fields. Examples of correlators
Composite fields
} \label{ss: composite fields}

\subsection{%Order-of-merging ambiguity in composite fields
Building composite fields via renormalized products. Order-of-merging ambiguity. 
}\label{sss: order of collapse ambiguity}
Given two fields $\Phi_1,\Phi_2$, we define their renormalized product by the prescription (\ref{normal ordering}):
\begin{equation}\label{renormalized product Phi1 Phi2}
 (\Phi_1\Phi_2)(z)=\lim_{z'\ra z}\left(\Phi_1(z')\Phi_2(z)- \big[\Phi_1(z')\Phi_2(z) \big]_\mr{sing}\right) =\til{\lim_{z'\ra z}}\Phi_1(z')\Phi_2(z) 
\end{equation}
where we introduced the notation $\til\lim$ meaning ``subtract the singularity, then take the limit.'' 

Generally, fields $\Phi_1$, $\Phi_2$ have an OPE of the form
\begin{equation}\label{Phi_1(z') Phi(z) OPE general}
 \Phi_1(z') \Phi_2(z) \sim \sum_{p,q,r} \sigma_{pqr}(z'-z)
%(z'-z)^{-p}(\bar{z}'-\bar{z})^{-q}\log^r|z'-z| 
\, \til\Phi_{pqr}(z)+\reg 
\end{equation}
with 
\begin{equation}\label{sigma_pqr}
\sigma_{pqr}(z'-z)=(z'-z)^{-p}(\bar{z}'-\bar{z})^{-q}\log^r|z'-z|
\end{equation}
where the sum is over $p,q,r$ with $p+q\geq 0, r\geq 0$ and $(p,q,r)\neq (0,0,0)$. The singular subtraction $\big[\Phi_1(z')\Phi_2(z) \big]_\mr{sing}$ in (\ref{renormalized product Phi1 Phi2}) is defined uniquely as the r.h.s. of (\ref{Phi_1(z') Phi(z) OPE general}) without ``$\reg$'' term. 

\begin{remark}%\marginpar{phrase better?}
Note that the singular subtraction in (\ref{renormalized product Phi1 Phi2}) is defined 
with respect to a local coordinate $z$, using the explicit basis (\ref{sigma_pqr}). % depending on the local coordinate $z$.
In other words, renormalized product is not a diffeomorphism-invariant operation. In Section \ref{ss: vertex operators} we will see how this coordinate-dependence of the subtraction may lead to a nontrivial scaling behavior of composite fields.
\end{remark}

If instead of having $\Phi_1$ approach $\Phi_2$ in (\ref{renormalized product Phi1 Phi2}), we do the opposite and make $\Phi_2$ approach $\Phi_1$, we can get a different finite part! For instance, if 
$$ \Phi_1(z') \Phi_2(z) \sim \frac{\til\Phi(z)}{z'-z}+\Psi(z)+o(1)_{z'\ra z} = \frac{\til\Phi(z')}{z'-z}-\dd\til\Phi(z')+\Psi(z')+o(1)_{z\ra z'} $$
Then merging $\Phi_1$ with $\Phi_2$ yields $\Psi$ while merging $\Phi_2$ with $\Phi_1$ yields a different field $\Psi-\dd\til\Phi$. That is, we have an \textit{order-of-merging ambiguity}
\begin{equation}\label{ambiguity 1st order pole}
 \til{\lim_{z'\ra z}}\Phi_1(z')\Phi_2(z) - \til{\lim_{z'\ra z}}\Phi_1(z)\Phi_2(z') = \dd\til\Phi(z) 
\end{equation}
given by the derivative of the residue in the OPE between the constituent fields $\Phi_1$ and $\Phi_2$.

As an explicit example of this phenomenon, already in free (abelian) theory, at $g=0$, we have 
\begin{multline}\label{ambiguity a (gamma gamma)}
\til{\lim_{z'\ra z}}\langle a(z'),\theta\rangle  \langle (\gamma\otimes\gamma)(z),X\otimes Y\rangle-  \til{\lim_{z'\ra z}}\langle a(z),\theta\rangle  \langle (\gamma\otimes\gamma)(z'),X\otimes Y\rangle=
\\
= \langle \theta,X \rangle\, \langle \dd \gamma(z),Y\rangle + \langle \theta,Y\rangle\, \langle \dd \gamma(z),X\rangle
\end{multline}
Here $\theta\in\g^*$, $X,Y\in\g$ are arbitrary fixed vectors in the coefficient space. Note that the first term on the left corresponds to the standard normal ordering prescription in free theory -- the field $\langle :a\otimes \gamma\otimes  \gamma:\,,\theta\otimes X\otimes Y \rangle$ -- e.g., its correlator with a test field $a(x)$ is vanishing, while it is nonvanishing for the second term on the l.h.s.  
Note that in non-abelian theory, for $g\neq 0$, the result (\ref{ambiguity a (gamma gamma)}) still holds: although the OPE $a(z')(\gamma\gamma)(z)$ acquires an additional term $O(\frac{\bar{z}'-\bar{z}}{z'-z})$, it does not contribute to the ambiguity.

As another example, we have 
\begin{equation}\label{ambiguite (da) bargamma}
 \til {\lim_{z'\ra z}}\dd a(z')\otimes \bar\gamma(z) -  \til {\lim_{z'\ra z}}\dd a(z)\otimes\bar\gamma(z')= \frac{g}{2} \mr{ad}( \dd a(z) )
\end{equation}

\begin{comment}
In the most general case, fields $\Phi_1$, $\Phi_2$ have an OPE of the form
\begin{equation}\label{Phi_1(z') Phi(z) OPE general}
 \Phi_1(z') \Phi_2(z) \sim \sum_{p,q,r} \sigma_{pqr}(z'-z)
%(z'-z)^{-p}(\bar{z}'-\bar{z})^{-q}\log^r|z'-z| 
\, \til\Phi_{pqr}(z)+\reg 
\end{equation}
with 
\begin{equation}\label{sigma_pqr}
\sigma_{pqr}(z'-z)=(z'-z)^{-p}(\bar{z}'-\bar{z})^{-q}\log^r|z'-z|
\end{equation}
where the sum is over $p,q,r$ with $p+q\geq 0, r\geq 0$ and $(p,q,r)\neq (0,0,0)$. 
\end{comment}

One has the following generalization of (\ref{ambiguity 1st order pole}) for a general pair $\Phi_1,\Phi_2$, obtained by the same logic.
\begin{lemma}
For $\Phi_1,\Phi_2$ any pair of composite fields with OPE given by  (\ref{Phi_1(z') Phi(z) OPE general}), the order-of-merging ambiguity in the product $\Phi_1\Phi_2$ is:
\begin{equation}\label{ambiguity - general}
 \til{\lim_{z'\ra z}}\Phi_1(z')\Phi_2(z) - \til{\lim_{z'\ra z}}\Phi_1(z)\Phi_2(z') =\sum_{p,q\geq 0,\, (p,q)\neq (0,0)} \frac{(-1)^{p+q-1}}{p!q!}\, \dd^p \db^q \til\Phi_{pq0}(z)
\end{equation}
\end{lemma}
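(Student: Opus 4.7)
The plan is to express both limits in the common basis $\{\sigma_{p'q'r}(z'-z)\cdot(\text{field at }z)\}$, at which point the difference can simply be read off.

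For the first limit, setting $(u,v)=(z',z)$ in the given OPE (\ref{Phi_1(z') Phi(z) OPE general}) already puts the subtraction in this form, so $\til\lim_{z'\to z}\Phi_1(z')\Phi_2(z)=R(z,z)$, where $R(u,v):=\Phi_1(u)\Phi_2(v)-\sum_{(p,q,r)\neq 0}\sigma_{pqr}(u-v)\til\Phi_{pqr}(v)$ is the continuous remainder. For the second limit, I would substitute $(u,v)=(z,z')$, which puts the coefficients $\til\Phi_{pqr}$ at the moving point $z'$, and then Taylor-expand each around the limit point $z$:
\[
\til\Phi_{pqr}(z')=\sum_{k,l\geq 0}\frac{(z'-z)^k(\bar z'-\bar z)^l}{k!l!}\dd^k\db^l\til\Phi_{pqr}(z).
\]
Using the elementary identity $\sigma_{pqr}(z-z')\cdot(z'-z)^k(\bar z'-\bar z)^l=(-1)^{p+q}\sigma_{p-k,q-l,r}(z'-z)$, the entire singular expansion of $\Phi_1(z)\Phi_2(z')$ becomes a double sum of $\sigma_{p'q'r}(z'-z)$ times derivatives of $\til\Phi$ at $z$. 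Each resulting term falls into exactly one of three classes: (a) singular, which is subtracted; (b) regular and vanishing at $z'=z$, which drops out in the limit; or (c) the constant basis element $\sigma_{000}=1$, which is neither singular nor vanishing and so contributes a constant field to the limit.

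The key combinatorial observation is that class (c) arises precisely when $p-k=q-l=r=0$, i.e.\ $k=p$, $l=q$ and $r=0$; combined with $k,l\geq 0$ this forces $p,q\geq 0$, and $(p,q,r)\neq(0,0,0)$ forces $(p,q)\neq(0,0)$. The associated coefficient is $\frac{(-1)^{p+q}}{p!q!}\dd^p\db^q\til\Phi_{pq0}(z)$, so
\[
\til\lim_{z'\to z}\Phi_1(z)\Phi_2(z')=R(z,z)+\sum_{(p,q)\neq(0,0),\ p,q\geq 0}\frac{(-1)^{p+q}}{p!q!}\dd^p\db^q\til\Phi_{pq0}(z),
\]
and subtracting this from the first limit (where the $R(z,z)$'s cancel) yields the stated formula with overall sign $(-1)^{p+q-1}$.

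The main obstacle is really just verifying the class-(c) characterization: one must check that no other combination of $(p,q,r,k,l)$ produces a constant. Terms with $r\geq 1$ always carry a surviving $\log|z'-z|$ factor, which is either singular (class (a), when the polynomial prefactor does not vanish) or beaten by a positive power of $|z'-z|$ (class (b)); and a strictly negative $p$ or $q$ in the original OPE sum can never be neutralized by non-negative $k$ or $l$. As a sanity check, applied to $\dd a(u)\bar\gamma(v)\sim\frac{g}{2(u-v)}\ad(a(v))+\reg$ the formula yields $\dd\bigl[\frac{g}{2}\ad(a(z))\bigr]=\frac{g}{2}\ad(\dd a(z))$, reproducing (\ref{ambiguite (da) bargamma}).
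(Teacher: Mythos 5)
Your proof is correct and follows exactly the route the paper intends: the paper states the general formula as being "obtained by the same logic" as the first-order-pole case (\ref{ambiguity 1st order pole}), namely re-expanding the OPE with the coefficient fields Taylor-expanded at the other insertion point and comparing the constant terms, which is precisely your class-(a)/(b)/(c) bookkeeping with the identity $\sigma_{pqr}(z-z')(z'-z)^k(\bar z'-\bar z)^l=(-1)^{p+q}\sigma_{p-k,q-l,r}(z'-z)$. The sign and the restriction to $r=0$, $p,q\ge 0$ come out as in (\ref{ambiguity - general}), and your check against (\ref{ambiguite (da) bargamma}) is consistent.
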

%In the most general case the OPE between $\Phi_1,\Phi_2$ is given by (\ref{Phi_1(z') Phi(z) OPE general}) and, by the same logic as in (\ref{ambiguity 1st order pole}), the ambiguity in the product $\Phi_1\Phi_2$ is:
%\begin{equation}\label{ambiguity - general}
%\til{\lim_{z'\ra z}}\Phi_1(z')\Phi_2(z) - \til{\lim_{z'\ra z}}\Phi_1(z)\Phi_2(z') =\sum_{p,q\geq 0,\, (p,q)\neq (0,0)} \frac{1}{p!q!}\, \dd^p \db^q \til\Phi_{pq0}(z)
%\end{equation}
Note that terms in the OPE involving logarithms or involving positive powers, like $\frac{\bar{z}'-\bar{z}}{z'-z}$, do not contribute to the ambiguity.

%\begin{remark} 
%The discussion above was about the failure of \textit{commutativity} of the merging procedure used to build composite fields. \\
%\begin{question}
\noindent \textbf{Open question.}
%Instead of \textit{associativity} we expect that 
Does the \textit{pre-Lie algebra} identity hold
\begin{multline}\label{pre-Lie identity}
  \Phi_1*_R (\Phi_2 *_R \Phi_3 )- (-1)^{|\Phi_1|\cdot|\Phi_2|}\Phi_2*_R (\Phi_1 *_R \Phi_3 ) 
  %\stackrel{\mr{conjecture}}{=} 
  \\
  \stackrel{?}{=}
(\Phi_1*_R \Phi_2 - (-1)^{|\Phi_1|\cdot|\Phi_2|}\Phi_2 *_R \Phi_1)*_R \Phi_3
\end{multline}
for any triple $\Phi_1,\Phi_2,\Phi_3$? Here we denoted $(\phi*_R \psi)(z)=\displaystyle{\til{\lim_{z'\ra z}}}\phi(z')\psi(z)$ the renormalized product merging the left factor onto the right factor; $|\phi|$ is the ghost number of the field $\phi$. Note that the field $1$ serves as left- and right-unit for the product $*_R$. Identity (\ref{pre-Lie identity}) holds in any \textit{chiral} CFT, see Appendix 6.C in \cite{DMS}; of course, our case of non-abelian $BF$ is non-chiral and we cannot use that result.
%\end{question}

The following is a special case of (\ref{pre-Lie identity}) which easy to prove independently;  we will need it for our analysis of conservation laws under the correlator in Section \ref{ss: quantum conservation laws}.
\begin{lemma}\label{lemma: merging A-fields onto a B-field}
Let $\Phi_1,\ldots,\Phi_n$ be a collection of fundamental fields (or their derivatives) of $\mc{AB}$-charge $-1$  and $\Psi$ a fundamental field (or derivative) of $\mc{AB}$-charge $+1$. Then
\begin{equation}\label{merging several A-type fields onto a B-type field}
\til{\lim_{z_1\ra z}} \cdots \til{\lim_{z_n\ra z}}  \Phi_1(z_1)\cdots \Phi_n(z_n)\Psi(z)=
\til{\lim_{z'\ra z}}\big(\Phi_1\cdots\Phi_n \big)(z') \Psi(z)
\end{equation}
In particular, the resulting composite field is independent of the order in which one merges fields $\Phi_i$ onto $\Psi$. Field $(\Phi_1\cdots\Phi_n)$ appearing in the r.h.s. is independent of the order of merging, since fields $\Phi_i$ have regular OPE with each other.
\end{lemma}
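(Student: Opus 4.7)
The plan is to exploit the fact that all mutual OPEs among charge-$(-1)$ fundamental fields and their derivatives lie in $\mr{reg}^{(\infty)}$, as summarized in (\ref{regular OPEs}). Feynman-graphically, an OPE subgraph with two charge-$(-1)$ external legs cannot be built out of the interaction vertex (which has two $A$-type legs and one $B$-type leg) together with the $\langle AB\rangle$-propagators, so no $1$PI subgraph contributes to such an OPE. Consequently, $\Phi_1(z_1)\cdots\Phi_n(z_n)$ admits a joint smooth extension to the small diagonal $z_1=\cdots=z_n=z'$; the value on the diagonal is independent of the order of approach and provides the composite $(\Phi_1\cdots\Phi_n)(z')$ on the right-hand side of (\ref{merging several A-type fields onto a B-type field}) without any singular subtraction.

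Next I would rewrite the left-hand side as a single joint $\til\lim$ along the diagonal. For each $k$, the subtraction inside $\til\lim_{z_k\to z}\Phi_k(z_k)\Psi(z)$ depends only on the $z_k\to z$ OPE of $\Phi_k$ with $\Psi$; multiplication by fields sitting at fixed positions $z_i\neq z$ commutes with it, so each intermediate $\til\lim$ can be moved freely across the other $\Phi_i$'s. Iterating, the left-hand side becomes the joint $\til\lim$ of $\Phi_1(z_1)\cdots\Phi_n(z_n)\Psi(z)$ as the tuple $(z_1,\ldots,z_n)$ approaches $z$ along the diagonal.

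The crucial step is to equate this joint-diagonal $\til\lim$ with $\til\lim_{z'\to z}(\Phi_1\cdots\Phi_n)(z')\Psi(z)$. Expanding $\Phi_1(z_1)\cdots\Phi_n(z_n)\Psi(z)$ jointly in the variables $z_i-z$ with coefficients that are composite fields at $z$, the first step implies that every coefficient of a genuinely singular basis element from (\ref{sigma_pqr}) comes from a $1$PI subgraph containing the $\Psi$-vertex; the charge-$(-1)$ fields contribute only smoothly. Restricting to the diagonal $z_1=\cdots=z_n=z'$ collapses this joint OPE into the OPE of $(\Phi_1\cdots\Phi_n)(z')$ with $\Psi(z)$, and the singular terms subtracted on both sides agree.

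The hard part of the proof is the combinatorial bookkeeping in the last step: tracking the singular terms of type (\ref{sigma_pqr}) removed at each stage of the iterated procedure and matching them with those removed in the single diagonal subtraction. This check is tractable precisely because no $\Phi_i$-$\Phi_j$ OPE generates a singular term, so the iterated subtractions never produce spurious terms; the ``in particular'' clause on order-independence then follows immediately from the manifest symmetry of the right-hand side in the $\Phi_i$'s.
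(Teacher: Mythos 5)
You have isolated the correct structural fact -- the fields $\Phi_i$ of $\mc{AB}$-charge $-1$ have mutually regular OPEs, and no connected (sub)diagram can join two charge-$(-1)$ external legs -- and your first paragraph correctly disposes of the parenthetical claim that $(\Phi_1\cdots\Phi_n)$ needs no subtraction. But the proof stops exactly where the work is. The assertion that ``each intermediate $\til{\lim}$ can be moved freely across the other $\Phi_i$'s'' so that the iterated limit ``becomes the joint $\til{\lim}$ along the diagonal'' is essentially the statement of the lemma, not a step towards it: in the iterated procedure the spectators $z_i$ sit at fixed finite distance from $z$ while $z_k\ra z$, and at the $k$-th stage the subtraction is by definition the singular part of the OPE of $\Phi_k$ with the already-built composite $(\Phi_{k+1}\cdots\Phi_n\Psi)$, whose singular coefficients are themselves composite fields at $z$ that feed into the later subtractions. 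Matching this nested family of subtractions against the single subtraction on the right-hand side is precisely what you defer as ``combinatorial bookkeeping ... tractable''; as written this is asserted, not proven, so the argument is an outline with a genuine gap at its central step.

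The paper closes this gap by probing both sides with a correlator against arbitrary test fields and using the global structure of the admissible diagrams rather than bookkeeping of OPE subtractions. Every contributing diagram is a disjoint union of binary trees, each with a unique root carrying $\mc{AB}$-charge $-1$; hence no connected component can contain two of the $\Phi_i$'s, and a singularity can only couple some single $z_i$ to $z$. For $n=2$ this forces the correlator to have the form
\begin{equation*}
F(z_1,z_2,z)=\sum_k G_k(z_1,z)\,H_k(z_2)+\sum_l \til{G}_l(z_2,z)\,\til{H}_l(z_1)+K(z_1,z_2,z),
\end{equation*}
with $K$ regular at all collisions among $z_1,z_2,z$, the functions $H_k,\til{H}_l$ regular, and all singular behavior confined to the two-variable factors $G_k,\til{G}_l$. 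Applied to this expression, the iterated mergings and the diagonal merging manifestly remove the same singular pieces and both yield $\sum_k \til{\lim}\,G_k\cdot H_k+\sum_l \til{\lim}\,\til{G}_l\cdot\til{H}_l+K(z,z,z)$; the case of general $n$ is the same. Your Feynman-graphic observation about the absence of subgraphs with two charge-$(-1)$ external legs is exactly what justifies this factorization, so your outline can be completed by replacing the deferred bookkeeping with this factorized form of the correlator.
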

\begin{proof}
We give a proof for the case $n=2$; the case of general $n$ is similar.  Consider the correlator 
$$F(z_1,z_2,z;x_1,\ldots,x_m)=\lan \Phi_1(z_1) \Phi_2(z_2) \Psi(z) \phi(x_1)\ldots \phi(x_m) \ran $$
with $\{\phi_i\}$  an arbitrary collection of test fields. The correlator is a sum of
\begin{enumerate}
\item diagrams where $\Phi_1$ and $\Psi$ belong to the same tree and $\Phi_2$ belongs to another tree,
\item diagrams where $\Phi_2$ and $\Psi$ belong to the same tree and $\Phi_1$ belongs to another one,
\item diagrams where $\Phi_1$, $\Phi_2$ and $\Psi$ belong to $3$ different trees.
\end{enumerate}  
Thus, the correlator has the following structure:
\begin{equation}\label{F(z_1,z_2,z)}
F(z_1,z_2,z) = \sum_{k}G_k(z_1,z) H_k(z_2) +\sum_{l}\til{G}_l(z_2,z) \til{H}_l(z_1) + K(z_1,z_2,z)
\end{equation}
where $K$ has no singularities when any pair among $z_1,z_2,z$ collides; we are suppressing the dependence on $x_1,\ldots,x_m$ in the notation. Merging first $z_2$ onto $z$ and then $z_1$ onto $z$, we obtain
$$\til{\lim_{z_1\ra z}}\,\til{\lim_{z_2\ra z}}F(z_1,z_2,z) =  \sum_{k}\til{\lim_{z_1\ra z}}G_k(z_1,z)\,H_k(z) + \sum_{l}\til{\lim_{z_2\ra z}} \til{G}_l(z_2,z) \,\til{H}_l(z)+K(z,z,z)  $$
%Taking the limits in the opposite order, we clearly obtain the same result. 
Setting $z_1=z_2=z'$ in (\ref{F(z_1,z_2,z)}) and then evaluating $\displaystyle{\til{\lim_{z'\ra z}}}$, we obtain the same result. Thus, we checked (\ref{merging several A-type fields onto a B-type field}) for $n=2$ by probing both sides by a correlator with a collection of  test fields.
\end{proof}

The derivative of a renormalized product is defined in the natural way:
\begin{equation}\label{derivative of a renormalized product}
\dd \left(\til{\lim_{z'\ra z}} \Phi_1(z')\Phi_2(z)\right) = \til{\lim_{z'\ra z}} \Big(\dd \Phi_1(z')\Phi_2(z)+\Phi_1(z')\dd\Phi_2(z)\Big)
\end{equation}
%More generally, given a composite field $\Phi=(\phi_1\cdots\phi_n)_\mu$ with an order of merging $\mu$, we have the induced (same) order of merging for the derivative $\dd\Phi=\sum_k (\phi_1\cdots \dd \phi_k\cdots \phi_n)_\mu$.
and similarly for $\db$ of a product. Here it is crucial that the terms on the right, arising from Leibnitz rule, respect the order of merging in the product $\Phi_1\Phi_2$ we take the derivative of. The following property is immediate from this definition.
%Next, we can ask about order-of-merging ambiguity in a derivative of a renormalized product.
\begin{lemma}\label{lemma: ambiguity of a derivative}
Given two fields $\Phi_1,\Phi_2$, we have
\begin{multline}\label{ambiguity of a derivative}
\til{\lim_{z'\ra z}} \left(\dd \Phi_1(z')\Phi_2(z)+\Phi_1(z')\dd\Phi_2(z)\right) - 
\til{\lim_{z'\ra z}} \left(\dd \Phi_1(z)\Phi_2(z')+\Phi_1(z)\dd\Phi_2(z')\right)  \\
=
\dd \left(\til{\lim_{z'\ra z}} \Phi_1(z')\Phi_2(z) - \til{\lim_{z'\ra z}} \Phi_1(z)\Phi_2(z') \right)
\end{multline}
i.e., the ambiguity in the derivative $\dd(\Phi_1\Phi_2)=\dd\Phi_1\,\Phi_2+\Phi_1\,\dd\Phi_2$ is the derivative of the ambiguity of the product $\Phi_1\Phi_2$.
%\footnote{Here we emphasize that, when regularizing the expression $\dd\Phi_1\,\Phi_2+\Phi_1\,\dd\Phi_2$, we are putting $\Phi_1, \dd\Phi_1$ at the same point ($z'$ or $z$) and $\Phi_2,\dd\Phi_2$ at the other point ($z$ or $z'$, respectively). We do not allow the mixed regularization $\displaystyle{\til{\lim_{z'\ra z}}} \left(\dd \Phi_1(z')\Phi_2(z)+\Phi_1(z)\dd\Phi_2(z')\right)$ which could introduce additional ambiguities.} 
The same holds if we replace $\dd$ with $\db$.
\end{lemma}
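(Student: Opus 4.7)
The plan is to apply the definition (\ref{derivative of a renormalized product}) of the derivative of a renormalized product to each of the two composite fields $\til\lim_{z'\ra z}\Phi_1(z')\Phi_2(z)$ and $\til\lim_{z'\ra z}\Phi_1(z)\Phi_2(z')$ appearing on the right-hand side of (\ref{ambiguity of a derivative}), and then subtract. Concretely, the first term on the r.h.s.\ of (\ref{ambiguity of a derivative}) equals
\begin{equation*}
\dd\left(\til{\lim_{z'\ra z}}\Phi_1(z')\Phi_2(z)\right)
=\til{\lim_{z'\ra z}}\Big(\dd\Phi_1(z')\Phi_2(z)+\Phi_1(z')\dd\Phi_2(z)\Big)
\end{equation*}
by (\ref{derivative of a renormalized product}) verbatim, which is exactly the first term on the l.h.s.

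For the second term, the product $\til\lim_{z'\ra z}\Phi_1(z)\Phi_2(z')$ is by construction the renormalized product in which $\Phi_2$ at the varying point $z'$ is merged onto $\Phi_1$ at the stationary point $z$; the resulting composite field lives at $z$, so $\dd$ still differentiates with respect to $z$. The very same Leibniz prescription that defines (\ref{derivative of a renormalized product}) -- derivatives distribute over the singular subtraction consistently with the order of merging -- then gives
\begin{equation*}
\dd\left(\til{\lim_{z'\ra z}}\Phi_1(z)\Phi_2(z')\right)
=\til{\lim_{z'\ra z}}\Big(\dd\Phi_1(z)\Phi_2(z')+\Phi_1(z)\dd\Phi_2(z')\Big),
\end{equation*}
which is the second term on the l.h.s. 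Subtracting the two identities yields (\ref{ambiguity of a derivative}). The proof for $\db$ is identical with $\dd$ replaced by $\db$ throughout.

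The only point requiring a bit of care -- and therefore the main (mild) obstacle -- is the justification that (\ref{derivative of a renormalized product}) is valid equally for the reversed order of merging. This follows directly from how the singular subtraction acts on $z$-derivatives of the OPE: differentiating the OPE $\Phi_1(z)\Phi_2(z')$ in $z$ commutes, term by term in the basis (\ref{sigma_pqr}), with extracting the singular piece, because derivatives in $z$ of a singular coefficient $\sigma_{pqr}(z'-z)$ remain of the same singular type (a linear combination of $\sigma_{p'q'r'}$'s), so the renormalized limit of the differentiated product is well defined and coincides with the $z$-derivative of the renormalized product. One can alternatively simply take the Leibniz identity in the displayed form as the \emph{definition} of $\dd$ acting on renormalized products in either merging order; under that reading the lemma is an immediate subtraction.
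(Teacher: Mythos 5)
Your proposal is correct and matches the paper's own treatment: the paper declares the lemma ``immediate from this definition'' of the derivative of a renormalized product (\ref{derivative of a renormalized product}), i.e.\ exactly your step of applying the Leibniz prescription in each of the two merging orders and subtracting. Your closing remark that $z$-differentiation commutes with extracting the singular part (since $\dd$ of a coefficient $\sigma_{pqr}(z'-z)$ stays in the singular basis) is the same consistency check the authors use to justify the definition, so nothing is missing.
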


\begin{comment}
\begin{proof}
Starting from the OPE (\ref{Phi_1(z') Phi(z) OPE general}) and taking derivatives of $\Phi_1$, $\Phi_2$, we have
$$ \dd\Phi_1(z')\Phi_2(z) +\Phi_1(z') \dd\Phi_2(z) \sim \sum_{p,q,r}\underbrace{(\dd_{z'}+\dd_z)\sigma_{pqr}(z'-z)}_0 \til\Phi_{pqr}(z) + \sigma_{pqr}(z'-z)\dd \til\Phi_{pqr}(z)
$$
Thus, by (\ref{ambiguity 1st order pole}), the l.h.s. of (\ref{ambiguity of a derivative}) is
$$ \sum_{p,q\geq 0,\, (p,q)\neq (0,0)} \frac{1}{p!q!}\dd^{p+1}\db^q \til\Phi_{pq0}(z)  =
\dd \left(\til{\lim_{z'\ra z}} \Phi_1(z')\Phi_2(z) - \til{\lim_{z'\ra z}} \Phi_1(z)\Phi_2(z') \right) $$
\end{proof}
\end{comment}

In summary, we have the following.
\begin{itemize}
\item A composite field built as a renormalized product of several fundamental fields (or their derivatives) 
$$ (\phi_1\cdots \phi_n)_{\mu} $$
must be decorated with order-of-merging data $\mu$, prescribing in which fields merge onto which and in what order. Generally, such data can be given by a planar binary rooted tree with $n$ leaves decorated by some permutation $\sigma$ of $\phi_1,\ldots,\phi_n$, where at each vertex the left incoming field merges onto the right one (as a possible convention). %\marginpar{picture} 
$$\vcenter{\hbox{\includegraphics[scale=0.65]{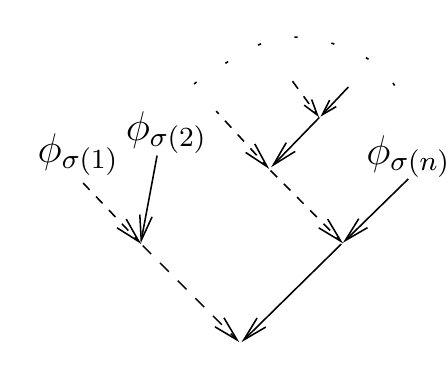}}}$$
Here a solid incoming edge at a vertex represents the field onto which the merging occurs.
% we denote the field onto which the merging at a vertex occurs by a solid edge.
\item As a special case of order-of-merging data, one may pick one of $\phi_k$'s as a ``base'' and consecutively merge other fields onto it. 
%By lemma \ref{lemma: associativity of merging}, the result is independent of the order in which it is done. 
%Such a procedure yields a renormalized product whose 
The limit $g=0$  of such a renormalized product coincides with the normally ordered product $:\phi_1\cdots\phi_n:$ of the free theory (and in particular is independent of the order in which fields are merged onto the ``base''; at $g\neq 0$ the result can depend on the order).
\begin{comment}
\item If we %adopt the convention that in any word $\phi_1\cdots \phi_n$ the rightmost field is the ``base,'' i.e., all others are merged onto it in some order, then 
assume that for any product $\Phi_1\Phi_2$ the left field merges onto the right field, then any bracketed multiple product is well-defined.  Composite fields then form a non-commutative ``algebra'' $\mathbb{F}_z$ where instead of associativity one has the relation
$$ \Phi_1 (\Phi_2 \Phi_3) = \Phi_2 (\Phi_1 \Phi_3) $$
due to Lemma \ref{lemma: associativity of merging}.\marginpar{This is one of two Novikov's relations. Does the second one work? Do we have a pre-Lie algebra structure on composite fields?} 
Commutators in this ``algebra'' are given by (\ref{ambiguity - general}).
\item The only source of dependence on the order of merging is the occurrence of terms $O\big(\frac{1}{(z'-z)^{p}(z'-z)^{q}}\big)$ with $p,q\geq 0$, $(p,q)\neq (0,0)$ in the OPEs one encounters while building the composite field.
\end{comment}
\item There are many examples of composite fields which turn out to be independent of the order of merging. For instance:
\begin{itemize}
\item The product of any two fields from the list $\{a,\bar{a},\gamma,\bar\gamma,c,b,\dd b, \db b\}$. (However, taking further derivatives can create a dependence on the order, as in (\ref{ambiguite (da) bargamma})).
\item Fields $\OO^{(2)}$, $J$, $G$, $T$ and complex conjugates, see Proposition \ref{prop: vanishing subtractions in G,T,J} below.
\item Expressions vanishing by equations of motion -- left hand sides of (\ref{eom in cx fields}).
%\item Fields $\bar\gamma_{b_1}\cdots\bar\gamma_{b_n}$,  $a^a\bar\gamma_{b_1}\cdots\bar\gamma_{b_n}$, $\bar\gamma_{b_1}\cdots\bar\gamma_{b_n}\gamma_c$ at which we will be looking in section \ref{sss: correlators with bargamma...bargamma}.
\item If a field $\Phi$ is independent of the order of merging, then any derivative $\dd^p\db^q \Phi$ is independent too, by Lemma \ref{lemma: ambiguity of a derivative}, \textit{as long as the order of merging is the same in all terms of $\dd^p\db^q \Phi$ produced by Leibnitz rule}.\footnote{
To illustrate the importance of the last condition, consider the derivative $\dd(a\otimes \bar\gamma)$ of an ordering-independent field $a\otimes \bar\gamma$. If we choose an inconsistent order of merging between the two terms, $\displaystyle{\til{\lim_{z'\ra z}}}\big(\dd a(z')\otimes \bar\gamma(z)-a(z)\otimes\dd\bar\gamma(z')\big)$, then it differs by a defect $\frac{g}{2}\ad(\dd a)$ from the consistent ordering and by twice that defect from the opposite inconsistent one.
}
\end{itemize}
\end{itemize}

\subsection{$G,T,J$ as composite fields}

When we consider fields $G,T,J$ as composite fields, the corresponding singular subtractions miraculously vanish.

Indeed, consider the regularization of the stress-energy tensor by splitting the constituent fields:
\begin{equation}\label{T_split}
T^\mr{split}(z',z)= \langle \dd \gamma(z'), a(z)\rangle + \langle\dd b(z'), \dd c(z)\rangle +\frac{g}{2} \langle \dd b(z'), [a(z), c(z)] \rangle 
\end{equation}
Note that, since the OPE between $a$ and $c$ is regular, we can put them in the same point. The singular part of (\ref{T_split}) at $z'\ra z$, as calculated using the OPEs (\ref{a  dd gamma OPE}), (\ref{dd c  dd b OPE}), is:
\begin{multline}
\big[T^\mr{split}(z',z)\big]_\mr{sing}=\\
=\frac{\dim\g}{(z'-z)^2}+\frac{g^2}{4}\Big(-\frac{\bar{z}'-\bar{z}}{z'-z}\,K(a,\bar{a})+\frac12 
\frac{(\bar{z}'-\bar{z})^2}{(z'-z)^2}\,K(\bar{a},\bar{a})\Big)-\\
-\frac{\dim\g}{(z'-z)^2}+\frac{g^2}{4}\Big( 2\log|z'-z|\,K(a, a) + 2\frac{\bar{z}'-\bar{z}}{z'-z}\,K(a, \bar{a})- \frac12 
\frac{(\bar{z}'-\bar{z})^2}{(z'-z)^2}\,K(\bar{a},\bar{a}) \Big) -\\
-\frac{g^2}{4}\Big(\frac{\bar{z}'-\bar{z}}{z'-z}\,K(a ,\bar{a})+2\log|z'-z|\,K(a, a) \Big)\qquad = 0
\end{multline}
Here $K(X,Y)=\mr{tr}_\g \mr{ad}(X)\mr{ad}(Y)$ is the Killing form; all $O(g)$ terms vanish by unimodularity. All fields on the right are at $z$. Thus, the \textit{total singular subtraction in (\ref{T_split}) vanishes} and the renormalized stress-energy tensor is simply
$$ T(z)= \lim_{z'\ra z} T^\mr{split}(z',z)$$

Likewise, we regularize $J$ as
\begin{equation}
J^\mr{split}(z',z)=  \langle\gamma(z'),\dd c(z)\rangle+g  \langle\gamma(z'), [a(z), c(z)]\rangle-\frac{g}{4} \langle \dd b(z'), [c(z) c(z)]\rangle
\end{equation}
Here the singular subtraction is calculated using (\ref{dd c  gamma OPE}):
\begin{multline}
\big[J^\mr{split}(z',z)\big]_\mr{sing}= -\frac{g^2}{4}\Big(2\log|z'-z|\,K(a, c)+\frac{\bar{z}'-\bar{z}}{z'-z}\,K(\bar{a}, c) \Big)+ \\
+
\frac{g^2}{4}\Big(2 \frac{\bar{z}'-\bar{z}}{z'-z}\,K(\bar{a}, c)+4\log|z'-z|\,K(a, c) \Big)-\\
-\frac{g^2}{4}K\Big(2\log|z'-z|\,K(a, c)+\frac{\bar{z}'-\bar{z}}{z'-z}\,K(\bar{a}, c) \Big)\qquad =0
\end{multline}
The total singular subtraction vanishes again and thus the renormalized $J$ field is just
$$ J(z) =\lim_{z'\ra z} J^\mr{split}(z',z) $$

The case of the field 
$$G(z)=\langle a(z), \dd b(z)\rangle$$ 
is trivial: the OPE between $a$ and $\dd b$ is regular, so we can safely put the fields at the same point. I.e., again we have a vanishing singular subtraction, but in the case of $T,J$ the vanishing was a nontrivial cancellation between subtractions for different terms in the composite field, while for $G$ it vanishes on the nose.

Furthermore, consider the field $\OO^{(2)}$. We regularize it as
\begin{multline*}
\OO^{(2)\mr{split}}(z',z)=\\
=-2 d^2 z  \Big(
\langle (\gamma-\bar\gamma)(z'),[a(z),\bar{a}(z)]\rangle+
\langle\dd b(z'), [\bar{a}(z), c(z)]\rangle+
\langle \db b(z'),[ a(z), c(z)]\rangle
\Big)
\end{multline*}
One finds the singular subtraction to be
\begin{multline*}
\big[\OO^{(2)\mr{split}}(z',z)\big]_\mr{sing} = 
-g\, d^2 z \Big(
\big(\frac{z'-z}{\bar{z}'-\bar{z}}K(a, a)+4\log|z'-z|K(a,\bar{a}) + \frac{\bar{z}'-\bar{z}}{z'-z}K(\bar{a}, \bar{a}) \big)- \\
-\big(2\log|z'-z|K(a,\bar{a})+\frac{\bar{z}'-\bar{z}}{z'-z}K(\bar{a}, \bar{a})\big)-
\big(\frac{z'-z}{\bar{z}'-\bar{z}}K(a, a)+2\log|z'-z|K(a,\bar{a})\big)
\Big)\qquad =0
\end{multline*}

Finally, consider the equations of motion -- left hand sides in (\ref{eom in cx fields}) -- as composite fields. They all have zero singular subtractions on the nose except for the field $\db\gamma+\cdots$ and its complex conjugate. In this case, we have
\begin{multline} 
\Big[ \big\langle \db \gamma(z')-\frac{g}{2}[(\gamma-\bar\gamma)(z'),\bar{a}(z)]-\frac{g}{2} [\db b(z'), c(z)] ,X\big\rangle \Big]_\mr{sing}=\\ =
\frac{g^2}{4}K\Big(\frac{z'-z}{\bar{z}'-\bar{z}}\,a+2\log|z'-z|\,\bar{a},X\Big)-\frac{g^2}{4}K\Big(\frac{z'-z}{\bar{z}'-\bar{z}}\,a+2\log|z'-z|\,\bar{a},X\Big)\qquad =0
\end{multline}
-- and again we have a cancellation for the singular subtraction. Here $X\in\g$ is an arbitrary vector. Thus, left hand sides in (\ref{eom in cx fields}) all have zero singular subtractions as composite fields. 

Note that in all the cases we considered here we did not encounter terms of form $\frac{\til\Phi}{(z'-z)^p(\bar{z}'-\bar{z})^q}$ among the terms in the singular subtractions, with $p,q\geq 0$ and $\til\Phi$ a non-constant field. This implies that all these composite fields are independent of the order of merging.

In summary, we have proved the following.
\begin{proposition}\label{prop: vanishing subtractions in G,T,J}
Fields $G,T,J$ (and their complex conjugates) viewed as composite fields have the following properties:
\begin{enumerate}[(a)]
\item They are independent of the order of merging of the constituent fundamental fields.
\item The total singular subtraction vanishes.
\end{enumerate}
The same applies to $\OO^{(2)}$ and to equations of motion -- left hand sides of (\ref{eom in cx fields}).
\end{proposition}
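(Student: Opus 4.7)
The plan is to verify both properties composite field by composite field, using the explicit OPEs (\ref{a  dd gamma OPE}), (\ref{dd c  dd b OPE}), (\ref{dd c  gamma OPE}) (and their analogues for permuted fundamental constituents). For each composite field $\Phi$, I would pick a \emph{split} regularization $\Phi^\mr{split}(z',z)$ separating the constituent fundamental fields (or their derivatives) between two nearby points, then extract the singular part $[\Phi^\mr{split}(z',z)]_\mr{sing}$ at $z'\to z$ from the relevant OPEs and show that it vanishes as a sum. Property (a), order-independence, would then follow by inspecting the form of the vanishing subtraction: by formula (\ref{ambiguity - general}), order-of-merging ambiguity is produced only by terms of the schematic shape $\frac{\til\Phi(z)}{(z'-z)^p(\bar z'-\bar z)^q}$ with $p,q\geq 0$, $(p,q)\neq (0,0)$, and $\til\Phi$ a nontrivial field. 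I would check that such terms are absent (even though terms like $\log|z'-z|$, $\frac{\bar z'-\bar z}{z'-z}$, or $\frac{z'-z}{\bar z'-\bar z}$ do appear, they contribute nothing to (\ref{ambiguity - general})), so property (a) is a corollary of the cancellations proving (b).

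For $G=\lan a,\dd b\ran$ the OPE $a(z')\,\dd b(z)$ is regular (no admissible branch diagram exists pairing two $\mc{AB}$-charge $-1$ fields), so both properties hold trivially. For $T$ and $J$, I would use the split regularizations (\ref{T_split}) and its $J$-analogue and gather the singular contributions. In each case, the leading $\frac{\idg}{(z'-z)^2}$ poles cancel between the $\gamma$-$a$ term and the $b$-$c$ ghost term (this is the standard boson-fermion cancellation and reflects the fact that $\dim\g$ bosons exactly compensate $\dim\g$ ghost pairs). The $O(g)$ corrections — each of which contains one $\ad$ under a $\mr{tr}_\g$ — vanish by unimodularity of $\g$. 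The $O(g^2)$ contributions organize into Killing-form combinations $K(X,Y)=\mr{tr}_\g\,\ad_X\ad_Y$ with $X,Y\in\{a,\bar a,c\}$ and coefficient functions from the list $\{\log|z'-z|,\ \frac{\bar z'-\bar z}{z'-z},\ \frac{(\bar z'-\bar z)^2}{(z'-z)^2},\ \frac{z'-z}{\bar z'-\bar z}\}$. The precise cancellation then reduces to matching coefficients of these four basis functions separately for each Killing-form pairing; the computations displayed in the excerpt are exactly this bookkeeping, and the sign pattern (in particular the flip coming from ghost statistics of $b,c$ in the second and third summands of $T^\mr{split}$ and $J^\mr{split}$) is what makes the cancellation work.

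For $\OO^{(2)}$ I would split each of the three summands $\lan\gamma-\bar\gamma,[a,\bar a]\ran$, $\lan\dd b,[\bar a,c]\ran$, $\lan\db b,[a,c]\ran$ so that the $\mc{AB}$-charge $+1$ field sits at $z'$ and the commutator at $z$, using the OPEs (\ref{OPE gamma bargamma}), (\ref{OPE db gamma}), (\ref{OPE db bargamma}) and complex conjugates applied to the Lie-bracketed bilinears. The resulting Killing-form contributions cancel by the same mechanism. For the equations of motion (\ref{eom in cx fields}), the LHS of the first two pairs and of the ghost equations have only regular pairwise OPEs between their constituents, so both (a) and (b) are automatic; the only genuine case is the $\db\gamma$ equation (and complex conjugate), where one splits $\db\gamma$, $(\gamma-\bar\gamma)\leftrightarrow \bar a$, and $\db b\leftrightarrow c$ and observes a two-term cancellation between the Killing-form pieces arising from the OPEs of $\db\gamma$ with $\bar a$ and from $\db b$ with $c$.

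The main obstacle is not conceptual but bookkeeping: one must track a sizable table of coefficients of $\log|z'-z|$, $\frac{\bar z'-\bar z}{z'-z}$, $\frac{(\bar z'-\bar z)^2}{(z'-z)^2}$, and $\frac{z'-z}{\bar z'-\bar z}$ against Killing-form pairings $K(a,a),K(a,\bar a),K(\bar a,\bar a),K(a,c),K(\bar a,c)$ across several OPEs, with care for the sign arising from ghost statistics and from the cyclic symmetry that turns commutators inside a $\mr{tr}_\g$ into Killing forms. It would be more satisfying to obtain these cancellations from a structural principle — perhaps from a BV/BRST origin of $T,G,J,\OO^{(2)}$ as Hamiltonians for the conformal and BRST symmetries, which should force their self-OPE subtractions to be cohomologically trivial, or via the conjectural pre-Lie identity (\ref{pre-Lie identity}) — but the direct computation outlined above establishes the proposition.
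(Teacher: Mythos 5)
Your proposal follows essentially the same route as the paper's own proof: split regularizations of $T$, $J$, $\OO^{(2)}$ and the $\db\gamma$-equation, extraction of the singular subtraction from the OPEs (\ref{a  dd gamma OPE}), (\ref{dd c  dd b OPE}), (\ref{dd c  gamma OPE}), cancellation of the double pole between the $\gamma a$ and ghost sectors, vanishing of $O(g)$ terms by unimodularity and Killing-form bookkeeping at $O(g^2)$, with order-independence deduced from the absence of pure power-type singular terms with non-constant residue fields, exactly as in the paper's concluding remark. The only small imprecision is that for $J^{\mr{split}}$ there is no $\frac{\idg}{(z'-z)^2}$ pole to cancel (its subtraction is purely $O(g^2)$), but this does not affect the argument.
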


\subsection{Examples of correlators and OPEs of composite fields}
As a first example, consider the $2$-point correlation function 
\begin{equation}\label{<(a bargamma) gamma>}
\lan (a\otimes \bar\gamma)(z)\otimes \gamma(w) \ran 
\end{equation}
The composite field $a\otimes \bar\gamma$ is defined by the prescription (\ref{normal ordering}) -- by placing the two constituent fundamental fields into distinct nearby points and subtracting the singular part of their OPE (\ref{a bargamma OPE}):
\begin{equation}
\begin{aligned}
\big(a\otimes \bar\gamma\big)(z)&=\lim_{z'\ra z} \left(a(z')\otimes \bar\gamma(z) - \big[a(z')\otimes \bar\gamma(z) \big]_\mr{sing}\right) \\ 
&=\lim_{z'\ra z} \left( a(z')\otimes \bar\gamma(z)-g  \log|z'-z|\, \mr{ad}(a(z)) \right)
\end{aligned}
\end{equation}
%Taking the correlator of this expression with $\gamma_c(w)$, we obtain for (\ref{<(a bargamma) gamma>}) the following:
Thus, the correlator (\ref{<(a bargamma) gamma>}) is:
\begin{equation}\label{<(a bargamma) gamma> result}
\begin{aligned}
\langle (a\otimes \bar\gamma)(z)&\otimes \gamma(w) \rangle  =\\
=& \lim_{z'\ra z}\left(\lan a(z')\otimes \bar\gamma(z)\otimes \gamma(w) \ran - g \log|z'-z| \lan \mr{ad}(a(z))\otimes  \gamma(w) \ran \right) \\
=&\lim_{z'\ra z} \left( g \ff\cdot \frac{1}{z'-w}\log\left|\frac{w-z}{z'-z}\right|+g \ff\cdot \log|z'-z|\frac{1}{z-w} \right) \\
=& g \ff\cdot \frac{\log|z-w|}{z-w}
\end{aligned}
\end{equation}
Here we used the result (\ref{3p_1}) for the $3$-point function of fundamental fields.

Similarly, for the correlator $\lan (a\otimes \gamma)(z)\otimes \bar\gamma(w) \ran$ we find
\begin{equation}
\begin{aligned}
\langle (a\otimes \gamma)(z)&\otimes \bar\gamma(w) \rangle  = \\
=& \lim_{z'\ra z} \lan \left(a(z')\otimes \gamma(z)-\frac{\idg}{z'-z}+\frac{g}{2}\,\frac{\bar{z}'-\bar{z}}{z'-z}\mr{ad}(\bar{a}(z))\right)\otimes \bar\gamma(w) \ran \\
=& \lim_{z'\ra z} \left( g\ff\cdot\frac{1}{z-z'}\log\left|\frac{z-w}{z'-w}\right| -\frac{g}{2} \ff\cdot \frac{\bar{z}'-\bar{z}}{z'-z}\,\frac{1}{\bar{z}-\bar{w}}\right) \\
=& \frac{g}{2} \ff\cdot\frac{1}{z-w}
\end{aligned}
\end{equation}

As the next example, consider the following correlator of two composite fields: 
$$\big\langle \langle (a\otimes \bar\gamma)(z),\theta\otimes X\rangle\, \langle (\gamma\otimes \bar\gamma)(w) , Y\otimes Z\rangle \big\rangle$$
with $\theta\in \g^*$, $X,Y,Z\in\g$ four fixed coefficient vectors.
We can obtain this correlator from the $4$-point function (\ref{4-point a bargamma gamma bargamma}) by collapsing the first pair of points and the last pair of points (and subtracting the singularities). Collapsing $a$ and $\bar\gamma$, we get the $3$-point function
\begin{multline}\label{<(a bargamma) gamma bargamma>}
\big\langle \langle(a\otimes\bar\gamma)(z),\theta\otimes X\rangle\, \langle \gamma(w_1),Y\rangle\,\langle \bar\gamma(w_2),Z\rangle \big\rangle   = \\
=  \lim_{z'\ra z} \big\langle \langle a(z')\otimes \bar\gamma(z)-g \log|z'-z|\, \mr{ad}(a(z)),\theta\otimes X\rangle\, \langle \gamma(w_1),Y\rangle\,\langle \bar\gamma(w_2),Z\rangle \big\rangle \\
= \frac{g^2 \langle\theta,[X,[Y,Z]]\rangle}{2(z-w_1)}\left(-i\DD\left(\frac{w_1-w_2}{z-w_2}\right)-\log\left|\frac{w_1-w_2}{z-w_2}\right|\cdot \log|(z-w_1)(z-w_2)|\right)\\
+
\frac{g^2 \langle \theta,[Z,[Y,X]] \rangle}{2(z-w_1)}\left(i\DD\left(\frac{w_1-w_2}{z-w_2}\right)-\log\left|\frac{w_1-w_2}{z-w_2}\right|\cdot \log\left|\frac{z-w_1}{z-w_2}\right|\right)
\end{multline}
Then, collapsing $w_1$ and $w_2$, 
we get 
\begin{multline}\label{<(a bargamma) (gamma bargamma)>}
\big\langle \langle(a\otimes \bar\gamma)(z),\theta\otimes X\rangle\,\langle (\gamma\otimes \bar\gamma)(w) , Y\otimes Z\rangle\big\rangle = \\
=  \lim_{w'\ra w} \big\langle \langle(a\otimes \bar\gamma)(z),\theta\otimes X\rangle \,\langle\gamma(w')\otimes \bar\gamma(w)+g\log|w'-w|\cdot \langle\ff, (\gamma-\bar\gamma)(w)\rangle,Y\otimes Z\rangle \big\rangle \\
= g^2 \langle\theta,[X,[Y,Z]]\rangle\cdot \frac{\log^2 |z-w|}{z-w}
\end{multline}
Feynman diagrams corresponding to (\ref{<(a bargamma) gamma bargamma>}) are (\ref{<(a bargamma) (gamma bargamma)>}):
$$
\vcenter{\hbox{\includegraphics[scale=0.5]{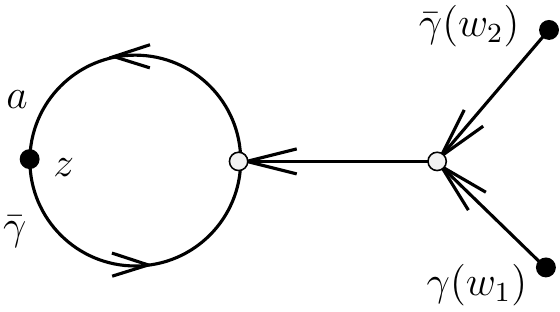}}}\quad, \quad 
\vcenter{\hbox{\includegraphics[scale=0.5]{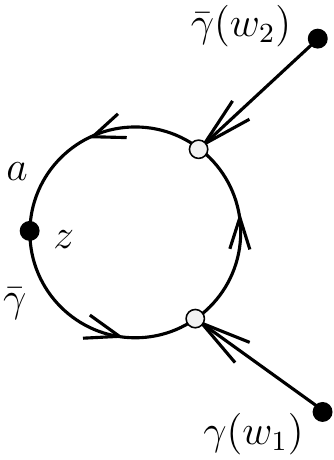}}}\quad;\quad
\vcenter{\hbox{\includegraphics[scale=0.5]{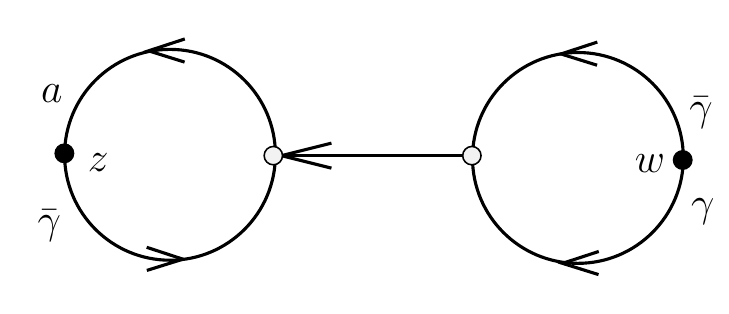}}}\quad,\quad
\vcenter{\hbox{\includegraphics[scale=0.5]{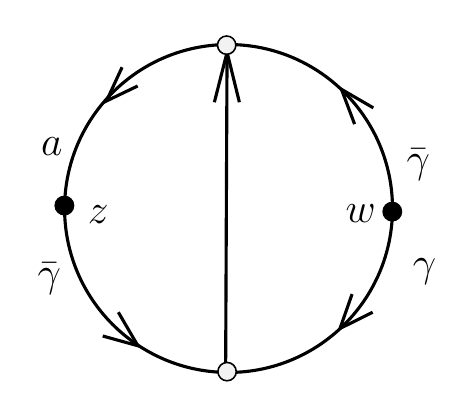}}}
$$
The last diagram here is, in fact, vanishing. Note that (\ref{<(a bargamma) (gamma bargamma)>}) corresponds to a \textit{two-loop} diagram.

Next, consider the OPE 
$$ \langle(a\otimes \bar\gamma)(w),\theta\otimes X\rangle \,\langle \bar\gamma(z),Y\rangle $$
with $\theta\in \g^*$ and $X,Y\in\g$.
There are the following contributing Feynman diagrams: 
$$
 \vcenter{\hbox{\includegraphics[scale=0.5]{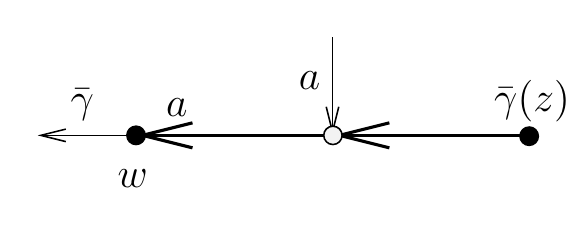}}} \quad , \quad 
\vcenter{\hbox{\includegraphics[scale=0.5]{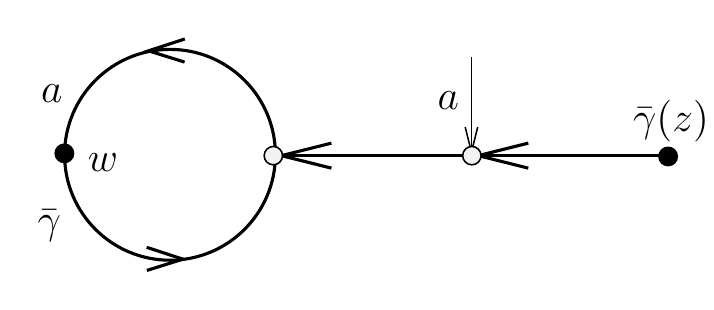}}} \quad, \quad
\vcenter{\hbox{\includegraphics[scale=0.5]{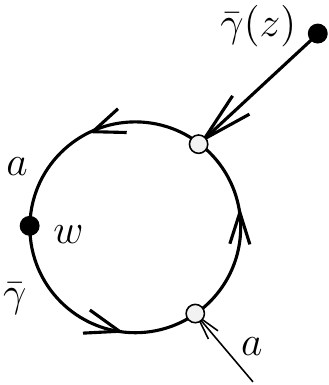}}}
$$
They %result in the OPE 
give the following result:
%\marginpar{check the signs}
%\marginpar{the $\log^2$ term simplifies using Jacobi!}
\begin{multline}
\langle(a\otimes \bar\gamma)(w),\theta\otimes X\rangle\,\langle\bar\gamma(z),Y\rangle 
\sim  g \log|z-w|\, \big(\langle a, [Y,\theta]\rangle\, \langle\bar\gamma,X  \rangle\big)(z)+\\
+\frac{g^2}{2} \log^2|w-z|\,\big(\underbrace{-\langle \theta,[X,[Y,a(z)]]  \rangle+\langle \theta,[Y,[X,a(z)]]  \rangle}_{\langle a(z), [[X,Y],\theta]\rangle} \big)
%\langle a(z),  [[X,Y],\theta] \rangle
%-\frac{g^2}{2}f^a_{ce}f^e_{fb} \log^2|w-z|\, a^f(z)
+\reg
\end{multline}
Here the contribution of the two last diagrams simplifies by Jacobi identity. This OPE gives a singular subtraction needed to define the composite field with three constituent fundamental fields
\begin{equation}
(a\otimes \bar\gamma\otimes \bar\gamma)(z)=\lim_{z'\ra z} \left((a\otimes \bar\gamma)(z')\otimes\bar\gamma(z)-\big[(a\otimes \bar\gamma)(z')\otimes\bar\gamma(z)\big]_\mr{sing} \right)
\end{equation}
%Note that this is an example of a composite field with three constituent fundamental fields.
Its correlator with $\gamma$ is obtained by collapsing $z$ with $w_2$ in (\ref{<(a bargamma) gamma bargamma>}):
\begin{multline}\label{<(a bargamma bargamma) gamma>}
\big \langle \langle (a\otimes \bar\gamma\otimes \bar\gamma)(z),\theta\otimes X\otimes Y\rangle\, \langle\gamma(w),Z\rangle \big\rangle = \\
=\frac{g^2}{2}
(\langle\theta,[X,[Y,Z]] \rangle+\langle\theta,[Y,[X,Z]]\rangle)
%\langle\theta, (\mr{ad}(X)\mr{ad}(Y)+\mr{ad}(Y)\mr{ad}(X))Z \rangle
\frac{\log^2|z-w|}{z-w}
\end{multline}

\subsection{Correlators involving the field $\bar\gamma\cdots \bar\gamma$}\label{sss: correlators with bargamma...bargamma}
Here we give some examples of correlators containing an arbitrary power of $\log$. These results will be the starting point for the construction of ``vertex operators'' -- composite fields with a quantum correction to conformal dimension -- in Section \ref{ss: vertex operators}.

\begin{lemma}
The $3$-point correlation function of the composite field $\bar\gamma\cdots\bar\gamma$ with $a$ and $\gamma$ is:
\begin{multline}\label{3-point a (bargamma ... bargamma) gamma}
\big\langle \langle (\bar\gamma\otimes\cdots \otimes\bar\gamma)(z),X_1\otimes\cdots \otimes X_n \rangle\; a(w_1)\otimes \gamma(w_2) \big\rangle = \\
=  \frac{g^n}{n!} \left(\sum_{\sigma\in S_n}
\ad(X_{\sigma(1)})\ad(X_{\sigma(2)})\cdots \ad(X_{\sigma(n)})
%f^a_{b_{\sigma(1)}e_1} f^{e_1}_{b_{\sigma(2)}e_2} \cdots f^{e_{n-1}}_{b_{\sigma(n)}c} 
%\lan t^a,\mr{ad}(t_{b_{\sigma(1)}})\cdots \mr{ad}(t_{b_{\sigma(n)}})t_c\ran
\right) \frac{1}{w_1-w_2} \log^n\left|\frac{z-w_2}{z-w_1}\right|
\end{multline} 
where $X_1,\ldots,X_n\in\g$ are arbitrary fixed vectors, the correlator is understood as valued in $\g\otimes\g^*$ and the sum on the right  goes over permutations $\sigma$.
\end{lemma}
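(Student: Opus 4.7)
The strategy is to enumerate the admissible Feynman diagrams, handle the Lie algebra combinatorics, and then evaluate the resulting iterated integral by induction.

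First, I would reduce the correlator to one of fundamental fields. Since the OPE $\bar\gamma(z')\otimes\bar\gamma(z)$ has no contributing Feynman graphs (the only nonzero propagator involving $\bar\gamma$ is with $\bar a$, and no interaction vertex outputs $\bar a$ or $\bar\gamma$, hence no branch graph with two $\bar\gamma$-leaves exists), the composite field $(\bar\gamma\otimes\cdots\otimes\bar\gamma)(z)$ requires no singular subtraction and coincides with the naive coincident-point limit. The left-hand side thus equals
$$\lim_{z_1,\ldots,z_n\to z}\Big\langle \langle\bar\gamma(z_1),X_1\rangle\cdots \langle\bar\gamma(z_n),X_n\rangle\; a(w_1)\otimes\gamma(w_2)\Big\rangle.$$

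Next, I would classify the Feynman graphs contributing to this $(n+2)$-point function. The total $\mc{AB}$-charge equals $+n$, so by Proposition \ref{Prop: corr properties} every contributing graph is a binary rooted tree with exactly $n$ interaction vertices and root at $a(w_1)$. Among the three vertex types in (\ref{vertex}), those with outgoing $b$-leg are excluded: such a leg would need an incoming $c$-leg to terminate at, but no $c$-leaf is present and $c$ is not produced as any vertex's outgoing leg. Only type-$1$ vertices $\langle\gamma,[a,\bar a]\rangle$ survive. Since no vertex outputs $\bar a$, each $\bar a$-leg terminates directly at a $\bar\gamma$-leaf via the $\langle\bar a,\bar\gamma\rangle$ propagator. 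The admissible topology is therefore a \emph{caterpillar}: a spine of $n$ interaction vertices connecting $a(w_1)$ to $\gamma(w_2)$ by $\langle a,\gamma\rangle$ propagators, with one $\bar\gamma$-leaf attached at each spine vertex. The $n!$ orderings of the integration variables $u_1,\ldots,u_n$ along the spine cancel the $1/n!$ from the Taylor expansion of $e^{-g\int\OO^{(2)}/(4\pi)}$, while the $n!$ assignments $\sigma\in S_n$ of the $\bar\gamma$-leaves to spine positions yield the Lie algebra factor $\sum_\sigma\ad(X_{\sigma(1)})\cdots\ad(X_{\sigma(n)})$, read from the end adjacent to $a(w_1)$ to the end adjacent to $\gamma(w_2)$.

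The remaining spatial integral, up to an overall sign, is
$$I_n=\int \prod_{i=1}^n\frac{d^2u_i}{2\pi}\;\frac{1}{(w_1-u_1)(u_1-u_2)\cdots(u_{n-1}-u_n)(u_n-w_2)}\prod_{i=1}^n\frac{1}{\bar u_i-\bar z}.$$
Setting $L(u):=\log|u-z|$, $L_2:=L(w_2)$, and $F_k(u):=(L(u)-L_2)^k/\bigl(k!\,(u-w_2)\bigr)$, I propose the one-step recursion
$$F_k(u)=\int \frac{d^2v}{2\pi\,(u-v)(\bar v-\bar z)}\;F_{k-1}(v)\qquad (k\geq 1),$$
starting from $F_0(u)=1/(u-w_2)$. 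Iterating $n$ times collapses $I_n$ onto $F_n(w_1)=(L(w_1)-L_2)^n/\bigl(n!(w_1-w_2)\bigr)$. The recursion itself follows from the partial-fraction identity $\frac{1}{(u-v)(v-w_2)}=\frac{1}{u-w_2}\bigl(\frac{1}{u-v}+\frac{1}{v-w_2}\bigr)$ together with $\frac{1}{\bar v-\bar z}=2\partial_{\bar v}\log|v-z|$ and the distributional identity $\partial_{\bar v}\frac{1}{v-a}=\pi\,\delta^2(v-a)$: integration by parts against $(L(v)-L_2)^{k-1}/(\bar v-\bar z)=\frac{2}{k}\partial_{\bar v}(L(v)-L_2)^k$ localizes the $v$-integral to $v=u$ and $v=w_2$, and the $v=w_2$ residue vanishes because $L(w_2)=L_2$.

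Finally, two factors of $(-1)^n$ need to be tracked: one from converting the vertex tensor $\ff$ at each spine vertex into $-\ad(X_{\sigma(i)})$ via antisymmetry of the structure constants, and one from $(L(w_1)-L_2)^n=(-1)^n\log^n|(z-w_2)/(z-w_1)|$. These cancel, reproducing the claimed formula. The main technical point I anticipate is the rigorous justification of the integration-by-parts steps: individually the two sub-integrals produced by each partial fraction are infrared-divergent, but their sum converges (as in case (\ref{convergence item (c)}) of the proof of Proposition \ref{Prop: corr properties}), so the IBP should be carried out with an IR cutoff sent to infinity only at the end, with the divergent boundary terms cancelling pairwise. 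Once this is done the rest is bookkeeping.
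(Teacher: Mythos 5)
Your proof is correct and takes essentially the same route as the paper: the paper likewise reduces to the split correlator with the $\bar\gamma$'s at distinct points (no singular subtraction being needed), identifies the caterpillar diagrams giving $g^n\sum_{\sigma\in S_n}\ad(X_{\sigma(1)})\cdots\ad(X_{\sigma(n)})$ times an iterated integral, and evaluates that integral by the same peel-off-one-vertex recursion proved by induction in $n$ --- your explicit treatment of the recursion step (partial fractions, $\frac{1}{\bar v-\bar z}=2\partial_{\bar v}\log|v-z|$, delta functions, IR cutoff with cancelling boundary terms) merely fills in a step the paper leaves implicit. The one slip is the claim that no interaction vertex outputs $\bar\gamma$: the $\langle\gamma-\bar\gamma,[a,\bar a]\rangle$ vertex does, so contractions of a vertex $\bar\gamma$-output with a $\bar a$-leg must be excluded by counting (the $n$ $a$-legs can only be fed by the single external $\gamma(w_2)$ and vertex $\gamma$-outputs, forcing every $\bar a$-leg onto an external $\bar\gamma$) rather than by nonexistence, and correspondingly the $\bar\gamma\bar\gamma$ OPE does have contributing graphs --- they are just regular --- none of which affects your conclusions.
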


\begin{proof}%\marginpar{check signs and factors}
One proves this by first considering the correlator
\begin{multline}\label{<(a bargamma^n) gamma> split}
\big\langle  \langle \bar\gamma(z_1),X_1\rangle\cdots \langle \bar\gamma (z_n),X_n\rangle\; a(w_1)\otimes\gamma(w_2) \big\rangle =\\
=g^n\left(\sum_{\sigma\in S_n}
% f^a_{b_{\sigma(1)}e_1} f^{e_1}_{b_{\sigma(2)}e_2} \cdots f^{e_{n-1}}_{b_{\sigma(n)}c} 
\ad(X_{\sigma(1)})\ad(X_{\sigma(2)})\cdots \ad(X_{\sigma(n)})
 \right) \mathbb{F}_n(w_1,z_1,\ldots, z_n,w_2)  
 \end{multline}
 where 
\begin{equation}\label{F_n from <(a bargamma^n) gamma>}
 \mathbb{F}_n(w_1,z_1,\ldots, z_n,w_2) = \int \frac{d u_1}{2\pi}\cdots \frac{du_n}{2\pi} 
\frac{(-1)^n}{
%(z-u_1)(\bar{z}_1-\bar{u}_1)(u_1-u_2)(\bar{z}_2-\bar{u}_2)(u_2-u_3)\cdots (\bar{z}_n-\bar{u}_n)
\prod_{k=1}^n (u_{k-1}-u_k) (\bar{u}_k-\bar{z}_k)\cdot
(u_n-w_2)}
\end{equation}
where we set $u_0:=w_1$. 
Here the contributing diagrams are:
$$ 
 \vcenter{\hbox{\includegraphics[scale=0.5]{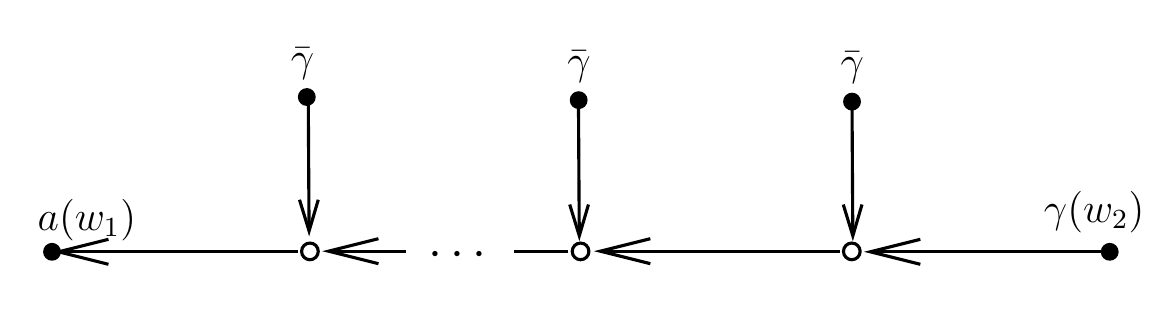}}}
$$
where we need to sum over orders in which $\bar\gamma$'s are connected (hence the sum over $\sigma\in S_n$ above).
Next, we set $z_1=\cdots=z_n=z$ (note that the integral is convergent in this limit -- there are no singularities to be subtracted when merging $z_i$'s):
$$\mathbb{F}^\mr{merged}_n(w_1,z,w_2) =  \mathbb{F}_n(w_1,z,\ldots, z,w_2)$$ 
We note that functions $\mathbb{F}^\mr{merged}_n$ satisfy a recursion in $n$:
$$   \mathbb{F}^\mr{merged}_n (w_1,z,w_2) =- \int \frac{d^2 u}{2\pi}\frac{\mathbb{F}^\mr{merged}_{n-1} (w_1,z,u)}{(\bar{u}-\bar{z})(u-w_2)} $$
as follows from the form of the integrals (\ref{F_n from <(a bargamma^n) gamma>}). This allows us to check by induction in $n$ that 
\begin{equation}\label{F_n^collapsed}
\mathbb{F}^\mr{merged}_n (w_1,z,w_2) =\frac{1}{n!}\,\frac{1}{w_1-w_2} \log^n\left|\frac{z-w_2}{z-w_1}\right|
\end{equation}
%Here the induction step is provided by the plane integral
%$$ -\int \frac{d^2u}{2\pi} \frac{\log^{n-1}\left|\frac{z-u}{z-w_1}\right|}{(w_1-u)(\bar{u}-\bar{z})(u-w_2)}= 
%\frac{1}{n}\frac{1}{z-w} \log^n\left|\frac{z-w_2}{z-w_1}\right| $$
\end{proof}

Merging the field $\bar\gamma\cdots \bar\gamma$ with either $a$ or $\gamma$ in (\ref{3-point a (bargamma ... bargamma) gamma}) and subtracting the singularity results in following the $2$-point functions:
%\begin{equation}
\begin{gather}
\big\langle \langle(a\otimes \bar\gamma\otimes\cdots\otimes \bar\gamma)(z), X_1\otimes\cdots\otimes X_n \rangle\, \gamma(w) \big\rangle = \label{2-point (a bargamma...bargamma) gamma}\\ 
\nonumber =\frac{g^n}{n!} \left(\sum_{\sigma\in S_n}
% f^a_{b_{\sigma(1)}e_1} f^{e_1}_{b_{\sigma(2)}e_2} \cdots f^{e_{n-1}}_{b_{\sigma(n)}c} 
\ad(X_{\sigma(1)})\ad(X_{\sigma(2)})\cdots \ad(X_{\sigma(n)})
 \right) \frac{\log^n|z-w|}{z-w},  \\
\big\langle  a(z)\, \langle (\bar\gamma\otimes \cdots\otimes  \bar\gamma\otimes \gamma)(w),X_1\otimes\cdots\otimes X_n\rangle \big\rangle =
\label{2-point a (bargamma...bargamma gamma)}\\
\nonumber =
(-1)^n\frac{g^n}{n!} \left(\sum_{\sigma\in S_n} 
%f^a_{b_{\sigma(1)}e_1} f^{e_1}_{b_{\sigma(2)}e_2} \cdots f^{e_{n-1}}_{b_{\sigma(n)}c}  
\ad(X_{\sigma(1)})\ad(X_{\sigma(2)})\cdots \ad(X_{\sigma(n)})
\right) \frac{\log^n|z-w|}{z-w}
\end{gather}
%\end{equation}
Here we understand that the $n$ $\bar\gamma$-factors are contracted with the vectors $X_i\in\g$ while $a$ and $\gamma$ are left non-contracted. Correlator (\ref{2-point (a bargamma...bargamma) gamma})  is a generalization of the results  (\ref{<(a bargamma) gamma> result}), (\ref{<(a bargamma bargamma) gamma>}).

\begin{comment}
One has the following generalization of (\ref{<(a bargamma) gamma> result}), (\ref{<(a bargamma bargamma) gamma>}).
%to a $2$-point correlator of the composite field $(a\bar\gamma\cdots\bar\gamma)$ with $\gamma$.
\begin{lemma}
The $2$-point correlator of the composite field $(a\bar\gamma\cdots\bar\gamma)$ with $\gamma$ is:
\begin{equation}\label{<(a bargamma^n) gamma>}
\lan (a^a \bar\gamma_{b_1}\cdots \bar\gamma_{b_n})(z)\,\gamma_c(w) \ran = \frac{g^n}{n!} \left(\sum_{\sigma\in S_n}
 f^a_{b_{\sigma(1)}e_1} f^{e_1}_{b_{\sigma(2)}e_2} \cdots f^{e_{n-1}}_{b_{\sigma(n)}c} 
 \right) \frac{\log^n|z-w|}{z-w}
\end{equation}
where the sum on the right  goes over permutations $\sigma$. Also, the correlator of $a$ with the composite field $(\bar\gamma\cdots \bar\gamma \gamma)$ is the same expression up to sign:
\begin{multline}\label{<a (bargamma^n gamma)>}
\lan a^a(z)\, (\bar\gamma_{b_1}\cdots \bar\gamma_{b_n}\gamma_c)(w) \ran =\\
=  
%(-1)^n\lan (a^a \bar\gamma_{b_1}\cdots \bar\gamma_{b_n})(z)\,\gamma_c(w) \ran
(-1)^n\frac{g^n}{n!} \left(\sum_{\sigma\in S_n} f^a_{b_{\sigma(1)}e_1} f^{e_1}_{b_{\sigma(2)}e_2} \cdots f^{e_{n-1}}_{b_{\sigma(n)}c}  \right) \frac{\log^n|z-w|}{z-w}
\end{multline}

\end{lemma}
\end{comment}

\section{Conformal and $Q$-invariance on the quantum level}\label{s: conf and Q-invariance}

\subsection{Equations of motion under the correlator and contact terms}
Consider the correlator
$$ \lan \db a(z) \phi_1(x_1)\cdots \phi_n(x_n) \ran $$
with $\phi_1,\ldots,\phi_n$ some test fields (assumed to be fundamental) inserted at points $x_1,\ldots,x_n$ distinct from $z$. Contributing Feynman graphs are binary trees with $\db a(z)$ at the root and $\phi_1(x_1),\ldots,\phi_n(x_n)$ decorating the leaves. The edge connecting the root with $\gamma$ from the interaction vertex $\OO^{(2)}(u)$ gets assigned 
$$
\db_z \frac{1}{z-u}\cdot \idg =\pi \delta(z-u) \cdot \idg$$ 
%times $\idg$. 
%the Kronecker symbol $\delta^a_{a'}$.
This implies that
\begin{multline}\label{db a under correlator}
\lan \db a(z) \phi_1(x_1)\cdots \phi_n(x_n) \ran = \\ =\sum_{N\geq 0}\frac{(-g/4\pi)^N}{(N-1)!} \Big\langle \wick{ \c1{\db a(z)} \big(\int_u \c1{\OO^{(2)}(u)}\big)\prod_{i=1}^{N-1} \big(\int_{u_i} \OO^{(2)}(u_i)\big)  \phi_1(x_1)\cdots \phi_n(x_n) } \Big\rangle_0
\\=
\frac{g}{2} \lan [a, \bar{a}](z)\; \phi_1(x_1)\cdots \phi_n(x_n) \ran
\end{multline}
Graphically:
$$ \vcenter{\hbox{\includegraphics[scale=0.4]{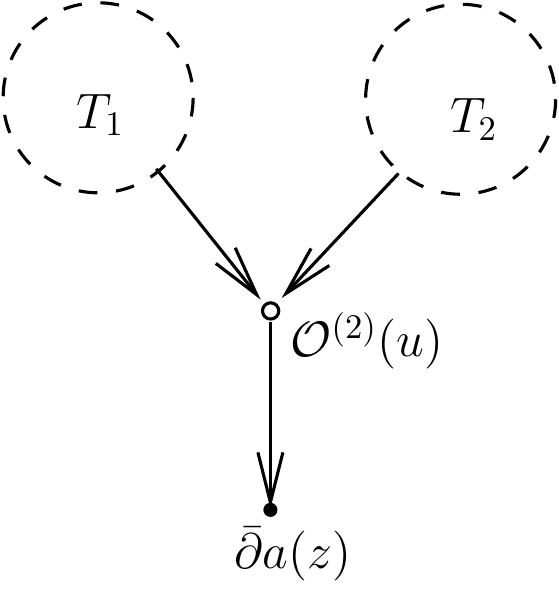}}} \quad = \quad  \vcenter{\hbox{
\includegraphics[scale=0.4]{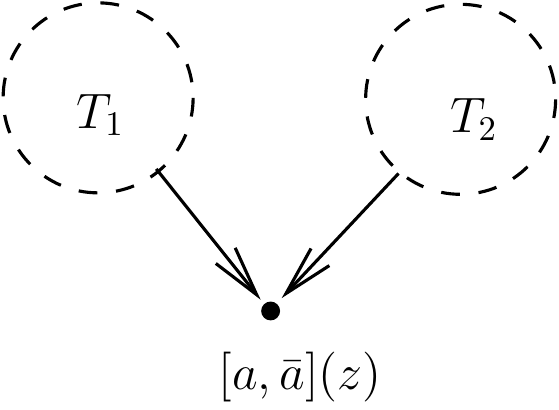}}} $$
with $T_1,T_2$ arbitrary trees with leaves decorated by the test fields. Thus, graphically, integrating over $u$ the delta-function arising in $\db$ of the propagator, results in chopping off the root of the tree. So, we obtained the identity 
\begin{equation*}  
\lan \left(\db a-\frac{g}{2}[a,\bar{a}]\right)(z)\;\; \phi_1(x_1)\cdots \phi_n(x_n) \ran = 0
\end{equation*}
Here the field in the brackets vanishes by classical equations of motion (\ref{eom in cx fields}). Our result here is that it holds in the quantum world: correlators of this field with any collection of test fields vanish.  This graphic argument for equations of motion under the correlator appeared in \cite{AI}.

A point related to this calculation is that the free theory OPE
\begin{equation}\label{O^2 db a OPE}
\OO^{(2)}(u)\, \db a(z) \underset{g=0}{\sim} -2\pi d^2u\; [a,\bar{a}](u)\,\delta(u-z)+\reg 
\end{equation}
contains a \textit{contact term}\footnote{By contact terms we generally mean terms containing delta-functions (or derivatives of delta-functions) in positions of fields} singularity. Normally when considering OPEs we require the fields to be at non-coinciding points. However, non-abelian theory is constructed as abelian theory with arbitrarily many insertions of $\OO^{(2)}$ which can hit other observables. Therefore, when talking about OPEs involving $\OO^{(2)}$, we should allow it to hit the other field, and we should care about contact terms.

Generally, we say that a composite field $\Xi$ is a \textit{quantum equation of motion} if it vanishes under the  correlator with an arbitrary collection of test fields inserted away from $\Xi$.
$$ \lan \Xi(z) \; \phi_1(x_1)\cdots \phi_n(x_n)
  \ran =0  $$
Thus, we just showed that $\Xi=\db a -\frac{g}{2} [a,\bar{a}]$ is a quantum equation of motion.

Similarly to (\ref{db a under correlator}), for the correlator of $\db \gamma(z)$ with test fields we find
\begin{equation}\label{db gamma correlator}
\lan \db\gamma(z)\;  \phi_1(x_1)\cdots \phi_n(x_n)\ran =  \frac{g}{2}\lan \big(-[\bar{a},\gamma-\bar\gamma]+[c,\db b]\big) \;  \phi_1(x_1)\cdots \phi_n(x_n)\ran 
\end{equation}
Graphically:
$$  \vcenter{\hbox{\includegraphics[scale=0.4]{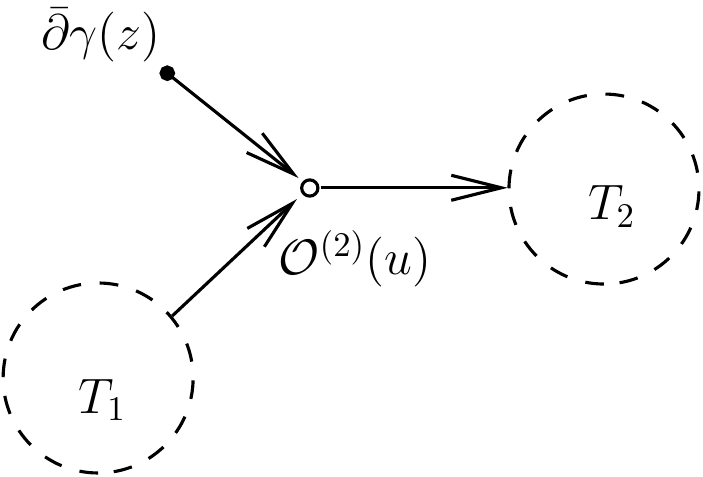}}} \quad = \quad  \vcenter{\hbox{\includegraphics[scale=0.4]{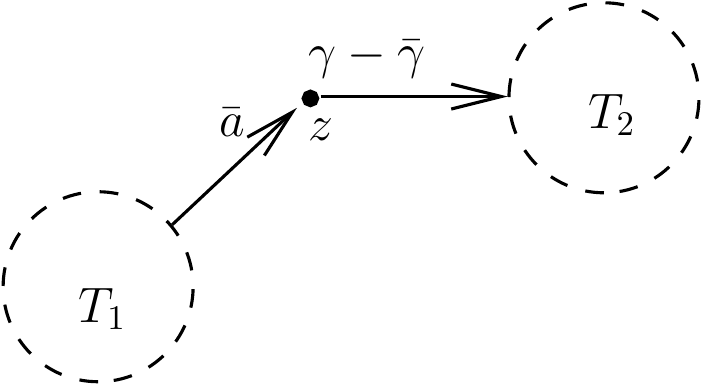}}} + \vcenter{\hbox{\includegraphics[scale=0.4]{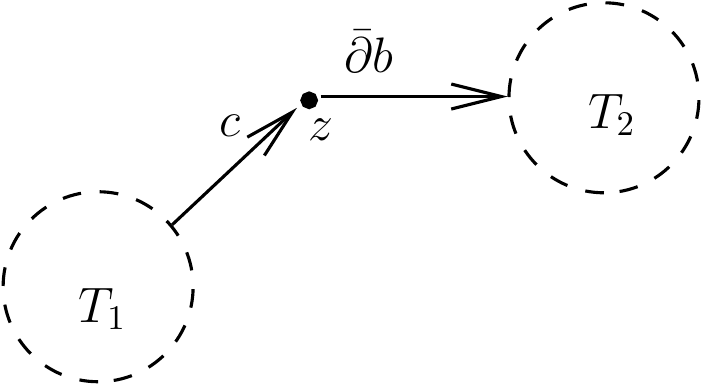}}} $$
Therefore, the classically vanishing expression 
$$\db\gamma+\frac{g}{2}[\bar{a},\gamma-\bar\gamma]-\frac{g}{2}[c,\db b]$$
vanishes under the correlator.

Likewise, we obtain
\begin{equation} \label{Laplace c correlator}
\lan \dd\db c(z)\;  \phi_1(x_1)\cdots \phi_n(x_n)
\ran = -\frac{g}{2} \lan \big(\db [a,c]+\dd[\bar{a},c]\big) \;  \phi_1(x_1)\cdots \phi_n(x_n)
\ran  
\end{equation}
and
\begin{equation}\label{Laplace b correlator}
\lan \dd\db b(z)\;  \phi_1(x_1)\cdots \phi_n(x_n)\ran = -\frac{g}{2} \lan \big( [a,\db b]+[\bar{a},\dd b]\big) \;  \phi_1(x_1)\cdots \phi_n(x_n)\ran  
\end{equation}
Ultimately, we see that all the expressions (\ref{eom in cx fields})  vanishing by classical equations of motion also vanish under the correlator. We further note that if a field $\Xi(z)$ vanishes under the correlator, then its product with any other field $\Phi(z)$ (the product %$(\Phi\Xi)(z)$
is understood as renormalized in the sense of (\ref{normal ordering})) also vanishes under the correlator, since 
\begin{multline*} 
\lan \big(\Phi\Xi\big)(z)\;\cdots  %\phi_1(x_1)\cdots \phi_n(x_n) 
\ran = \lim_{z'\ra z} \lan \big(\Phi(z')\,\Xi(z)-[\Phi(z')\,\Xi(z)]_\mr{sing}\big)\;  \cdots
\ran \\
=  \lim_{z'\ra z}\Big( \underbrace{\lan \Phi(z')\,\Xi(z)\;\cdots \ran}_{\mr{I}} -   \underbrace{\lan [\Phi(z')\,\Xi(z)]_\mr{sing}\;\cdots \ran}_{\mr{II}} \Big)
=0 
\end{multline*}
Here $\cdots$ are test fields inserted away from $z$. Term $\mr{I}$ vanishes as a correlator of $\Xi(z)$ with insertions away from $z$ and $\mr{II}$ is, by definition of OPE, the singular part of $\mr{I}$ at $z'\ra z$ and thus also vanishes. The same argument applies if choose the opposite order of merging in $\Phi\Xi$, i.e., if we merge $\Xi$ onto $\Phi$. 
%\marginpar{comment that we can merge $\Phi$ with $\Xi$ or $\Xi$ with $\Phi$ and both vanish under the correlator}
\begin{comment}
The correlator in the middle vanishes at $z'\neq z$ because: 
\begin{enumerate}
\item correlator of $\Xi(z)$ with insertions away from $z$ vanishes and
\item the singular part of the OPE, $[\Phi(z')\,\Xi(z)]_\mr{sing}$, is necessarily a pure contact term (a sum of derivatives of $\delta(z'-z)$ multiplied by composite fields) and thus vanishes at $z'\neq z$.
\end{enumerate}
In fact, at $z'=z$ the middle correlator also vanishes as the contact term in $\lan \Phi(z')\,\Xi(z)\cdots \ran$ tautologically cancels with the one coming from $[\Phi(z')\,\Xi(z)]_\mr{sing}$.
\end{comment}

In summary, we have the following %\marginpar{REWRITE taking into account merging ambiguity}
\begin{lemma}\label{lemma: eom under corr}
\begin{enumerate}[(a)]
\item %A composite field $\Xi$ vanishes under the correlator if and only if it vanishes classically, modulo equations of motion  (\ref{eom in cx fields}).
Expressions (\ref{eom in cx fields}) viewed as composite fields are quantum equations of motion.
%vanish under the correlator.
\item If $\Xi$ is a quantum equation of motion, then any derivative $\dd^p\db^q\Xi$ is also a quantum equation of motion.\footnote{
Here we understand that the order of merging for the derivative is inferred from the order of merging for $\Xi$ via (\ref{derivative of a renormalized product}).
}
\item If $\Xi$ is a quantum equation of motion and
%If a composite field $\Xi$ vanishes under the correlator 
and $\Phi$ is any composite field, then 
%the renormalized product $\Phi\Xi$, understood as either merging $\Phi$ onto $\Xi$ or merging $\Xi$ onto $\Phi$, vanishes under the correlator.
the renormalized products $\displaystyle{\til{\lim_{z'\ra z}}}\Phi(z')\Xi(z)$, $\displaystyle{\til{\lim_{z'\ra z}}}\Phi(z)\Xi(z')$ are also quantum equations of motion.
%also vanish under the correlator.
%%%In particular, expressions vanishing under the correlator form an ideal in the algebra of composite fields $\mathbb{F}_z$, generated by components of the left hand sides of (\ref{eom in cx fields}).
\end{enumerate}
\end{lemma}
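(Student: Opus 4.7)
My plan is to handle the three parts of the lemma separately, relying on the diagrammatic argument of (\ref{db a under correlator})--(\ref{Laplace b correlator}) for part (a) and on purely structural arguments for (b) and (c).

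For part (a), the essential input is that each kinetic operator appearing on the left-hand sides of (\ref{eom in cx fields}) produces a contact term when applied to the corresponding abelian propagator (\ref{propagators}):
\begin{equation*}
\db_z \tfrac{1}{z-u} = \pi\,\delta(z-u),\qquad \dd_z \tfrac{1}{\bar z-\bar u} = \pi\,\delta(z-u),\qquad \dd_z\db_z\bigl(2\log|z-u|\bigr)=2\pi\,\delta(z-u).
\end{equation*}
In any Feynman diagram contributing to $\lan \Xi(z)\,\phi_1(x_1)\cdots\phi_n(x_n)\ran$, with $\Xi$ one of the bare kinetic expressions $\db a,\db\gamma,\dd\db c,\dd\db b$ (or the complex conjugates $\dd\bar a,\dd\bar\gamma$), the edge incident to the insertion at $z$ either runs to a test field, in which case the delta sits at $z$ away from the test insertions and so contributes nothing at generic $z$, or to an interaction vertex $\OO^{(2)}(u)$, in which case the delta collapses the $u$-integration onto $z$ and amputates the edge. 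Restoring the remaining two half-edges of the collapsed vertex produces precisely the cubic terms of the corresponding equation in (\ref{eom in cx fields}), with the combinatorial factors $-g/(4\pi)$ from $\OO^{(2)}$ and the $\pi$ (or $2\pi$) from the delta combining to $\pm g/2$ as displayed in (\ref{db a under correlator})--(\ref{Laplace b correlator}). For $\db\gamma$, $\dd\bar\gamma$, $\dd\db b$ and $\dd\db c$ the interaction vertex admits two possible attachments to the amputated edge -- the $AB$-channel and the $bc$-channel -- whose sum reproduces the two summands of the classical non-linearity, as on the right-hand side of (\ref{db gamma correlator}).

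Part (b) is then immediate: by Proposition \ref{Prop: corr properties} the correlator $\lan\Xi(z)\phi_1(x_1)\cdots\phi_n(x_n)\ran$ is a finite sum of convergent integrals depending smoothly on $z$ away from the test insertions, and the Leibnitz prescription (\ref{derivative of a renormalized product}) makes differentiation commute with the correlator bracket, so $\dd^p_z\db^q_z$ annihilates an identically vanishing function.

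For part (c) I follow the structural argument already sketched immediately above the statement. By definition of the renormalized product,
\begin{equation*}
\lan \bigl(\til{\lim_{z'\ra z}}\Phi(z')\Xi(z)\bigr)\phi_1\cdots\phi_n\ran = \lim_{z'\ra z}\Bigl(\lan\Phi(z')\Xi(z)\phi_1\cdots\phi_n\ran - \lan [\Phi(z')\Xi(z)]_\mr{sing}\phi_1\cdots\phi_n\ran\Bigr).
\end{equation*}
The first correlator on the right vanishes for every $z'\neq z$ by the hypothesis on $\Xi$, treating $\Phi(z')$ as one of the test insertions; hence its OPE asymptotics as $z'\ra z$ has no nonzero coefficient, so the singular subtraction -- by definition, precisely those asymptotics truncated to power-log singularities (\ref{sigma_pqr}) -- vanishes too. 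Both terms are zero before passing to the limit, and the merging with $\Xi(z')$ as the moving factor is symmetric. The only real work lies in (a); the potential obstacle there is purely the bookkeeping of signs and normalization factors tracked through the collapse of the interaction vertex, which is routine once the conventions of (\ref{correlator PI}) and (\ref{vertex}) are fixed.
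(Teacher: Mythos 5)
Your proposal follows the paper's own route essentially verbatim: part (a) via the contact term $\db_z\frac{1}{z-u}=\pi\,\delta(z-u)$ collapsing the $u$-integration and amputating the edge at the interaction vertex (with the two attachment channels giving the two cubic terms), part (b) by differentiating an identically vanishing correlator, and part (c) by treating $\Phi(z')$ as a test insertion so that the correlator vanishes for $z'\neq z$ and hence its singular part -- the subtraction -- vanishes as well, with the same remark for the opposite order of merging. The only slip is the normalization $\dd_z\db_z\bigl(2\log|z-u|\bigr)=\pi\,\delta(z-u)$ rather than $2\pi\,\delta(z-u)$ (consistent with the paper's $\db_z\frac{1}{z-u}=\pi\,\delta(z-u)$), a bookkeeping detail that does not affect the structure of the argument, and the coefficients you quote from (\ref{db a under correlator})--(\ref{Laplace b correlator}) are the correct ones.
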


Counterparts of the free theory OPE (\ref{O^2 db a OPE}) corresponding to (\ref{db gamma correlator}), (\ref{Laplace c correlator}), (\ref{Laplace b correlator}) are:
\begin{equation}
\begin{aligned}
\OO^{(2)}(u) \,\db \gamma(z) & \underset{g=0}{\sim} & 2\pi d^2u\; \big([\bar{a},\gamma-\bar{\gamma}]-[c,\db b]\big)(u)\;\delta(u-z)+\reg \\
\OO^{(2)}(u) \,\dd\db c(z) & \underset{g=0}{\sim} & 2\pi d^2u\; \big(\db[a,c]+\dd[\bar{a},c]\big)(u)\;\delta(u-z)+\reg \\
\OO^{(2)}(u) \,\dd\db b(z) & \underset{g=0}{\sim} & 2\pi d^2u\; \big([a,\db b]+[\bar{a},\dd b]\big)(u)\;\delta(u-z)+\reg
\end{aligned}
\end{equation}

Thus, for each  $\xi\in \{\db a, \dd \bar{a},\db\gamma,\dd\bar\gamma, \dd\db c,\dd\db b\}$ a derivative of a fundamental field vanishing by equations of motion in free theory, we have an OPE similar to (\ref{O^2 db a OPE}), of form
\begin{equation}\label{O^2 xi OPE}
 \OO^{(2)}(u)\, \xi(z)\underset{g=0}{\sim}4\pi d^2 u\, \delta(u-z)\, r_\xi(u) +\reg 
\end{equation}
with $r_\xi$ some composite field. %\footnote{One can, of course, write $r_\xi(z)$ instead of $r_\xi(u)$ in the r.h.s. of (\ref{O^2 xi OPE}). The benefit of writing it in terms of $r_\xi(u)$ is that then one also gets the correct OPE for any derivative $\dd^p\db^q$ of $\xi$ with $\OO^{(2)}$ by applying $\dd^p_z\db^q_z$ to both sides of (\ref{O^2 xi OPE}).} 
%-- the coefficient of the contact term. 
Then the expression
\begin{equation}\label{Xi=xi+g r}
\Xi=\xi+g\, r_\xi
\end{equation}
vanishes under the correlator in the deformed theory. Thus, the deformation of equations of motion $\xi\ra \Xi$ from abelian to non-abelian theory is given by the coefficient $r_\xi$ of the contact term in the OPE of $\xi$ with the deforming $2$-observable $\OO^{(2)}$.

\begin{remark}
The OPE $T_0(z)\OO^{(2)}(u)$, see (\ref{T_0 O^2 ab OPE}), in fact contains a contact term:
\begin{equation}\label{T_0 O^2 with contact term}
T_0(z)\OO^{(2)}(u) \underset{g=0}{\sim} \frac{\OO^{(2)}(u)}{(z-u)^2}+\frac{\dd \OO^{(2)}(u)}{z-u}+4\pi\, d^2u\, \delta(z-u) \underbrace{\frac12 \lan \dd b,[a,c] \ran}_{T_1}(u)+\reg
\end{equation}
Observe that the composite field arising as the coefficient of the delta-function in the contact term is precisely $T_1$, the deformation of the stress-energy tensor induced by the non-abelian deformation of the theory, cf. (\ref{T explicit}), (\ref{Q,T,J deformation}). 
%This is an instance of a general phenomenon discussed in \cite{LMY_contact}.
%: deformation of conserved current is controlled their OPE with the deforming observable, including the contact terms.
%\marginpar{write more? write $J_0 \OO^{(2)}$?}
\end{remark}

\begin{comment}
\begin{lemma}\marginpar{maybe we can have worse contact terms, involving $\delta^2(u-z)$, e.g. in $(\db\gamma\dd\bar\gamma)\OO^{(2)}$?...}
For $\xi$ any composite field vanishing by equations of motion of the free theory (not necessarily a derivative of a fundamental field), we 
%still have a contact-term 
have an OPE with $\OO^{(2)}$ of the form (\ref{O^2 xi OPE}) and the deformation of $\xi$ given by (\ref{Xi=xi+g r}) vanishes under the correlator in deformed theory, at least modulo $O(g^2)$ terms.
\end{lemma}
\begin{proof} Indeed:
\begin{multline*}
\lan \Xi(z) \cdots \ran = \langle (\xi+g r_\xi)(z)\;\Big(1-\frac{g}{4\pi}\int_\CC \OO^{(2)}(u)\Big) \cdots \rangle_0+ O(g^2) 
\\
=\langle \xi(z) \cdots  \rangle_0 + g \langle \Big(r_\xi(z) -\frac{1}{4\pi} \underbrace{\xi(z)\int \OO^{(2)}(u)}_{\sim 4\pi \int d^2u\, \delta(u-z)r_\xi(u) }\Big)\cdots \rangle_0 + O(g^2)\\
=0+ g \lan (r_\xi -r_\xi)\cdots \ran + O(g^2) = O(g^2)
\end{multline*}
Here $\cdots$ stands for test fields away from $z$.
\end{proof}
\end{comment}

\subsection{Quantum conservation laws: holomorphicity of $G$ and $T$. Quantum BRST operator}\label{ss: quantum conservation laws}
Using Lemma \ref{lemma: eom under corr}, we can prove the following quantum counterpart of the classical conservation laws (\ref{d J^tot = 0}), (\ref{G,T conservation}).
%imply  conservation of $G,T,J^\tot$ under the correlator:
%the corresponding quantum conservation laws: 
%-- vanishing of derivatives of respective currents under the correlator.
\begin{proposition}
We have 
%\begin{equation}
%\lan \db G(z)\cdots  \ran =\lan \dd \bar{G}(z)\cdots  \ran= \lan \db T(z) \cdots \ran = \lan \dd \bar{T}(z) \cdots \ran=\lan d J^\tot(z)\cdots \ran =0
%\end{equation}
\begin{gather}
\label{G conservation under corr} \lan \db G(z)\cdots  \ran = 0 , \quad \lan \dd \bar{G}(z)\cdots  \ran = 0,\\
\label{T conservation under corr}\lan \db T(z)\cdots  \ran =  0, \quad  \lan \dd \bar{T}(z)\cdots  \ran =  0, \\
\label{J conservation under corr}\lan d J^\tot(z)\cdots  \ran = 0
\end{gather}
with $\cdots$ any collection of test fields.
\end{proposition}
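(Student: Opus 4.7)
The overall strategy is to lift the classical conservation laws (\ref{G,T conservation}) and (\ref{d J^tot = 0}) to the quantum level via Lemma \ref{lemma: eom under corr}, which states that quantum equations of motion -- and their renormalized products with any composite field -- vanish under the correlator. The enabling input is Proposition \ref{prop: vanishing subtractions in G,T,J}: the composite fields $G$, $T$, $J$ have vanishing singular subtractions and are independent of the order of merging, so Leibniz differentiation (\ref{derivative of a renormalized product}) applies without ambiguity to $\db G$, $\db T$, $\db J$, $\dd\bar{J}$.

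For (\ref{G conservation under corr}), I would write $G = \langle a,\dd b\rangle$ as a renormalized product and apply $\db$ termwise to get
\[
\db G \;=\; \langle \db a,\dd b\rangle + \langle a,\dd\db b\rangle .
\]
Adding and subtracting the equation-of-motion brackets from (\ref{eom in cx fields}) rewrites this as
\begin{multline*}
\db G \;=\; \Big\langle \db a - \tfrac{g}{2}[a,\bar a],\, \dd b\Big\rangle + \Big\langle a,\, \dd\db b + \tfrac{g}{2}[a,\db b] + \tfrac{g}{2}[\bar a,\dd b]\Big\rangle \\
+ \tfrac{g}{2}\langle [a,\bar a],\dd b\rangle - \tfrac{g}{2}\langle a,[a,\db b]\rangle - \tfrac{g}{2}\langle a,[\bar a,\dd b]\rangle .
\end{multline*}
The first two bracketed expressions are left-hand sides of equations in (\ref{eom in cx fields}), hence quantum equations of motion by Lemma \ref{lemma: eom under corr}(a); their renormalized products with $\dd b$ and $a$ respectively are quantum equations of motion by part (c) of the same lemma. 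The remaining algebraic remainder vanishes identically by invariance of the pairing ($\langle a,[\bar a,\dd b]\rangle = \langle [a,\bar a],\dd b\rangle$) and antisymmetry of the Lie bracket ($\langle a,[a,\db b]\rangle = -\langle [a,a],\db b\rangle = 0$). These are tensorial identities on $\g\otimes\g^*$ and transfer pointwise to composite fields with a fixed order of merging. Thus $\db G$ is a quantum equation of motion, and conservation of $\bar{G}$ follows by complex conjugation.

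For (\ref{T conservation under corr}) and (\ref{J conservation under corr}) the plan is identical: Leibniz-expand $\db T$, $\db J$, and $\dd \bar J$ from the explicit expressions (\ref{T explicit}) and (\ref{J}); replace each $\db$- or $\dd\db$-derivative of a fundamental field by (quantum equation of motion) $+$ (bracket correction) drawn from (\ref{eom in cx fields}); collect the quantum-equation-of-motion pieces, which vanish under the correlator by Lemma \ref{lemma: eom under corr}(c); and verify that the residual algebraic remainder vanishes by the same pairing-invariance and Jacobi identities that ensure the classical relations $\db T \simeom 0$ and $dJ^\tot \simeom 0$ in Section \ref{s: classical}. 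The classical derivations already perform this bookkeeping, so the quantum derivation is obtained by retracing them and labelling each summand as either \emph{quantum EOM} or \emph{algebraic}.

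The main obstacle I anticipate is ensuring that the Leibniz expansion on $T$ and $J$, which contain three-factor terms, does not generate hidden order-of-merging corrections that would spoil the algebraic cancellation: pairing invariance and Jacobi are tensorial identities that hold pointwise but require a consistent order of merging across all summands. Proposition \ref{prop: vanishing subtractions in G,T,J} (order-independence of $G, T, J$) together with Lemma \ref{lemma: merging A-fields onto a B-field} lets one fix a single consistent order throughout the computation, so no anomalous ordering corrections appear and the classical cancellation passes verbatim to the quantum level.
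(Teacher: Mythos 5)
Your proposal is correct and follows essentially the paper's own route: rewrite $\db G$, $\db T$, $d J^\tot$ as (derivatives of) the equation-of-motion expressions (\ref{eom in cx fields}) paired against fields, kill each such term under the correlator via Lemma \ref{lemma: eom under corr}, and handle the regrouping/order-of-merging subtlety for the split composite fields via Lemma \ref{lemma: merging A-fields onto a B-field} -- which is exactly the paper's re-assignment of the charge $-1$ fields to a third point $z''$. The only cosmetic difference is that you verify the algebraic cancellation explicitly for $G$ and defer to the classical bookkeeping for $T$ and $J$, which is precisely what the paper's displayed identities (\ref{dbar T via eom}) and (\ref{d J^tot via eom}) record.
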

\begin{proof}
We start by considering $\db T$. Classically, we have
\begin{multline}\label{dbar T via eom}
\db T = \langle \dd (\underline{\db \gamma+\cdots}),a \rangle -
\langle\underline{\dd\bar\gamma+\cdots}, \dd\bar{a}\rangle + \langle \underline{\dd\db b+\cdots}, \dd c \rangle \\
+ \langle \dd b, \underline{\dd \db c+\cdots} \rangle  
+ \langle \dd\gamma, \underline{\db a+\cdots}\rangle + \langle \dd\bar\gamma, \underline{\dd \bar{a}+\cdots}\rangle
\end{multline}
Here the underlined terms are the expressions (\ref{eom in cx fields}) -- the classical equations of motion. For brevity, we write explicitly only the top derivative term in each equation, thus $\db a+\cdots$ stands for $\db a-\frac{g}{2}[a,\bar{a}]$ and similarly for other equations.\footnote{
Expansion (\ref{dbar T via eom}) arises when we write the stress-energy as $T=\lan \dd \gamma,a \ran+\lan \dd b , \dd c\ran+\frac{g}{2}\lan \dd b ,[a,c] \ran$. If instead we use the classically equivalent expression (\ref{T explicit}), we should add to (\ref{dbar T via eom}) the term $\db\langle \underline{\dd\bar\gamma+\cdots},a \rangle$.
}

In the quantum setting, we split $T$, placing fields $\gamma,\bar\gamma,b$ or derivatives at $z$ and fields $a,\bar{a},c$ or derivatives at a point $z'\ra z$. Then, taking the derivative, we will have the same splitting rule in (\ref{dbar T via eom}). Then, using Lemma \ref{lemma: merging A-fields onto a B-field}, we can equivalently re-assign fields $a,\bar{a},c$ or derivatives in underlined terms in (\ref{dbar T via eom}) to a point $z''$, so that we have $\displaystyle{\til{\lim_{z'\ra z}}}\db T^\mr{split}(z,z')= \displaystyle{\til{\lim_{z'\ra z}}}\,\displaystyle{\til{\lim_{z''\ra z}}} \db T^\mr{split}(z,z',z'') $. The latter expression vanishes under the correlator by Lemma \ref{lemma: eom under corr}.

One proves vanishing of $\db G$ and $d J^\tot$ under the correlator by the same reasoning. In particular, one has
$$ \db G =\langle \underline{\db a+\cdots},\dd b \rangle + \langle a, \underline{\dd\db b+\cdots} \rangle $$
and\footnote{
We remark that formula (\ref{d J^tot via eom}) has the structure $d J^\tot = \sum_i \pm\frac{\delta S}{\delta \phi_i}Q(\phi_i)$ where the sum runs over the species of fundamental fields $\phi_i\in \{a,\bar{a},\gamma,\bar\gamma,b,c\}$. Similarly, (\ref{dbar T via eom}) has the structure $\db T = \sum_i \pm\frac{\delta S}{\delta \phi_i}\dd\phi_i+\dd(\cdots)$.
}
\begin{multline}\label{d J^tot via eom}
dJ^\tot = 4d^2z \Big(
\langle \gamma+\bar\gamma, \underline{\dd \db c+\cdots} \rangle +
\langle \underline{\db\gamma+\cdots},\dd c +g[a,c] \rangle \\+ 
\langle \underline{\dd\bar\gamma+\cdots},\db c +g[\bar{a},c] \rangle
-
\frac{g}{2} \langle \underline{\dd \db b+\cdots}, [c,c] \rangle+
\frac{g}{2}\langle [c,\gamma-\bar\gamma] ,\underline{\db a-\dd \bar{a}+\cdots} \rangle
\Big)
\end{multline}
\end{proof}

\subsubsection{Quantum BRST operator}
We define the quantum BRST operator $Q_q$ acting on a composite field $\Phi(z)$ as
\begin{equation}\label{Q_q definition}
Q_q: \Phi(z) \mapsto \frac{1}{4\pi}\oint_{C_z\ni w} J^\tot(w) \Phi(z)
\end{equation}
-- this is understood as an equality under a correlator with test fields. Here $C_z$ is a simple closed contour going around $z$ in positive direction and  not enclosing any of the test fields. Note that the conservation law (\ref{J conservation under corr}) implies that the result is independent under deformations of the contour.

It turns out that quantum BRST operator essentially coincides with the classical BRST operator. More precisely, taking care of the order-of-merging issue for the composite fields, one has the following.
\begin{proposition} \label{Prop Q} Quantum BRST operator $Q_q$  satisfies the following properties.%coincides with the classical BRST operator:%\marginpar{REWRITE taking into account merging ambiguity}
\begin{enumerate}[(a)]
\item \label{Prop Q (a)} For $\Phi$ a fundamental field, quantum and classical BRST operators agree:
\begin{equation}\label{Q_q=Q}
Q_q \Phi(z)=Q\Phi(z)
\end{equation}
\item \label{Prop Q (b)} $Q_q$ commutes with derivatives:  
$$Q_q (\dd\Phi)=\dd (Q_q \Phi)\quad ,\quad Q_q (\db\Phi)=\db (Q_q \Phi)$$
for any composite field $\Phi$.
\item \label{Prop Q (c)} $Q_q$ acts as an  odd derivation on renormalized products:
\begin{equation}\label{Q_q derivation property}
Q_q \til{\lim_{z'\ra z}}\Phi_1(z')\Phi_2(z) = 
\til{\lim_{z'\ra z}} (Q_q \Phi_1)(z') \Phi_2(z) + \til{\lim_{z'\ra z}}(-1)^{|\Phi_1|} \Phi_1(z')(Q_q\Phi_2)(z)
\end{equation}
for any composite fields $\Phi_1,\Phi_2$.
\item \label{Prop Q (d)} For a general composite field $\Phi=(\phi_1\cdots\phi_n)_\mu$, with $\phi_i$ fundamental fields (or their derivatives) and $\mu$ the order-of-merging data (see Section \ref{sss: order of collapse ambiguity}), we have
\begin{equation}
Q_q \Phi = \sum_{i=1}^n \pm (\phi_1 \cdots (Q \phi_i) \cdots \phi_n )_\mu
\end{equation}
with $\pm=(-1)^{\sum_{j=1}^{i-1}|\phi_j|}$ the Koszul sign.
\item \label{Prop Q (e)} $Q_q$ squares to zero: 
\begin{equation}
Q_q^2 \Phi=0
\end{equation}
for any composite field $\Phi$.
\end{enumerate}
%\begin{equation}\label{Q_q=Q}
%Q_q \Phi(z) = Q \Phi(z)
%\end{equation}
\end{proposition}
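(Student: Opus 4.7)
The plan is to reduce everything to contour-deformation manipulations with $J^{\tot}$, using that $J^{\tot}$ is quantum conserved (\ref{J conservation under corr}) so that all integrals only see singular parts of OPEs.

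For (\ref{Prop Q (a)}), I would compute directly the singular parts of $J^\tot(w)\Phi(z)$ for each fundamental $\Phi\in\{a,\bar a,\gamma,\bar\gamma,b,c\}$, using the OPE techniques of Section~\ref{ss: OPE}. Since $J^\tot=-2i(dw\,J-d\bar w\,\bar J)$ and the integration contour $C_z$ picks out both the $dw$-residue of $J$ and the $d\bar w$-residue of $\bar J$, one wants to check that (for each fundamental field) the $\frac{1}{w-z}$-pole of $J(w)\Phi(z)$ and the $\frac{1}{\bar w-\bar z}$-pole of $\bar J(w)\Phi(z)$ combine to give precisely $Q\Phi(z)$ as in (\ref{Q in cx fields}). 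For example, for $\Phi=a$, the $k=0$ branch diagram gives the $\frac{-\dd c}{w-z}$ term in $J(w)a(z)$, while the $k=1$ diagram gives the $-g[a,c]$ contribution (after merging the remaining field onto $z$). The other fundamental fields are analogous; no logarithmic or mixed terms of the form $\frac{\bar w-\bar z}{w-z}$ can contribute to the contour integral, which is why $Q_q$ lands exactly on the classical $Q$.

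For (\ref{Prop Q (b)}), I would simply differentiate under the contour integral: since $w\neq z$ on $C_z$, the OPE is smooth there and $\dd_z\oint_{C_z}J^\tot(w)\Phi(z)=\oint_{C_z}J^\tot(w)\dd\Phi(z)$; the same for $\db$. For (\ref{Prop Q (c)}), let $\Phi_1(z_1)\Phi_2(z_2)$ be split at nearby points, and choose $C_z$ enclosing both. By quantum conservation of $J^\tot$ the contour can be deformed; split it as $C_{z_1}+C_{z_2}$. Each piece acts on the corresponding field. Then take $\til\lim_{z_1\to z_2}$; the subtracted singular terms $[\Phi_1(z_1)\Phi_2(z_2)]_{\mr{sing}}$ are polynomials in the basis $\sigma_{pqr}$ of (\ref{sigma_pqr}) times composite fields, and the contour around $z_2$ acts on them by $Q_q$ applied to those composite fields, consistently with (\ref{Q_q derivation property}). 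The Koszul sign appears because the contour around $z_1$, when commuted past $\Phi_1$, picks up $(-1)^{|\Phi_1|}$ from the odd $J^\tot$. Statement (\ref{Prop Q (d)}) is then immediate by induction on the tree $\mu$, using (\ref{Q_q=Q}) at the leaves and (\ref{Q_q derivation property}) at internal nodes.

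The main obstacle is (\ref{Prop Q (e)}). The strategy is to insert $Q_q^2\Phi(z)=\frac{1}{(4\pi)^2}\oint_{C_z}\oint_{C'_z}J^\tot(w_1)J^\tot(w_2)\Phi(z)$ with $C'_z\subset C_z$, deform $C_z=C'_z+(\text{contour around }w_2)$, so that
\[
Q_q^2\Phi(z)=\frac{1}{(4\pi)^2}\oint_{C_z}\oint_{C'_{z,w_2}}\bigl[J^\tot(w_1)J^\tot(w_2)\bigr]_{\mr{sing}}\Phi(z),
\]
and then show the OPE $J^\tot(w_1)J^\tot(w_2)$ has a singular part that is a total $d$-derivative in $w_2$ (equivalently, that $Q_q J^\tot$ vanishes under the integrand). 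This in turn I would deduce from (\ref{Prop Q (c)})--(\ref{Prop Q (d)}) already established, applied to the composite field $J^\tot$ itself: since $Q_q$ acts on the constituents of $J^\tot$ by the classical $Q$ (which squares to zero on $a,\bar a,b,c,\gamma,\bar\gamma$) modulo the order-of-merging data, $Q_q J^\tot$ is a quantum equation of motion by Lemma~\ref{lemma: eom under corr}, hence contributes zero to the contour. The hard part is controlling the order-of-merging inside $J^\tot$ and verifying that no anomalous contact terms (in the spirit of (\ref{T_0 O^2 with contact term})) spoil the nilpotency; this is where the vanishing singular subtractions for $J$ from Proposition~\ref{prop: vanishing subtractions in G,T,J} are crucial, as they ensure $J$ is unambiguously defined and $Q_q$ acts on it with no extra corrections beyond the classical $Q$.
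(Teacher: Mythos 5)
Your treatment of items (\ref{Prop Q (a)})--(\ref{Prop Q (d)}) essentially coincides with the paper's proof: (\ref{Prop Q (a)}) by direct computation of the $J^\tot\Phi$ OPEs for fundamental fields, (\ref{Prop Q (b)}) by differentiating under the contour, (\ref{Prop Q (c)}) by the contour-splitting argument, and (\ref{Prop Q (d)}) by combining these. One caveat on (\ref{Prop Q (a)}): your blanket claim that logarithmic or $O\big(\frac{\bar w-\bar z}{w-z}\big)$ terms ``cannot contribute to the contour integral'' is too quick --- over a finite contour such terms can contribute individually; the actual reason they drop out is that conservation (\ref{J conservation under corr}) forces them to assemble into $d_w$-closed (in fact $d_w$-exact) expressions, so the contour can be shrunk. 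This matters concretely for $J^\tot(w)\gamma(z)$, where $1$-loop diagrams produce terms $\propto g^2\frac{\log|w-z|}{w-z}K(c,-)$ that must be checked to combine into a $d_w$-exact term, as in (\ref{J^tot gamma OPE}); your sketch skips this check.

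The genuine gap is in (\ref{Prop Q (e)}). Having established (\ref{Prop Q (d)}), nilpotency is immediate: $Q_q$ acts on any $(\phi_1\cdots\phi_n)_\mu$ by the classical $Q$ on the constituents (with Koszul signs), so $Q_q^2$ reduces to the classical derivation squared, which vanishes because $Q^2=0$ on fundamental fields --- this is exactly the paper's one-line argument. Instead you propose a double-contour deformation requiring that the singular part of $J^\tot(w_1)J^\tot(w_2)$ be a total $w_2$-derivative, i.e.\ that $Q_q J^\tot$ vanish under the integral. That claim is not established: you would need the classical identity for $QJ^\tot$ modulo equations of motion in a form to which Lemma \ref{lemma: eom under corr} applies, and you yourself flag the unresolved order-of-merging and contact-term issues in the $J^\tot J^\tot$ OPE. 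As written, this route is both unnecessary (it discards the easy consequence of (\ref{Prop Q (d)}) you already have) and incomplete; replacing it with the direct deduction from (\ref{Prop Q (d)}) and $Q^2=0$ closes the proof.
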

\begin{proof}
Property (\ref{Prop Q (b)}) is immediate from the definition of $Q_q$ (\ref{Q_q definition}), by applying a derivative to both sides.

Derivation property (\ref{Prop Q (c)}) is proven as follows.
%Assume that (\ref{Q_q=Q}) is known for $\Phi$ a fundamental field. Then it holds for any derivative of a fundamental field, since $Q$ commutes with derivatives and $Q_q$ commutes  also, as seen by applying $(\dd_z)^p(\db_z)^q$ to the r.h.s. of (\ref{Q_q definition}). 
%Furthermore, if (\ref{Q_q=Q}) holds for two fields $\Phi_1,\Phi_2$, then it also holds for the renormalized product $(\Phi_1\Phi_2)(z)$. Indeed, splitting the insertions of $\Phi_1$ and $\Phi_2$, 
We have
\begin{equation} \label{J derivation}
 \oint_{C_{12}\,\ni w} J^\tot(w)  \Phi_1(z') \Phi_2(z)  
=  \oint_{C_{1}} J^\tot(w)  \Phi_1(z') \Phi_2(z)+ 
\oint_{C_{2}} J^\tot(w)  \Phi_1(z') \Phi_2(z)  
\end{equation}
under a correlator with test fields away from $z,z'$. Here $C_{12}$ is a contour enclosing $z$ and $z'$, $C_1$ encloses only $z'$ and $C_2$ encloses only $z$. The equality corresponds to splitting an integral over $C_{12}$ into integrals over $C_1$ and $C_2$. Taking the limit $z'\ra z$ while subtracting singular terms as $z'\ra z$, yields the left and right sides of (\ref{Q_q derivation property}).\footnote{
%Note that the singular parts $\sum_k \sigma_i(z'-z)\til\Phi_i(z)$ of the l.h.s. and r.h.s. of (\ref{J derivation}) (which we need to subtract to get (\ref{Q_q derivation})) must match.
More explicitly, let $\sum_{i} \sigma_{i}(z'-z)\til\Phi_{i}(z)$ be the singular part of the OPE $\Phi_1(z')\Phi_2(z)$ with $\sigma_{i}(z'-z)$ the basis singular coefficient functions (\ref{sigma_pqr}) and with $\til\Phi_{i}$ some composite fields (only finitely many of them nonzero).
%-- assume that $\sigma_i(z'-z)$ is the basis in the space of possible singular coefficient functions appearing in OPEs (functions $(z'-z)^{-p} (\bar{z}'-\bar{z})^{-q} \log^r|z'-z|$, with $p+q\geq 0,r\geq 0$ but not $p=q=r=0$); here $\til\Phi_i$ are some composite fields, most of them zero. 
Then the singular part of the l.h.s. of (\ref{J derivation}) is ${\sum_{i} \sigma_{i}(z'-z)Q_q\til\Phi_{i}(z)}$. Two integrals on the r.h.s. of (\ref{J derivation}) have singular parts $\sum_{i} \sigma_{i}(z'-z) \til\Phi^{(\alpha)}_{i}(z)$ with $\alpha=1,2$ and $\til\Phi^{(\alpha)}_{i}$ are some composite fields. Since l.h.s. and r.h.s. of (\ref{J derivation}) are equal under the correlator, the singular parts (and thus, coefficients of $\sigma_{i}$) must be equal: $Q_q\til\Phi_{i}=\til\Phi^{(1)}_{i}+\til\Phi^{(2)}_{i}$. In particular, expressions which we need to subtract from l.h.s. and r.h.s. of (\ref{J derivation}) to obtain l.h.s. and r.h.s. of (\ref{Q_q derivation property}) are the same, which proves (\ref{Q_q derivation property}).
%$Q_q(\Phi_1\Phi_2)(z)$ is  the l.h.s. of (\ref{J derivation}) minus its singular part $\sum_k \sigma_i(z'-z)\til\Phi_i(z)$ -- assume that $\sigma_i(z'-z)$ is the basis in the space of possible singular coefficient functions appearing in OPEs (functions $(z'-z)^{-p} (\bar{z}'-\bar{z})^{-q} \log^r|z'-z|$, with $p+q\geq 0,r\geq 0$ but not $p=q=r=0$); here $\til\Phi_i(z)$ are some composite fields, most of them zero. On the other hand, $((Q_q\Phi_1)\Phi_2\pm\Phi_1 (Q_q\Phi_2))(z)$ is the r.h.s. of (\ref{J derivation}) minus the singular parts $ \sum_k \sigma_i(z'-z)\til\Phi^{(1)}_i(z)$
}
%Thus, we proved that $Q_q$ is a derivation:
%\begin{equation}\label{Q_q derivation}
%Q_q(\Phi_1 \Phi_2)= (Q_q\Phi_1)\Phi_2+(-1)^{|\Phi_1|}\Phi_1 (Q_q\Phi_2) 
%\end{equation}
%Therefore, if (\ref{Q_q=Q}) holds for $\Phi_1$ and $\Phi_2$, then it holds for the product. This proves the  proposition for an arbitrary composite field -- a sum of products of derivatives of fundamental fields -- provided that it holds for fundamental fields.

Property (\ref{Prop Q (d)}) is an immediate consequence of properties (\ref{Prop Q (a)}),  (\ref{Prop Q (b)}),  (\ref{Prop Q (c)}). Property (\ref{Prop Q (e)}) follows from (\ref{Prop Q (d)}) by applying $Q_q$ twice to a composite field $(\phi_1\cdots \phi_n)_\mu$ and using that the classical BRST operator $Q$ squares to zero.

Lastly, consider property (\ref{Prop Q (a)}). 
For a fundamental field $\Phi$, we prove (\ref{Q_q=Q}) by a direct computation of the OPEs $J^\tot\Phi$. For instance, we compute%\marginpar{CHECK THE $*d$ TERM}
%\marginpar{CHECK!! Second term is probably wrong ($\db c$ ?).. Also: draw diagrams?}
\begin{multline}\label{J^tot a OPE}
J^\tot(w)a(z)\sim \\
\sim  \underbrace{-(\dd c+g[a,c])}_{Qa} \frac{2i dw}{w-z}\;\; 
- \;\; 2i g [\db c+g[\bar{a},c],a] 
 d_w \Big((\bar{w}-\bar{z})\log|w-z| \Big)+\reg
\end{multline}
where fields on the r.h.s. are at $z$. The first term is a pole and gives a contribution $Qa$ to the contour integral (\ref{Q_q definition}); the second term is a milder (logarithmic) singularity and vanishes under the contour integral (as do regular terms). Likewise, we find
%\marginpar{check signs}
\begin{align}
J^\tot(w) c(z) &\sim \frac{g}{4}[c,c] \Big(\frac{-2idw}{w-z}+\frac{2id\bar{w}}{\bar{w}-\bar{z}}\Big) +
\cdots  \label{J^tot c OPE} \\ 
J^\tot(w) b(z) &\sim  \gamma\frac{-2i dw}{w-z}+\bar\gamma\frac{2id\bar{w}}{\bar{w}-\bar{z}}+\cdots  \label{J^tot b OPE} \\
\label{J^tot gamma OPE} J^\tot(w) \gamma(z) &\sim 
\frac{ig^2}{2} K( c(z),-)\, d_w\log^2|w-z|+\\
\nonumber & +
\frac{g}{2} [c,\gamma] \frac{2idw}{w-z}+\frac{g}{2} [c,\bar\gamma]\frac{2id\bar{w}}{\bar{w}-\bar{z}} +\cdots
\end{align}
where $\cdots$ stands for milder singular (e.g. logarithmic and $O(\frac{\bar{w}-\bar{z}}{w-z})$) terms,\footnote{Individually, a logarithmic or an $O(\frac{\bar{w}-\bar{z}}{w-z})$ term could contribute to an integral over a finite contour, but, due to (\ref{J conservation under corr}), such terms  always combine into $d_w$-closed expressions (cf. the second term in (\ref{J^tot a OPE})), thus one can take the contour to be very small -- in this limit, singular terms milder than a first order pole clearly vanish when integrated.} not contributing to the contour integral. In fact, (\ref{J^tot a OPE}) and (\ref{J^tot c OPE}) are computed easily from (\ref{OPE = tree OPE}), using the results of Section \ref{sss: OPE of fundamental fields}. For (\ref{J^tot b OPE}) one could have $1$-loop diagrams, but they vanish/are non-singular. OPE (\ref{J^tot gamma OPE}) is more complicated (see Remark \ref{rem: Q_q phi ansatz} below for a shortcut to computing $Q_q\gamma$); the first term on the l.h.s. comes from $1$-loop diagrams which contain potentially dangerous terms 
%We note that OPEs $J(w) \gamma_a(z)$, $\bar{J}(w) \gamma_a(z)$ individually contain dangerous terms, coming from one-loop Feynman diagrams, 
%\marginpar{Give the computation/diagrams in an appendix?} 
proportional to $g^2\frac{\log|w-z|}{w-z} K(c(z),-)$ (or the conjugate), with $K$ the Killing form on $\g$; these terms add up to a $d_w$-exact term when summed in (\ref{J^tot gamma OPE}). Ultimately, the first term on the l.h.s. of (\ref{J^tot gamma OPE}) vanishes under the contour integral over $w$ and does not contribute to $Q_q\gamma$.

This finishes the proof of (\ref{Q_q=Q}).
\end{proof}

%\marginpar{do we want this remark?}
\begin{remark}\label{rem: Q_q phi ansatz}
The following trick allows one to simplify the computation of $Q_q\Phi$ for fundamental fields $\Phi$, and in particular provides an alternative way to calculate $Q_q\gamma$,  avoiding the direct computation of the OPE (\ref{J^tot gamma OPE}). %to exclude the possibility of having $1$-loop correction terms. 
 Writing 
\begin{equation}\label{Q_q phi = sum_k g^k Phi}
Q_q\Phi=\sum_{k\geq 0}g^k \Phi_k
\end{equation} 
one can restrict the form of possible composite fields $\Phi_k$ appearing on the right by analyzing various degrees on the left and right side of (\ref{Q_q phi = sum_k g^k Phi}) -- the ghost number, $\mc{AB}$-charge, weight $(h,\bar{h})$ and the number of constituent fundamental fields of the composite field $\Phi_k$. We see that $\Phi_k$ must have the following properties

%\marginpar{switch to tabu}
\begin{tabu}{c|c|c|c}
ghost degree & $\mc{AB}$-charge & weight $(h,\bar{h})$ & $\#$ fund. fields \\ \hline
$|\Phi|+1$ & $\mc{AB}(\Phi)-k$ & $(h_\Phi,\bar{h}_\Phi)$ & $k+1-2\#\mbox{loops}$
\end{tabu}

Here $\#\mbox{loops}$ is the number of loops in the Feynman diagram giving the contribution to OPE. These properties immediately imply that, writing the r.h.s. of $Q_q\phi$ schematically, up to numeric factors and indices, we have 
$$Q_q a \approx \dd c +g ac, \;\; Q_q c \approx g c c,\;\;  Q_q b \approx\gamma+\bar\gamma+g bc+g\kappa,\;\; Q_q \gamma \approx g\gamma c+g^2 c $$
\end{remark}
Here $\kappa$ in $Q_q b$ is a component of constant vector in $\g^*$; it must be zero due to global $\g$-invariance (on the level of Feynman diagrams, it vanishes due to unimodularity of $\g$).    Furthermore, one can exclude the $bc$ structure from $Q_q b$, since bare $b$ ghost cannot appear on a leaf of a Feynman diagram for $J^\tot b$ OPE. Finally, once $Q_q b$ is known, one can prove that $Q_q\gamma$ does not contain the $1$-loop correction term $g^2 c$, by probing it with the correlator with a test field $\dd b(x)$:
\begin{equation}\label{Q_q gamma testing for c term}
\lan  (Q_q \gamma)(z)\; \dd b(x)\ran = 
%\langle \dd b(x) \frac{1}{4\pi}\oint_{C_z\ni w} J^\tot(w) \gamma(z) \rangle 
\langle \frac{1}{4\pi}\oint_{C_{zx}\ni w} J^\tot(w)  \gamma(z) \dd b(x)\rangle+
\underbrace{\langle  \gamma(z)\; Q_q(\dd b(x)) \rangle}_{0}
\end{equation}
Here on the left $Q_q$ acts by a contour integral of $J^\tot$ around $z$ -- we present it on the right as an integral over a large contour $C_{zx}$ encircling both $z$ and $x$, minus a term with $J^\tot$ encircling only $x$. The second term is the correlator of already known $Q_q(\dd b)=\dd(\gamma+\bar\gamma)$ with $\gamma(z)$ and vanishes trivially, since $2$-point functions $\lan \gamma(x) \gamma(z) \ran$, $\lan \bar\gamma(x)\gamma(z) \ran$ are zero. If $Q_q \gamma$ would contain a $g^2 c$ term, the correlator (\ref{Q_q gamma testing for c term}) would behave as $O\big(\frac{1}{z-x}\big)$. Considering the asymptotics $z\ra x$, we see that the r.h.s. does not behave this way, since the OPE $\gamma \dd b$ (\ref{OPE db gamma}) does not -- it behaves as $O\big(\frac{\bar{z}-\bar{x}}{z-x}\big)$. Hence, the coefficient of $g^2 c$ in $Q_q \gamma$ must be zero.

\subsubsection{OPEs of $G$ with fundamental fields}
Since field $G$ is holomorphic under the correlator, its OPE with any composite field $\Phi$ must have the form
\begin{equation}\label{G Phi OPE}
G(w) \Phi(z) \sim \sum_{k=1}^p (w-z)^{-k} \Phi_k(z) +\mr{reg}^{(\infty)}
\end{equation}
with $\Phi_k$ some composite fields and some $p\geq 0$. For instance, such an OPE cannot contain terms like $\log|w-z|$ or $\frac{\bar{w}-\bar{z}}{w-z}$ which we have seen in other OPEs. The remainder in (\ref{G Phi OPE}) is holomorphic at $w\ra z$; in particular this OPE can be differentiated arbitrarily many times. Similarly, one has that the singular part $\bar{G}(w)\Phi(z)$ is a Laurent polynomial in $\bar{w}-\bar{z}$. Since the stress-energy tensor $T$ is also holomorphic, same observation applies to $T(w) \Phi(z)$: one has
\begin{equation}
T(w) \Phi(z) \sim \sum_{k=1}^q (w-z)^{-k} \til\Phi_k(z) +\mr{reg}^{(\infty)}
\end{equation}
and similarly for $\bar{T}(w)\Phi(z)$.

Another observation is that for $\Phi$ a fundamental field, the OPE  $G(w)\Phi(z)$ does not have admissible decorations for $1$-loop diagrams, and hence this OPE satisfies (\ref{OPE = tree OPE}). Explicitly, we obtain:
\begin{equation}\label{G (fund field) OPEs}
\begin{aligned}
G(w) a(z) &\sim  \mr{reg}^{(\infty)}, &
G(w) \bar{a}(z) &\sim   \mr{reg}^{(\infty)},\\
G(w) \gamma(z) &\sim  \frac{\dd b(z)}{w-z}+\mr{reg}^{(\infty)}, &
G(w) \bar\gamma(z) &\sim  \mr{reg}^{(\infty)}, \\
G(w) c(z) &\sim -\frac{a(z)}{w-z} + \mr{reg}^{(\infty)}, &
G(w) b(z) &\sim \mr{reg}^{(\infty)}
\end{aligned}
\end{equation}
By complex conjugation, one obtains OPEs of $\bar{G}$ with fundamental fields. 
Also, the OPE of $G$ with itself is trivial:
\begin{equation}
G(w)G(z) \sim \mr{reg}^{(\infty)}\qquad , \qquad G(w)\bar{G}(z) \sim \mr{reg}^{(\infty)}
\end{equation}

Computing the OPE between $G$ and 
%the holomorphic/anti-holomorphic parts of the 
the BRST current, one gets
\begin{equation}
\begin{aligned}
G(w)J(z) &\sim \frac{-\dim\g}{(w-z)^3}+\frac{-\lan \gamma,a \ran}{(w-z)^2}+\frac{T-\dd \lan \gamma,a \ran
%-\lan\gamma, \dd a\ran+\lan \dd b, \dd c+\frac{g}{2}[a,c] \ran
}{w-z}+\mr{reg}^{(\infty)},\\
G(w) \bar{J}(z) & \sim \frac{\db\lan \gamma,a \ran}{w-z}+\pi\delta(w-z)\lan \bar\gamma, a \ran+\mr{reg}^{(\infty)}
\end{aligned}
\end{equation}
%\marginpar{Contact term here - should we write it?} 
All the fields on the right are at $z$. Here the cubic pole comes from a 1-loop diagram. Thus, for $G J^\tot$, one has
\begin{equation}\label{G J^tot OPE}
G(w) J^\tot(z)\sim -\frac{2i Tdz}{w-z}+2i d_z\left(\frac{\frac12 \dim\g}{(w-z)^2}+\frac{\lan \gamma,a \ran}{w-z}\right)+\mbox{(contact term)} +\mr{reg}^{(\infty)}
\end{equation}

Integrating $J^\tot$ around $G$ in (\ref{G J^tot OPE}), we find
\begin{equation}\label{Q_q G = T}
Q_qG=T
\end{equation}
-- the quantum counterpart of (\ref{T=Q(G)}).

\subsubsection{Quantum stress-energy tensor. Examples of primary fields. $TT$ OPE}\label{sss: T quantum, primary fields}
Recall that a field $\Phi$ is called \textit{primary}, of conformal dimension $(\Delta,\bar{\Delta})$, if its OPEs with $T,\bar{T}$ are of the form 
\begin{equation}\label{T Phi primary OPE}
 T(w) \Phi(z)\sim \frac{\Delta\Phi(z)}{(w-z)^2} + \frac{\dd \Phi(z)}{w-z}+\rg \quad , \quad 
\bar{T}(w) \Phi(z)\sim \frac{\bar{\Delta}\Phi(z)}{(\bar{w}-\bar{z})^2} + \frac{\dd \Phi(z)}{\bar{w}-\bar{z}}+\rg 
\end{equation}

\begin{proposition} \label{Prop T}\leavevmode
\begin{enumerate}[(a)]
\item \label{Prop T (a)} Fundamental fields $a,\bar{a},\gamma,\bar\gamma,b,c$ are all primary (each component of these fields), %\marginpar{Ok to refer to components, or rephrase?}
 of conformal dimension $(1,0)$ for $a$, $(0,1)$ for $\bar{a}$ and $(0,0)$ for the rest. %Also, their derivatives $\dd^p \db^q \Phi$ are primary as long as $h_\Phi+p\leq 1$, $\bar{h}_\Phi+q\leq 1$. Thus, $\db a, \dd \gamma, \db \gamma, \dd\db\gamma, \dd b, \dd\db b, \dd c, \dd\db c$ and complex conjugates are primary.
\item \label{Prop T (b)}Fields $G,\bar{G}$ are primary, of conformal dimension $(2,0)$ and $(0,2)$, respectively.
\item \label{Prop T (c)}Stress-energy tensor satisfies the OPE
\begin{equation}\label{TT OPE}
T(w) T(z)\sim \frac{2T(z)}{(w-z)^2}+ \frac{\dd T(z)}{w-z}+\rg\quad, \quad T(w)\bar{T}(z)\sim %\frac{\dd \bar{T}(z)}{w-z}+
\rg
\end{equation}
Thus, $T$ has the standard OPE of a conformal field theory with central charge $\mathsf{c}=\bar{\mathsf{c}}=0$ (since we do not have a $4$-th order pole in $TT$ and $\bar{T}\bar{T}$ OPEs).\footnote{Recall that in a conformal theory with (holomorphic) %/anti-holomorphic 
central charge $\mathsf{c}$, the stress-energy tensor satisfies the OPE $T(w)T(z)\sim\frac{\mathsf{c}/2}{(w-z)^4}+\frac{2T(z)}{(w-z)^2}+\frac{\dd T(z)}{w-z}+\rg$ and similarly 
%$\bar{T}(w)\bar{T}(z)\sim\frac{\bar{\mathsf{c}}/2}{(\bar{w}-\bar{z})^4}+\frac{2\bar{T}(z)}{(\bar{w}-\bar{z})^2}+\frac{\db \bar{T}(z)}{\bar{w}-\bar{z}}$. 
for $\bar{T}\bar{T}$ and anti-holomorphic central charge $\bar{\mathsf{c}}$.
} 
Put another way, $T$ and $\bar{T}$ themselves are primary fields of dimensions $(2,0)$ and $(0,2)$, respectively.
\end{enumerate}
\end{proposition}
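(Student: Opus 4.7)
\emph{Strategy.} The backbone of all three parts is the relation $T = Q_q G$ from (\ref{Q_q G = T}) together with the derivation property of $Q_q$ from Proposition \ref{Prop Q}(\ref{Prop Q (c)}), which at the level of OPEs reads
\[
T(w)\Phi(z) \;=\; Q_q\bigl(G(w)\Phi(z)\bigr) \;+\; G(w)\,(Q_q\Phi)(z),
\]
the $+$ sign being forced by $|G|=-1$. The first term is computed by applying $Q_q$ only to the composite coefficient-fields $\til\Phi_i$ of the OPEs (\ref{G (fund field) OPEs}), leaving the singular coefficient functions $\sigma_{pqr}$ of (\ref{sigma_pqr}) intact -- this is exactly the contour argument spelled out in the proof of Proposition \ref{Prop Q}(\ref{Prop Q (c)}). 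The second term is computed by the tree/Wick rule (\ref{OPE = tree OPE}), since $G$ is linear in $\mc{AB}$-charge $+1$ fields and thus admits no one-loop OPE contributions. An analogous relation holds for $\bar{T}$ via $\bar{T}=Q_q\bar{G}$.

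\emph{Part (a).} Apply the formula to each fundamental field. For $\Phi=a$: $G(w)a(z)\sim \rg$ and $Qa=-\dd c - g[a,c]$; the Wick rule gives $G(w)\dd c(z)\sim -\frac{a}{(w-z)^2}-\frac{\dd a}{w-z}$ and $G(w)[a,c](z)\sim 0$ (would-be coefficient $[a,a]=0$), producing the primary OPE of dimension $(1,0)$. For $\Phi=\gamma$: $G(w)\gamma(z)\sim \frac{\dd b}{w-z}$ and $Q\gamma=\frac{g}{2}[c,\gamma-\bar\gamma]$; after applying $Q_q$ to $\dd b$ (yielding $\dd(\gamma+\bar\gamma)$) and computing $G(w)[c,\gamma-\bar\gamma](z)$ by Wick, the singular coefficient collapses to $\dd \gamma$ \emph{precisely upon using the quantum equation of motion} $\dd\bar\gamma - \frac{g}{2}[a,\gamma-\bar\gamma]-\frac{g}{2}[c,\dd b]=0$ from Lemma \ref{lemma: eom under corr}. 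The cases $c,b,\bar a, \bar\gamma$ and the $\bar T$-OPEs proceed identically; in every case the quantum EOMs of (\ref{eom in cx fields}) are precisely what convert the sum of $Q$-variations and $G$-contractions into the canonical form $\frac{\Delta\Phi}{(w-z)^2}+\frac{\dd\Phi}{w-z}$, recovering the weight assignments of Section \ref{sss: conf dimension}.

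\emph{Part (b).} Since $G(z)=\displaystyle\lim_{z'\to z}\lan a(z'),\dd b(z)\ran$ has no singular subtraction (Proposition \ref{prop: vanishing subtractions in G,T,J}), the standard Wick/Leibniz rule for OPE of $T$ with a renormalized product applies. Using $T(w)a(z')\sim \frac{a(z')}{(w-z')^{2}}+\frac{\dd a(z')}{w-z'}$ and $T(w)\dd b(z)=\dd_z\bigl(T(w)b(z)\bigr)\sim \frac{\dd b}{(w-z)^{2}}+\frac{\dd^{2}b}{w-z}$ (both from part (a)), summing the two contractions and taking $z'\to z$ gives
\[
T(w)G(z)\sim \frac{2G(z)}{(w-z)^{2}}+\frac{\dd G(z)}{w-z}+\reg,
\]
the coefficient $2=1+1$ being the sum of the weights of $a$ and $\dd b$. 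Similarly $\bar T(w)G(z)$ reduces to $\bar T$-OPEs with $a$ and $b$, both of which have $\bar\Delta=0$, so the result is $\rg$. Hence $G$ is primary of dimension $(2,0)$, and $\bar G$ of $(0,2)$ by complex conjugation.

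\emph{Part (c) and main obstacle.} For the $TT$ OPE, apply the derivation formula one more time: since $Q_q T = Q_q^{2}G=0$ by Proposition \ref{Prop Q}(\ref{Prop Q (e)}),
\[
T(w)T(z) \;=\; Q_q\bigl(T(w)G(z)\bigr).
\]
Applying $Q_q$ term-by-term to the OPE of part (b), leaving the $\sigma_{pqr}$ fixed, one obtains $T(w)T(z)\sim \frac{2T}{(w-z)^{2}}+\frac{\dd T}{w-z}+\reg$. The vanishing of the central charge $\mathsf{c}=\bar{\mathsf{c}}=0$ is then a free consequence: a $(w-z)^{-4}$ pole in $TT$ could only arise from a $(w-z)^{-4}$ pole in $TG$, which does not exist. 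For $T(w)\bar T(z)=Q_q\bigl(T(w)\bar G(z)\bigr)$, the Wick computation of part (b) (with $\bar a$ in place of $a$ and $\db b$ in place of $\dd b$) gives $T(w)\bar G(z)\sim \frac{\dd \bar G}{w-z}+\reg$; applying $Q_q$ converts this into $\frac{\dd\bar T}{w-z}$, which vanishes under any correlator by the conservation law $\dd\bar T\simeq 0$ of (\ref{T conservation under corr}), so $T\bar T\sim \reg$. The principal technical obstacle is the consistent commutation of $Q_q$ with the singular-coefficient-function expansion of an OPE -- i.e., verifying that $Q_q$ acts only on the fields $\til\Phi_i$, never on $\sigma_{pqr}$ -- which must be applied with care for the composite coefficient-fields like $[c,\gamma-\bar\gamma]$ appearing in $Q\gamma$, where OPEs with $G$ involve Koszul signs from the odd fields $b,c$.
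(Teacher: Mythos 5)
Your proof is correct and follows essentially the same route as the paper's: the identity $T(w)\Phi(z)=Q_q\big(G(w)\Phi(z)\big)+G(w)\,Q_q\Phi(z)$, tree-level evaluation of the $G$-OPEs with fundamental fields, and the quantum equations of motion, the only (cosmetic) differences being that in part (b) you contract $T$ against the constituents of $G$ where the paper contracts $G$ against the constituents of $T$, and in the $T\bar{T}$ OPE you dispose of the surviving first-order pole via the conservation law $\lan \dd\bar{T}\cdots\ran=0$ rather than via the equation of motion appearing inside the $G\bar{T}$ OPE. One small remark: the license to compute $T(w)G(z)$ by single dressed contractions comes from the absence of admissible one-loop decorations (the mechanism of (\ref{G Phi 1-loop vanishing})), not from the vanishing of singular subtractions in $G$ --- but this does not affect the result.
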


\begin{proof}
First note that for any field $\Phi$ we have
\begin{equation}\label{T Phi via G Phi}
T(w) \Phi(z) = Q_qG(w) \Phi(z) = Q_q\big(G(w) \Phi(z)\big) + G(w)\, Q_q\Phi (z)
\end{equation}
-- by combining the BRST-exactness of the stress-energy tensor (\ref{Q_q G = T}) with the contour-switching argument (\ref{J derivation}); $Q_q$ in the first term on the r.h.s. means ``integrate $J^\tot$ over a contour enclosing both $w$ and $z$.'' Thus, computing the OPE $T\Phi$ reduces to computing OPEs of $G$ with $\Phi$ or $Q_q\Phi$; computing of $Q_q$ on any field is straightforward (reduces to computing the classical BRST operator) by Proposition \ref{Prop Q}. 

Next, we make the following remark: if $\Phi$ is at most linear in fundamental fields of $\mc{AB}$-charge $+1$, then
\begin{equation}\label{G Phi 1-loop vanishing}
G(w) \Phi(z) = [G(w)\Phi(z)]_\mr{tree}\qquad \mbox{unless $\Phi$ contains $c$ and $\gamma$ or $\bar\gamma$}
\end{equation}
This is a special case of the remark of Section \ref{sss: a remark on OPEs of composite fields}: $1$-loop graphs (\ref{1-loop graphs for Phi_1 Phi_2 OPE linear in B}) involving $G$ and $\Phi$ have no admissible decorations unless $\Phi$ contains $c$ and $\gamma$ or $\bar\gamma$ (possibly with derivatives).

For $\Phi$ a fundamental field, we calculate $G(w)\, Q\Phi(z)$ (recall that $Q_q=Q$ on fundamental fields) using (\ref{G Phi 1-loop vanishing}):
\begin{align*}
G(w)\, Qa(z) &\sim \frac{a}{(w-z)^2}+\frac{\dd a}{w-z} +\rg, & 
G(w)\, Q\bar{a}(z) &\sim \frac{\dd \bar{a}}{w-z}+\rg \\
G(w) \, Qc(z)&\sim -g\frac{[a,c]}{w-z}+\rg, & 
G(w) \, Qb(z) &\sim \frac{\dd b}{w-z}+\rg \\
G(w)\, Q\gamma(z) &\sim %-\frac{g}{2}\frac{[a,\gamma-\bar\gamma]^a+[c,\dd b]^a}{w-z} + \rg 
-\frac{\dd\bar\gamma}{w-z}+\rg
,&
G(w)\, Q\bar\gamma(z) &\sim \frac{\dd\bar\gamma}{w-z}+\rg
\end{align*}
where all fields in the r.h.s. are at $z$.\footnote{
Note that the field $Q\gamma = \frac{g}{2}[c,\gamma-\bar\gamma]$ does contain both $c$ and $\gamma$ or $\bar\gamma$, so there is a possibility of a $1$-loop correction to the OPE $G(w) Q\gamma(z)$. However, this $1$-loop diagram vanishes by unimodularity. The same applies to $G(w) Q\bar\gamma(z)$. Another remark is that these OPEs are written modulo equations of motion and modulo contact terms. -- In fact, there is a contact term arising in $G(w) Q\bar{a}(z)$. It corresponds to a contraction between $\dd\bar\gamma(w)$ and $\bar{a}(z)$ in $T\bar{a}$ OPE when the stress-energy tensor is written in the form (\ref{T explicit}). Similarly, $G\bar{T}$ OPE (\ref{G Tbar}) below would contain a contact term if $\bar{T}$ were written as complex conjugate of (\ref{T explicit}) instead of (\ref{T equivalent}). 
} Combining these OPEs with $Q_q$ applied to OPEs (\ref{G (fund field) OPEs}), as in (\ref{T Phi via G Phi}), we obtain the OPEs of the standard primary form (\ref{T Phi primary OPE}) between $T$ and any fundamental field; OPEs between $\bar{T}$ and fundamental fields are complex conjugates of the ones we already found. This proves item (\ref{Prop T (a)}).

Next, we calculate $G(w)T(z)$, which is straightforward using (\ref{G Phi 1-loop vanishing}) and (\ref{G (fund field) OPEs}):
\begin{multline*}
G(w) \underbrace{\big(\lan\dd\gamma, a \ran+\lan \dd b, \dd c \ran+\frac{g}{2}\lan \dd b,[a,c] \ran \big)(z)}_{T(z)} \sim \\
\sim
\frac{\lan \dd b,a \ran(z)}{(w-z)^2} +\frac{\lan \dd^2 b, a\ran(z)}{w-z}+ \frac{\lan\dd b,a\ran(z)}{(w-z)^2}+ \frac{\lan \dd b, \dd a\ran(z)}{w-z}+\rg \sim \\ \sim
\frac{2\lan \dd b,a\ran(z)}{(w-z)^2}+\frac{\dd \lan \dd b,a\ran(z)}{w-z}+\rg
\sim \frac{2G(w)}{(z-w)^2}+\frac{\dd G(w)}{z-w}+\rg
\end{multline*}
where in the last step we re-expanded the fields in the r.h.s. at $w$ instead of $z$. Similarly, one finds
\begin{multline}  \label{G Tbar}
G(w) \underbrace{ \big(\lan \db\bar\gamma,\bar{a} \ran+ \lan \db b, \db c \ran+\frac{g}{2}\lan\db b,[\bar{a},c]\ran\big)(z)}_{\bar{T}(z)} \sim
\\ \sim
\frac{\lan \db b, \db a \ran(z)}{w-z}+\frac{g}{2}\frac{\lan\db b,[\bar{a},a]\ran(z)}{w-z}+\rg \quad \sim\quad \rg
\end{multline}
Thus, $G$ is indeed a primary field of dimension $(2,0)$ (note that we do not see the pole $\frac{1}{\bar{z}-\bar{w}}$ in $\bar{T}G$ OPE, since its coefficient $\db G$ vanishes under the correlator). By complex conjugation, we get that $\bar{G}$ is $(0,2)$-primary. This proves item (\ref{Prop T (b)}).

Finally, item (\ref{Prop T (c)}) follows immediately from (\ref{Prop T (b)}) by applying $Q_q$ to the $GT$, $G\bar{T}$ OPEs and using the fact that $T,\bar{T}$ are $Q_q$-closed:
\begin{multline*} 
T(w) T(z)=Q_q(G(w) T(z))+G(w) \underbrace{Q_qT(z)}_0\sim \\
\sim Q_q\Big(\frac{2G(z)}{(w-z)^2}+\frac{\dd G(z)}{w-z}+\rg\Big) \sim \frac{2T(z)}{(w-z)^2}+\frac{\dd T(z)}{w-z}+\rg 
\end{multline*}
and
$$ T(w) \bar{T}(z) = Q_q(\underbrace{G(w) \bar{T}(z)}_{\rg})+G(w)\underbrace{Q_q\bar{T}(z)}_0 \sim \rg$$
\end{proof}

\begin{example}
If a field $\Phi$ is primary of conformal dimension $(0,\bar{\Delta})$, then $\dd \Phi$ is also primary, of dimension $(1,\bar{\Delta})$. This follows from (\ref{T Phi primary OPE}) with $\Delta=0$, by applying $\dd_z$. Similarly, for $\Phi$ primary of dimension $(\Delta,0)$, $\db\Phi$ is primary of dimension $(\Delta,1)$.  This implies in particular that derivatives of fundamental fields 
$$\db a, \dd \gamma, \db \gamma, \dd\db\gamma, \dd b, \dd\db b, \dd c, \dd\db c$$ 
and complex conjugates are primary (but higher derivatives are non-primary).
\end{example}

%\subsection{Primary fields}

\section{Examples of fields with a quantum correction to dimension (``vertex operators'')}\label{ss: vertex operators}
In this section we will present vertex operators with anomalous dimensions, i.e., with the actual conformal dimension different from the naive one, defined in Section \ref{sss: conf dimension}. We obtain these dimensions in two ways: first, by considering OPEs with $T$ in Subsection \ref{sss: V and W}. The second way is  due to singular  subtractions in renormalized products in Subsection \ref{sss: V and W corr}.

These two ways in the standard case of the free scalar field are:
\begin{enumerate}
\item OPE of the vertex operator $V_\alpha=\nl e^{i\alpha\phi}\nr$ %(with $\phi$ the holomorphic half of the scalar field) 
with the energy-momentum tensor $T=-\frac12\nl\dd \phi\dd\phi\nr$ is
$$ T(w) V_\alpha(z)\sim  \frac{\frac{\alpha^2}{2}V_\alpha(z)}{(w-z)^2}+ \frac{\dd V_\alpha(z)}{w-z}+\reg
$$
Together with the similar OPE $\bar{T}(w)V_\alpha(z)$, this implies that $V_\alpha$ is primary, of conformal dimension $(\Delta=\frac{\alpha^2}{2},\bar{\Delta}=\frac{\alpha^2}{2})$.
\item %$e^{i\alpha\phi}=\sum_{k=0}^\infty \frac{(i\alpha)^k}{k!}\phi^k$....
Recall that the renormalized (normally ordered) field depends on the choice of local coordinate. In particular,
under the infinitesimal change of local coordinate 
%infinitesimal dilatation 
$z\ra z'=(1+\epsilon)z$, the renormalized field $\nl\phi^k(0)\nr$ transforms as
\begin{equation}\label{phi^k transformation}
\nl\phi^k\nr_{z} \ra \quad \nl\phi^k\nr_{z'}=\quad\nl\phi^k\nr_{z}+ \epsilon\, k(k-1) \nl\phi^{k-2}\nr_{z}  
\end{equation}
up to $O(\epsilon^2)$ terms, as proven by induction in $k$ using 
$$\nl\phi^{k+1}(0)\nr_z=\lim_{p\ra 0}\Big(\phi(p)\nl\phi^{k}(0)\nr_z +2 k\nl\phi^{k-1}(0)\nr_z\log|z(p)|\Big)$$
in local coordinate $z$. Here $p$ is a point of insertion of an observable and here we take care to distinguish between a point $p$ and its coordinate $z(p)$.
Summing (\ref{phi^k transformation}) over $k$ with coefficients $\frac{(i\alpha)^k}{k!}$, we obtain the transformation law for the vertex operator:
\begin{equation}\label{Free scalar V scaling}
(V_\alpha)_z \ra (V_\alpha)_{z'}=\left(1-\epsilon \alpha^2\right)(V_\alpha)_{z} 
\end{equation}
This is consistent with scaling dimension $\Delta+\bar{\Delta}=\alpha^2$.
%\footnote{
%Allowing $\epsilon$ to be complex (thus, considering infinitesimal dilatation accompanied by infinitesimal rotation), one recovers from (\ref{Free scalar V scaling}) the left/right dimensions $h,\bar{h}$ separately.
%}
\end{enumerate}

\subsection{New vertex operators $V$ and $W$ and their conformal dimensions}
\label{sss: V and W}

Let us fix $X\in\g$ a Lie algebra element,  fix $Y\in\g$ an eigenvector of $\mr{ad}_X$ with eigenvalue $\alpha$ and fix $\rho\in\g^*$ an eigenvector of the coadjoint action $\mr{ad}^*_X$ with eigenvalue $-\alpha$, i.e.:
$$ \ad_X Y=\alpha Y,\qquad \ad^*_X \rho = -\alpha \rho$$

Consider the following composite fields (``vertex operators''):\footnote{
We understand $V_{X,\rho}$ as $\sum_{n\geq 0} \frac{1}{n!}\lan \rho,a \ran\, \lan X,\bar\gamma \ran^n$. Each term in the sum is a composite field understood as the renormalized product $\displaystyle{\til{\lim_{z'\ra z}}}\,\frac{1}{n!}\lan \rho,a(z') \ran\, \lan X,\bar\gamma(z) \ran^n$. Here we can safely put all $\bar\gamma$'s into the same point as they are regular with each other. In fact, since $a(z') \bar\gamma^n(z)$ OPE contains only powers of logs (cf. (\ref{3-point a (bargamma ... bargamma) gamma})), there is no order-of-merging ambiguity in the renormalized product above. Similar remarks apply to $W_{X,Y}$.
}
\begin{equation}\label{vertex operators}
V_{X,\rho}=\lan \rho,a \ran e^{\lan X,\bar\gamma \ran} , \qquad W_{X,Y}= \lan \gamma-\bar\gamma,Y \ran e^{\lan X,\bar\gamma \ran}
\end{equation}
Note that $V$ depends only on the vectors $X,\rho$ and $W$ depends only on $X,Y$.
%We call them ``vertex operators,'' as they receive a quantum correction to their conformal dimensions.

\begin{proposition} \label{Prop: vertex operators}
Fields $V_{X,\rho}$, $W_{X,Y}$ are  primary, of conformal dimensions 
$$(\Delta=1-\frac{\alpha g}{2},\bar{\Delta}=-\frac{\alpha g}{2})\; \mbox{ for }V,\qquad
(\Delta=\frac{\alpha g}{2},\bar{\Delta}=\frac{\alpha g}{2})\; \mbox{ for }W$$
\end{proposition}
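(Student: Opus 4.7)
The plan is to verify the defining primary OPEs $T\Phi \sim \Delta\Phi/(w-z)^2+\dd\Phi/(w-z)+\reg$ and the complex conjugate one, with $\Phi = V_{X,\rho}$ or $W_{X,Y}$. I expand $V_{X,\rho}$ and $W_{X,Y}$ in powers of $\bar\gamma$,
\[
 V_{X,\rho}= \sum_{n\geq 0}\tfrac{1}{n!}\,\langle\rho,a\rangle \langle X,\bar\gamma\rangle^n,\qquad W_{X,Y}= \sum_{n\geq 0}\tfrac{1}{n!}\,\langle \gamma-\bar\gamma,Y\rangle \langle X,\bar\gamma\rangle^n,
\]
so the problem reduces to computing $T(w)$ (and $\bar T(w)$) acting on the generic summand and then summing the series in $n$.

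The computational engine will be the BRST relation $T=Q_qG$ (equation \ref{Q_q G = T}), which gives
$T(w)\Phi(z) = Q_q\!\left(G(w)\Phi(z)\right)+G(w)\,Q_q\Phi(z)$, and similarly for $\bar T$. Since $G=\langle a,\dd b\rangle$ has regular tree-level OPEs with both $a$ and $\bar\gamma$ (see \ref{G (fund field) OPEs}), no tree diagram produces a singularity in $G(w)V^{(n)}(z)$; singular contributions come only from 1-loop Feynman diagrams in which an interaction vertex on the loop ties $G(w)$ to several of the $\bar\gamma$-insertions of $V(z)$. These loop contributions are structurally the same as the ones that were resummed in Section \ref{sss: correlators with bargamma...bargamma}, and they produce terms of the form $\log^k|w-z|$, $(\bar w-\bar z)^k/(w-z)^k$ and $1/(w-z)^2$ weighted by nested commutators $\ad_X^k$. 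Applying $Q_q$ (via contour integration of $J^\tot$) converts these into the singular structure expected of an OPE with $T$.

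The key resummation step is to use the eigenvector identities $\ad_X Y=\alpha Y$ and $\langle\rho,\ad_X^n Z\rangle=\alpha^n\langle\rho,Z\rangle$ (the latter from $\ad^*_X\rho=-\alpha\rho$) to collapse the nested-bracket factors into simple numerical powers $\alpha^n$. Combined with the $1/n!$ weights of the exponential, this produces geometric series of the form $\sum_n \frac{(\alpha g\log|w-z|)^n}{n!}=|w-z|^{\alpha g}$, which, multiplied by the bare singularities coming from the $a$-prefactor (resp.\ the $(\gamma-\bar\gamma)$-prefactor), reconstructs the shifted pole structure $(w-z)^{-2+\alpha g/2}(\bar w-\bar z)^{\alpha g/2}$ (resp.\ $(w-z)^{-2+\alpha g/2}(\bar w-\bar z)^{-2+\alpha g/2}$) characteristic of a primary field of the announced dimension. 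All potentially anomalous non-primary singular pieces (pure $\log|w-z|$ tails, mixed poles like $(\bar w-\bar z)/(w-z)$ with wrong combinatorial weights) must, and will, reassemble into the advertised shifted power law; anything left over would be incompatible with the $Q_q$-closedness of $T,\bar T$ that was established in Proposition \ref{Prop T}.

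The main obstacle I anticipate is precisely this bookkeeping: systematically checking that at each order in $g$ the offending singular structures coming from the $1$-loop $G(w)V^{(n)}$ subdiagrams cancel or combine exactly into the power law $|w-z|^{\alpha g}$, uniformly in $n$, and that the unimodularity assumption (\ref{trace identity}) is what kills the extra boson-fermion loop configurations that would otherwise spoil the cancellation. A useful consistency check at the end — independent of the OPE derivation — is to compare with the scaling computation of Subsection \ref{sss: V and W corr}, which reads off the same dimensions from the coordinate-dependence of the singular subtraction in the renormalized product defining $V$ and $W$.
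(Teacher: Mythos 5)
Your overall strategy (expand $V=\sum_n V^{(n)}$ with $V^{(n)}$ as in (\ref{V^n}), compute the OPE with $T$ term by term, use the eigenvalue relations for $X,\rho,Y$ to collapse nested commutators into powers of $\alpha$, then resum in $n$) matches the paper's, and routing the computation through $T=Q_qG$ is in principle a legitimate variant of the paper's direct diagrammatic evaluation. However, two of your central claims are wrong, and they sit exactly where the content of the proof lies. First, your diagram accounting is backwards: the pairwise regularity of $G$ with $a$ and with $\bar\gamma$ (\ref{G (fund field) OPEs}) only excludes singularities from dressed contractions linking one constituent of $G$ to one constituent of $V^{(n)}$; branch/chain subgraphs in which a string of interaction vertices ties $T$ (or $G$) to several of the $\bar\gamma$'s at once are trees, not loops, and in the paper's computation these are precisely the diagrams that produce the second-order pole, while the genuine $1$-loop families are the ones that cancel by boson-fermion cancellation (Lemma \ref{lemma: boson-fermion cancellation}). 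Second, the structure you predict for the resummed OPE --- a shifted power law $(w-z)^{-2+\alpha g/2}(\bar w-\bar z)^{\alpha g/2}$ --- is not the primary OPE (\ref{T Phi primary OPE}) you set out to verify, and it is not what happens: at each fixed $n$ the $\log^k|w-z|$ terms cancel identically and one is left with plain integer-order poles whose double-pole coefficient is the Jordan-cell combination $V^{(n)}-\frac{g\alpha}{2}V^{(n-1)}$, see (\ref{TV explicit}); only after summing over $n$ does the number $\Delta=1-\frac{\alpha g}{2}$ appear multiplying $V$. The anomalous powers $|w-z|^{\pm\alpha g}$ live in correlators and OPEs of $V$, $W$, $H_X$ with other fields (e.g.\ (\ref{a H_X = V OPE})), not in the $TV$ OPE.

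Beyond these two points, the step that actually fixes the dimension --- the multinomial resummation of the chain diagrams using the $\log^n$ three-point functions of Section \ref{sss: correlators with bargamma...bargamma}, which produces the coefficient $-\frac{g\alpha}{2}V^{(n-1)}$ in the double pole (and its $\bar T$ counterpart) --- is only asserted in your proposal, and the appeal to $Q_q$-closedness of $T,\bar T$ cannot substitute for it: closedness constrains the form of the OPE but does not determine the values of $\Delta$, $\bar\Delta$. As written, your argument could not distinguish $\Delta=1-\frac{\alpha g}{2}$ from any other shift, so the key quantitative content of the proposition is missing; the cross-check against the coordinate-dependence of the singular subtractions is a good idea, but in your text it is again a citation rather than a derivation.
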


\begin{proof}
First, note that for any composite field $\Phi$, the residue of the first order pole in the OPE $T(w)\Phi(z)$ (commonly denoted $L_{-1}\Phi=\frac{1}{2\pi i}\oint T(w) \Phi(z)$) is $\dd\Phi$. This follows from the fact that $L_{-1}=\dd$ on fundamental fields (from Proposition \ref{Prop T}), and hence for derivatives of fundamental fields (as $L_{-1}$ commutes with derivatives, by the same logic as (\ref{Prop Q (b)}) of Proposition \ref{Prop Q}). Finally, $L_{-1}$ is a derivation of the renormalized product, by the same logic as (\ref{Prop Q (c)}) of Proposition \ref{Prop Q} (contour switching argument).

Consider the field 
\begin{equation}\label{V^n}
V^{(n)}=\frac{1}{n!}\lan \rho,a \ran \lan \bar\gamma,X\ran^n
\end{equation} 
Its OPE with $T$ must be of the form
\begin{equation}\label{TV^n}
T(w) V^{(n)}(z)  \sim \frac{\cdots}{(w-z)^3}+\frac{\cdots}{(w-z)^2}+\frac{\dd V^{(n)}(z)}{w-z}+\rg
\end{equation}
Here we cannot get a pole higher than third order, because l.h.s. has weight $(3,0)$ (and we don't have fields of negative weight to accompany a pole of order $>3$). The coefficient of the third order pole must be of weight $(0,0)$ and in fact there are no such contributing diagrams.\footnote{
Indeed, in such a diagram external half-edges would need to be decorated by either $\gamma$ or $\bar\gamma$ (since $a,\bar{a},\dd b, \db b$ have nonzero weight and presence of $c$ on an external half-edge would require, by conservation of ghost number, another external half-edge decorated by $\dd b$ or $\db b$). Thus, the diagram must consist of $\geq 2$ trees rooted at $a$ from $V^{(n)}$, at $a$ or $c$ from $T$ and at external half-edges. Leaves of the trees are decorated jointly by $n$ fields $\bar\gamma$ and one $\dd\gamma$ or $\dd b$ (from $T$). Therefore, there must be a tree whose leaves are decorated only by $\bar\gamma$'s. Such a tree vanishes.
} 
Looking for the second order pole, we look for diagrams producing a field of weight $(1,0)$. There are three families of such diagrams:%\footnote{
%Additionally there is a family of diagrams with a loop starting at $\dd\gamma(w)$ and ending at $a(w)$, with several insertions of $\bar\gamma(z)$ (and the rest of the constituent fundamental fields of $V^{(n)}_{X,\rho}(z)$ left not connected; there is also a similar family connecting $\dd b(w)$ with $\dd c(w)$ by a loop. These two classes of diagrams cancel each other by the argument of Lemma  \ref{lemma: boson-fermion cancellation}. Similarly, the OPE $\bar{T} V$ contains, in addition to (\ref{barT V diagrams}), diagrams containing a loop starting at $\db b(w)$ and ending at $\db c(w)$, with $\bar\gamma$-insertions, and also diagrams involving the cubic term $\lan \db b,[\bar{a},c] \ran$ from $\bar{T}$ -- here the loop starts at $\db b$ and ends at $c$, and $\bar{a}$ connects to another $\bar\gamma$. These two classes of diagrams also cancel each other.
%}
%\marginpar{1st pic: remove $(X,\rho)$}
$$ \vcenter{\hbox{\includegraphics[scale=0.7]{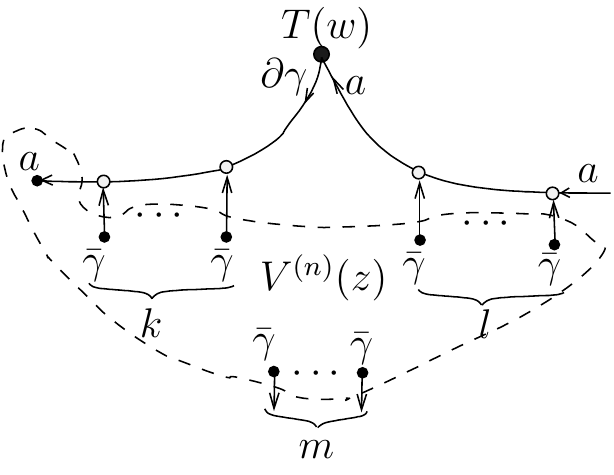}}}\quad, \quad
\underbrace{\vcenter{\hbox{\includegraphics[scale=0.7]{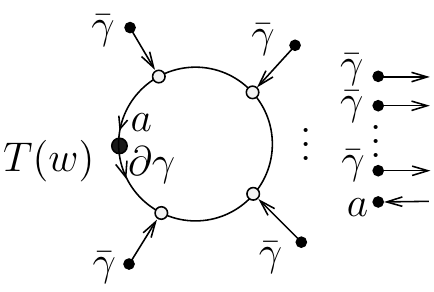}}}\quad, \quad
\vcenter{\hbox{\includegraphics[scale=0.7]{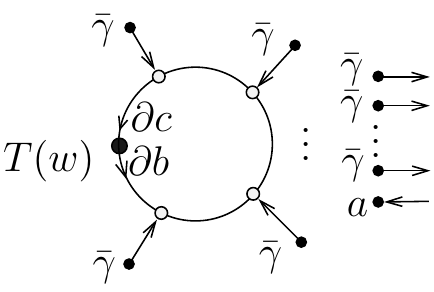}}}}_{\mbox{cancel out}}
 $$
Diagrams of second and third type cancel each other by the mechanism of Lemma  \ref{lemma: boson-fermion cancellation}.
Diagrams of the first type, evaluated using the computations of Section \ref{sss: correlators with bargamma...bargamma},  jointly give the following contribution to the OPE (\ref{TV^n}):
\begin{multline*}
\sum_{k,l,m\geq 0, k+l+m=n} \frac{1}{n!}\left(
\begin{array}{c}
n \\ k,  l,  m
\end{array}
\right)
g^k(-g)^l  \lan \rho, \ad_X^{k+l} a \ran \dd_w \frac{\log^k|z-w|}{z-w} \cdot \\ \cdot  \log^l|z-w|\cdot \lan \bar\gamma,X\ran^m \\
=\frac{1}{(z-w)^2}\left(\frac{1}{n!}\lan\rho, a \ran \lan\bar\gamma,X \ran^n-\frac{g\alpha}{2(n-1)!}
\lan\rho, a \ran \lan\bar\gamma,X \ran^{n-1}\right)
\end{multline*}
Here all fields are at $z$; $\left(
\begin{array}{c}
n \\ k,  l,  m
\end{array}
\right)$ is the multinomial coefficient.
Note that all the terms involving positive powers of $\log$ have cancelled out (as expected) -- in fact, all diagrams except ones with $k+l\leq 1$ cancel out when summed with $k+l$ fixed. Thus,  we obtained the explicit form of the OPE (\ref{TV^n}):\footnote{
This OPE implies that fields $\left(-\frac{g\alpha}{2}\right)^{-n} V^{(n)}$ comprise a Jordan cell of infinite rank of the Virasoro operator $L_0$, see Section \ref{sss: log phenomena}.
}
\begin{comment}
\footnote{\label{footnote: logCFT Jordan cell}
In the context of logarithmic CFT (see e.g \cite{Rasmussen}), one says that fields $\Phi_0,\ldots,\Phi_r$ form a Jordan cell (of the Virasoro operator $L_0$) of rank $r$, if one has ${T(w)\Phi_n(z)\sim \frac{h \Phi_n(z)+  \Phi_{n-1}(z) }{(w-z)^2}+ \frac{\dd \Phi_n(z)}{w-z}+\reg}$ for $0\leq n\leq r$, where $\Phi_{-1}=0$ by convention. Thus, the OPE (\ref{TV explicit}) can be interpreted as having a Jordan cell of infinite rank, with $h=1$, comprised of fields $\left(-\frac{g\alpha}{2}\right)^{-n} V^{(n)}_{X,\rho}$ with $n\geq 0$.
}\end{comment}
\begin{equation}\label{TV explicit}
T(w) V^{(n)}(z) \sim \frac{\big(V^{(n)} -\frac{g \alpha}{2} V^{(n-1)}\big)(z)}{(w-z)^2}+\frac{\dd V^{(n)}(z)}{w-z}+\rg
\end{equation}
Here by convention $V^{(-1)}=0$. Summing over $n\geq 0$, we find that our field $V_{X,\rho}=\sum_{n\geq 0}V^{(n)}$ satisfies the standard primary OPE with $T$, with holomorphic conformal dimension $\Delta=1-\frac{\alpha g}{2}$. A similar computation yields the OPE
$$ \bar{T}(w) V^{(n)}(z)\sim \frac{-\frac{g \alpha}{2} V^{(n-1)}(z)}{(\bar{w}-\bar{z})^2}+\frac{\db V^{(n)}(z)}{\bar{w}-\bar{z}}+\rg$$
where the relevant diagrams are
\begin{equation*}  \label{barT V diagrams}
\vcenter{\hbox{\includegraphics[scale=0.7]{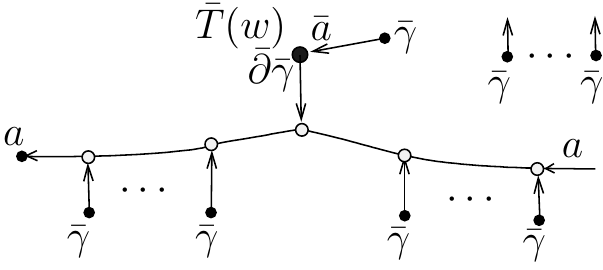}}} ,\quad,\quad
\underbrace{\vcenter{\hbox{\includegraphics[scale=0.7]{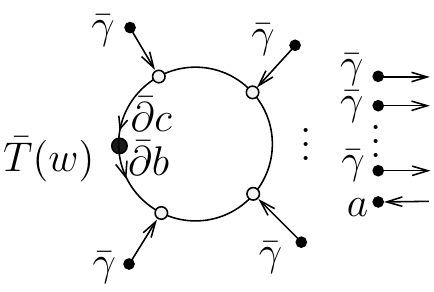}}} ,\quad,\quad
\vcenter{\hbox{\includegraphics[scale=0.7]{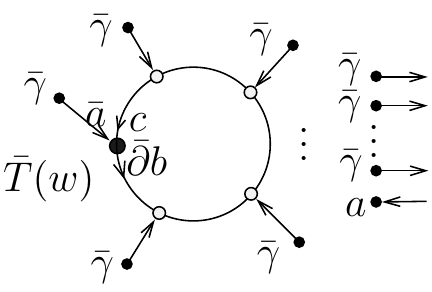}}}}_{\mbox{cancel out}}
\end{equation*}
Thus, the anti-holomorphic dimension of the field $V_{X,\rho}$ is $\bar\Delta=-\frac{\alpha g}{2}$.

Computation of the OPEs $TW$, $\bar{T}W$ is similar, with the following relevant diagrams (we omit the families cancelling by boson-fermion cancellation in the loop):
$$ \vcenter{\hbox{\includegraphics[scale=0.7]{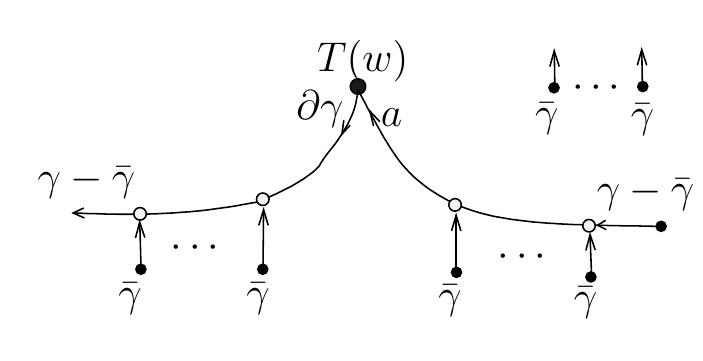}}}\qquad, \qquad  \vcenter{\hbox{\includegraphics[scale=0.7]{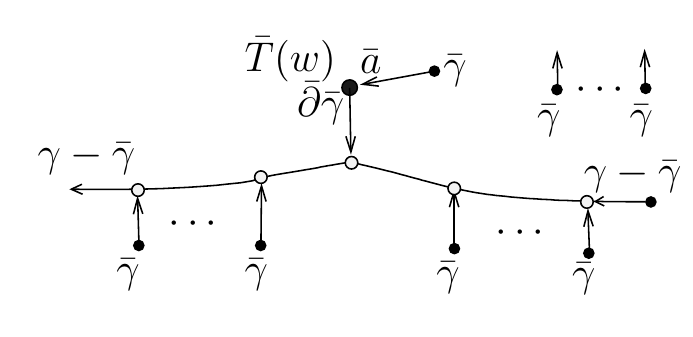}}} $$
\end{proof}

\subsection{Another view on conformal dimensions and examples of correlators}
\label{sss: V and W corr}
Ultimately, the source of the shift of the 
%the difference between the actual and the naive
conformal dimension 
%(from naive dimension to the actual one)
is in singular subtractions -- powers of logs -- needed in the renormalized products when we build the vertex operators from fundamental fields. These subtractions depend on the local coordinate and ultimately lead to the anomalous scaling behavior.

Explicitly: consider the field (\ref{V^n}) viewed as renormalized product
\begin{equation}\label{V^n renormalized}
V^{(n)}(0)=\lim_{p\ra 0}\left(V^{(n)}_{\mr{split}}(p,0)-\Big[V^{(n)}_{\mr{split}}(p,0)\Big]_\mr{sing}\right) 
\end{equation}
where
$$V^{(n)}_{\mr{split}}(p,0)=\frac{1}{n!}\lan \rho,a(p) \ran \lan\bar\gamma(0),X \ran^n$$
If we make an infinitesimal change of local coordinate %dilatation 
$z\ra z'=(1+\epsilon) z$, %, $w\ra w'=\Lambda w$ 
we have the following:
\begin{enumerate}[(a)]
\item \label{vertex conf dim (0)} The split field transforms as
$$ V^{(n)}_{\mr{split}}(p,0)_z\ra V^{(n)}_{\mr{split}}(p,0)_{z'}=(1-\epsilon) V^{(n)}_{\mr{split}}(p,0)_z $$
up to $O(\epsilon^2)$ terms; the subscript $z$, $z'$ refers to the coordinate system.
Here the $\epsilon$-correction comes from the transformation of $a$.
\item \label{vertex conf dim (a)}
The singular subtraction in  (\ref{V^n renormalized})
is transformed  as 
$$\Big[V^{(n)}_{\mr{split}}(p,0)\Big]^\mr{sing}_z \ra
\Big[V^{(n)}_{\mr{split}}(p,0)\Big]^\mr{sing}_{z'} =
(1-\epsilon)\Big[V^{(n)}_{\mr{split}}(p,0)\Big]^\mr{sing}_{z}-\epsilon\, \alpha g V^{(n-1)}_z(0) $$
\item \label{vertex conf dim (b)} The field $V^{(n)}$ is transformed as
\begin{equation} \label{V^n scaling transf}
V^{(n)}_z \ra V^{(n)}_{z'}= (1-\epsilon) V^{(n)}_z+ \epsilon\,  \alpha g V^{(n-1)}_z 
\end{equation}
where all the fields are at the origin. 
%Here the first term in the brackets on r.h.s. comes from the transformation of $a$.
\end{enumerate}
One proves this by induction in $n$: (\ref{vertex conf dim (0)}) is staightforward,\footnote{\label{footnote: bargamma^n no anomalous dim}
Note that in the composite field $\lan\bar\gamma,X \ran^n$  there are no singular subtractions (logarithmic or otherwise),
%the constituent fundamental fields $\bar\gamma$ are regular with each other
thus there is no anomalous dimension.}
 (\ref{vertex conf dim (b)}) follows from (\ref{vertex conf dim (0)}) and (\ref{vertex conf dim (a)}) immediately. In turn, (\ref{vertex conf dim (a)}) follows from the (\ref{vertex conf dim (b)}) for smaller  $n$ and from the OPE 
\begin{equation}\label{V^n split OPE}
%\lan \rho,a(w) \ran\; \frac{1}{n!}\lan \bar\gamma(z),X \ran^n
V^{(n)}_{\mr{split}}(p,0)
\sim  \sum_{k=0}^n \frac{(-\alpha g)^k}{k!}{\log^k|z(p)|}\,  
V^{(n-k)}(0)+o(1)_{p\ra 0}
\end{equation}
obtained similarly to (\ref{3-point a (bargamma ... bargamma) gamma}). % by summation in $n\geq 0$. 
Here terms $1\leq k\leq n$ give the singular part of the OPE and $k=0$ is the regular part (modulo terms which are continuous and vanishing at $p\ra 0$). %\marginpar{add the computation of (b)}
To see (\ref{vertex conf dim (a)}) explicitly, we compute from (\ref{V^n split OPE}):
\begin{multline*}
\Big[V^{(n)}_{\mr{split}}(p,0)\Big]^\mr{sing}_{z'}=\sum_{k=1}^n  \frac{(-\alpha g)^k}{k!}{\log^k|z'(p)|}\,  
V^{(n-k)}_{z'}
=(1-\epsilon)\Big[V^{(n)}_{\mr{split}}(p,0)\Big]^\mr{sing}_{z}+\\
+\epsilon \underbrace{\left(
\sum_{k=1}^n \frac{(-\alpha g)^k}{k!} k\log^{k-1}|z(p)|\,V^{(n-k)}_{z}+\sum_{k=1}^{n-1} \frac{(-\alpha g)^k}{k!} \alpha g \log^{k}|z(p)|\,V^{(n-k-1)}_{z}
\right)}_{-\alpha g V^{(n-1)}_{z}}
\end{multline*}
Here $k$-th term in the first sum on the r.h.s., for $k\neq 1$, is cancelled by $(k-1)$-st term in the second sum.

%Thus, the field $V^{(n)}$ transforms under the infinitesimal change of local coordinate $z\ra z'=(1+\epsilon) z$ as 
%$$ V^{(n)}_z \ra V^{(n)}_{z'}= V^{(n)}_z- \epsilon \left(V^{(n)}_z- \alpha g V^{(n-1)}_z\right) $$
%Here the first term in the brackets on r.h.s. comes from the transformation of $a$. 
Summing (\ref{V^n scaling transf}) over $n$, we obtain the transformation property for the vertex operator:
$$ (V_{X,\rho})_z\ra (V_{X,\rho})_{z'}= (1-\epsilon (1-\alpha g))(V_{X,\rho})_z $$
confirming the scaling dimension $\Delta+\bar{\Delta}=1-\alpha g$ we obtained in Proposition \ref{Prop: vertex operators}. The case of the second vertex operator, $W_{X,Y}$, is treated similarly.
%\end{itemize}
%\end{remark}

%\subsubsection{Some correlators involving vertex operators $V$ and $W$; another view on conformal dimensions}
\begin{example}
Starting with correlators (\ref{2-point (a bargamma...bargamma) gamma}), (\ref{2-point a (bargamma...bargamma gamma)}), contracting all $\bar\gamma$'s with $X$, contracting $a$ with $\rho$ and $\gamma$ with $Y$, and summing over $n\geq 0$, we obtain the following $2$-point functions
\begin{align}
\label{2-point V  (gamma-bargamma)}\lan V_{X,\rho}(z) \;\lan (\gamma-\bar\gamma)(w),Y\ran \ran =& \lan \rho,Y \ran \frac{|z-w|^{\alpha g}}{z-w}\\
\label{2-point a W}
\lan  \lan\rho, a(z) \ran\; W_{X,Y}(w)\ran  =& \lan \rho,Y \ran \frac{|z-w|^{-\alpha g}}{z-w}
\end{align}
--  power laws consistent with the  dimensions of the  primary fields involved.
%they exhibit the correct behavior of the correlation functions of primary fields.
\end{example}

\begin{remark}
Soaking fields $\tsoak$ and $\delta(c)$ %and $\Delta=\delta(c)\til\Delta $ 
(cf. Section \ref{sec: sphere}) are primary, of dimension $(0,0)$.\footnote{The idea of proof is as follows. Since the fields $\delta(c)$ and $\tsoak$
are $Q$-closed, one can recover their OPE with $T$ from their OPE with $G$ using (\ref{T Phi via G Phi}). For the OPE with $G$, there can be only poles of orders $1$ and $2$, due to weight counting. Second order pole in fact has no contributing diagrams (as follows from Feynman diagram combinatorics and weight restrictions). Coefficient of the first order pole is easily found as $G_{-1}\Phi(z)= \frac{1}{2\pi i}\oint G(w)\Phi(z)$, for $\Phi$ a soaking field, from the fact that $G_{-1}$ is a derivation (as proven by the contour switching argument similar to (\ref{J derivation})).
Interestingly, the renormalized product $\soak=\delta(c)\tsoak $ is non-primary -- already in the abelian case -- due to logarithmic singular subtractions (cf. (\ref{soaking fields OPE})).
}%\marginpar{Write more on combinatorics in the footnote?}
%By the logic of 
Thus, correlators (\ref{2-point V  (gamma-bargamma)}), (\ref{2-point a W}) can be extended to $4$-point functions of primary fields on a sphere, e.g.
$$ \lan V_{X,\rho}(z_1) \;\lan (\gamma-\bar\gamma)(z_2),Y\ran \;\tsoak(z_3)\; \delta(c(z_4))\ran_{\CC P^1} = \lan \rho,Y \ran \left|\frac{z_{23}}{z_{12}z_{13}}\right|^{-\alpha g}\frac{z_{23}}{z_{12}z_{13}}$$
where $z_{ij}=z_i-z_j$ (note that the r.h.s. does not depend on $z_4$). This result is consistent with the ansatz for $4$-point functions of primary fields implied by global conformal invariance on $\CC P^1$.
\end{remark}

Furthermore, we can introduce the field
$$ H_X=e^{\lan \bar\gamma,X \ran} $$
It is primary, of conformal dimension $(0,0)$ (as proven by the same technology as in the proof of Proposition \ref{Prop: vertex operators} above; cf. also footnote \ref{footnote: bargamma^n no anomalous dim}). From (\ref{3-point a (bargamma ... bargamma) gamma}), we find the $3$-point function
\begin{equation}\label{3-point a H (gamma-bargamma)}
\Big\langle  \lan \rho,a(w_1)\ran\; H_X(z)\; \lan  (\gamma-\bar\gamma)(w_2),Y \ran  \Big\rangle=
\frac{\lan \rho, Y\ran  }{w_1-w_2}\left|\frac{z-w_2}{z-w_1}\right|^{\alpha g}
\end{equation}
OPEs of $H_X$ with either $a$ or $\gamma-\bar\gamma$ yield our two vertex operators:%\footnote{
%For instance (\ref{a H_X = V OPE}) follows from the OPE $\lan \rho,a \ran(w) \frac{1}{n!}\lan \bar\gamma,X \ran^n(z)\sim \sum_{k=0}^n \frac{(-\alpha g)^k}{k!(n-k)!}{\log^k|w-z|}\,  \big(\lan \rho,a\ran \,\lan\bar\gamma,X\ran^{n-k}\big)(z)+o(1)$ (obtained similarly to (\ref{3-point a (bargamma ... bargamma) gamma})) by summation in $n\geq 0$. Here terms $1\leq k\leq n$ give the singular part of the OPE and $k=0$ is the regular part (modulo terms which are continuous and vanishing at $w=z$).
%}
\begin{align}
\lan \rho,a(w)\ran\; H_X(z) \sim& V_{X,\rho}(z)|w-z|^{-\alpha g}+o(1)_{w\ra z} \label{a H_X = V OPE}\\
\lan (\gamma-\bar\gamma)(w),Y \ran\; H_X(z)   \sim & W_{X,Y}(z)|w-z|^{\alpha g}+o(1)_{w\ra z}
 \label{(gamma-bargamma) H_X = W OPE}
\end{align}
Here (\ref{a H_X = V OPE}) follows immediately from the OPE (\ref{V^n split OPE}).
%\marginpar{put earlier}
%\begin{multline}
%\lan \rho,a \ran(w) \frac{1}{n!}\lan \bar\gamma,X \ran^n(z)\sim \\
%\sim  \sum_{k=0}^n \frac{(-\alpha g)^k}{k!(n-k)!}{\log^k|w-z|}\,  \big(\lan \rho,a\ran \,\lan\bar\gamma,X\ran^{n-k}\big)(z)+o(1)
%\end{multline}
%(obtained similarly to (\ref{3-point a (bargamma ... bargamma) gamma})) by summation in $n\geq 0$. Here terms $1\leq k\leq n$ give the singular part of the OPE and $k=0$ is the regular part (modulo terms which are continuous and vanishing at $w=z$).

%\begin{remark}
We make the following remarks.
\begin{itemize}
\item Our construction of ``vertex operators'' is based on exact summation of perturbation theory in all orders in $g$. E.g., non-trivial exponents in the correlators (\ref{2-point V  (gamma-bargamma)}), (\ref{2-point a W}), (\ref{3-point a H (gamma-bargamma)}) arise from the summation of powers of logs appearing in the correlators of Section \ref{sss: correlators with bargamma...bargamma}. 
\item Vertex operators are not differential polynomials (of finite order) in fundamental fields -- we need to add infinitely many monomials to produce a field of non-trivial dimension. 
\end{itemize}

\subsection{A remark on logarithmic phenomena}\label{sss: log phenomena}
Usually, primary fields are defined by their OPE with the energy-momentum tensor (\ref{T Phi primary OPE}). However, exploring phenomena like in (\ref{TV explicit}), we recall the refinement of this definition \cite{Gurarie}. A field is called ``pseudo-primary'' if 
%Let us call the ``almost-primary'' the fields that 
it has at most a second-order pole in its OPE with $T, \bar{T}$.\footnote{
Another way to say it is: $\Phi$ is pseudo-primary if $L_n\Phi=0$, $\bar{L}_n=0$ for $n\geq 1$.
}
Then pseudo-primary fields form a closed subspace w.r.t. $L_0,\bar{L}_0$. If $L_0,\bar{L}_0$ acting on the space of pseudo-primary fields are jointly diagonalizable, we get the standard definition of primary fields. If not, we have the Jordan cell structure where only the lowest component is a primary field and all the rest are only pseudo-primary (but not primary). More precisely, the space of pseudo-primary fields splits into a direct sum of filtered subspaces\footnote{
I.e., a summand of the space of pseudo-primary fields is a filtered subspace $\mathbb{F}_0\subset \cdots \subset \mathbb{F}_r $, with $\dim \mathbb{F}_k=k+1$ and with $L_0$ preserving the filtration (while $L_0-\bar{L}_0$ acts on $\mathbb{F}_r$  as a multiple of identity). For each $k$ we choose a vector $\Phi_k\in \mathbb{F}_k$ with nonzero image in the quotient $\mathbb{F}_k/\mathbb{F}_{k-1}$. We can make this sequence of choices in such a way that $L_0$ has the standard Jordan cell form in the basis $\{\Phi_0,\ldots,\Phi_r\}$.
} $\mr{Span}\{\Phi_0,\ldots,\Phi_r\}$ -- Jordan cells -- satisfying the OPEs:
\begin{align}\label{log OPE T Phi_k}
T(w) \Phi_k(z)\sim& \frac{\Delta \Phi_k(z)+\Phi_{k-1}(z)}{(w-z)^2}+\frac{\dd \Phi_k(z)}{w-z}+\reg,
\\ \label{log OPE Tbar Phi_k}
\bar{T}(w) \Phi_k(z)\sim& \frac{\bar\Delta \Phi_k(z)+\Phi_{k-1}(z)}{(\bar{w}-\bar{z})^2}+\frac{\db \Phi_k(z)}{\bar{w}-\bar{z}}+\reg
\end{align}
where by convention  $\Phi_{-1}=0$.
Here $(\Delta,\bar\Delta)$ are called conformal dimensions. Actually, the condition that the infinitesimal rotation operator $L_0-\bar{L}_0$ integrates to a representation of the group $U(1)$ is tantamount to requiring that $L_0-\bar{L}_0$ is diagonalizable, with integer eigenvalues. Thus, we must have $\Delta-\bar\Delta\in \ZZ$ %and $\Phi_{k-1}=\til\Phi_{k-1}$
and the upper-triangular parts of $L_0,\bar{L}_0$ must be the same
(in other words, we have the same $\Phi_{k-1}$ appearing in the OPE of $\Phi_k$ with $T$ and with $\bar{T}$). 

OPEs (\ref{log OPE T Phi_k}), (\ref{log OPE Tbar Phi_k}) imply the following behavior of fields $\Phi_k$ under a change of coordinates $z\ra z'=\Lambda z$ with $\Lambda \in \CC-\{0\}$ a scaling factor:%\marginpar{do we want the middle formula?}
\begin{equation}
(\Phi_k)_z\ra \quad (\Phi_k)_{z'}=\Lambda^{-L_0}\bar\Lambda^{-\bar{L}_0}(\Phi_k)_z=\Lambda^{-\Delta}\bar\Lambda^{-\bar\Delta}\sum_{j=0}^k \frac{(-2\log|\Lambda|)^j}{j!}(\Phi_{k-j})_z
\end{equation}
where the fields are at zero.

\begin{example}
Consider the theory with Lagrangian $b\dd\db c$ (this is the ghost sector of the abelian $BF$ theory). Field $1$ is primary. Field $\nl cb\nr$ is pseudo-primary, with $L_0 \nl cb\nr=1$ (this example of a logarithmic multiplet in the free $c=-2$ ghost system was studied in \cite{GFN,Flohr} and in the context of abelian $BF$ theory in \cite{LMY}). In this case, the rank of the Jordan cell is $2$.
This example generalizes to $N$-component abelian $BF$ theory, where one has a Jordan cell with $\Delta=\bar\Delta=0$ of rank $N+1$:
\begin{equation}\label{1-bc multiplet}
1\xleftarrow{L_0} C_1 \langle c,b \rangle \xleftarrow{L_0} C_2 \langle c,b \rangle^2 \xleftarrow{L_0} \cdots \xleftarrow{L_0} C_N \langle c,b \rangle^N
\end{equation}
where $C_k=\frac{(N-k)!}{k!N!}$ are normalization constants. Moreover, the exact same Jordan cell is present in the non-abelian theory, where $N=\dim\g$.
\end{example}

\begin{comment}
\begin{example}
In abelian $BF$ theory, with $\g=\RR$, the stress-energy tensor $T$ has a logarithmic partner
\begin{equation}\label{t ab BF}
t=cb\; T -\frac12 \dd^2 c\;b +2 \dd c\; \dd b-\frac12 c\;\dd^2 b
\end{equation}
satisfying 
$$T_z t_w \sim \frac{(2t+T)_w}{(z-w)^2}+\frac{\dd t_w}{z-w}+\reg,\quad \bar{T}_z t_w\sim \frac{T_w}{(\bar{z}-\bar{w})^2}+\frac{\db t_w}{\bar{z}-\bar{w}}+\reg$$ 
(the correction to $cb\,T$ in (\ref{t ab BF}) is tuned so that $L_{>0}t=0$). In other words, the pair $(T,t)$ forms a rank $2$ Jordan cell of dimension ${(\Delta=2,\bar\Delta=0)}$ (REF TO FLOHR?).\footnote{
Similarly, %still in $\g=\RR$ case, 
the field $G$ has a logarithmic partner $\sg=cb\,G+\frac12 b\dd a$. We note that $\til{t}=Q(\sg)$ is another version of a logarithmic partner for $T$.
% and the field $J$ has a logarithmic partner $cb \, J+\cdots $.
}
In $N$-component abelian theory, i.e. with $\g=\RR^N$, the stress-energy tensor becomes a part of a rank $N+1$ Jordan cell $(T,t_1,t_2,\ldots,t_N)$, where fields $t_k$ have the form
\begin{equation}
t_k= C_k\langle c,b \rangle^k T+\cdots
\end{equation}
with $C_k$ same normalization constants as in (\ref{1-bc multiplet}) and with the ``correction'' $\cdots$ some differential polynomial in $b,c$, of order $k$ in $b$ and in $c$.\marginpar{Does this ansatz work in non-abelian case?}
\end{example}
\end{comment}

\begin{example}%\marginpar{A different version of the previous - keep only one of the two.}
In abelian $BF$ theory, with $\g=\RR$, the field $G$ has a logarithmic partner $$\sg=cb\, G+\frac12 b\,\dd a$$  
(here the second term is a correction to $cb\,G$ tuned in such a way that $L_{>0}\sg=0 $). I.e., the pair $(G,\sg)$ forms a rank $2$ Jordan cell. Furthermore, the field $$t=Q(\sg)$$ 
is a logarithmic partner for the stress-energy tensor $T$.\footnote{This construction is a version of the ansatz for the logarithmic partner $t_\mr{ghost}=cb\,T_\mr{ghost} +\cdots$ of the stress-energy tensor in the free ghost system studied in \cite{FML}.} In %$N$-component 
non-abelian $BF$ theory, %i.e. with $\g=\RR^N$, 
$G$ becomes a part of a rank $N+1$ Jordan cell $(G,\sg_1,\ldots,\sg_N)$, where $N=\dim\g$, with
%$$\sg_k= C_k \Big(\langle c,b\rangle^kG+\frac{k}{2}\langle c,b\rangle^{k-1} \langle b,\dd a\rangle \Big) $$
\begin{multline} \sg_k= G_{-2}\big(C_k\,  \langle c,b \rangle^k + C_{k-1}\,  \langle c,b \rangle^{k-1}\big)-\frac32 G_{-1}L_{-1} \big( C_k\,   \langle c,b \rangle^k \big) =
\\ =  C_k \langle c,b\rangle^k G+\cdots 
\end{multline}
where $C_k$ are the same normalization constants as in (\ref{1-bc multiplet}) and $G_n 
%{: \Phi_z\mapsto \oint \frac{dw}{2\pi i} (w-z)^{n+1} G_w \Phi_z}
$ are the mode operators of the field $G$.\footnote{The fact that fields $\sg_k$ are pseudo-primary ($L_{>0}\sg_k=\bar{L}_{>0}\sg_k=0$) and form a Jordan cell ($L_0 \sg_k= {2 \sg_k+\sg_{k-1} }$, $\bar{L}_0\sg_k=\sg_{k-1}$) is checked straightforwardly from the extended $c=0$ Virasoro commutation relations $[G_n,L_m]=(n-m)G_{n+m}$, $[L_n,L_m]=(n-m)L_{n+m}$ (read off from the OPEs $T(w)G(z)$ and $T(w)T(z)$ -- items (\ref{Prop T (b)}), (\ref{Prop T (c)}) of Proposition \ref{Prop T}) %, $[\bar{L}_n,G_m]=[\bar{L}_n,L_m]=0$ 
and from (\ref{1-bc multiplet}).
} Accordingly, the stress-energy becomes a part of a rank $N+1$ ($Q$-exact) Jordan cell $(T,t_1,\ldots,t_N)$, with $t_k=Q(\sg_k)$.
\end{example}

\begin{example}\label{ex: free boson Jordan cells}
In the free scalar field theory, fields $1$ and $\phi$ are primary. Field $\nl\phi^2\nr$ is pseudo-primary, in the same Jordan cell as $1$, with $L_0 \nl \phi^2\nr=-1$.  Pseudo-primary field $\nl\phi^3\nr$ is in the same Jordan cell as $\phi$, with $L_0 \nl\phi^3\nr=-3\phi$. Actually, due to infinite-dimensionality of the space of pseudo-primary fields, these Jordan cells are of infinite rank.
\end{example}

\begin{example}\label{ex: V^n Jordan cell}
Consider the fields $V^{(k)}$ defined in (\ref{V^n}). For $k=0$, $V^{(0)}=\lan \rho, a\ran$ is a primary field. %When appropriately normalized (see footnote ..), 
Fields $V^{(k)}$ for $k\geq 1$ are pseudo-primary and they are in the same Jordan cell, see (\ref{TV explicit}). Setting $\Phi_k=\left(-\frac{g \alpha}{2}\right)^{-k}V^{(k)} $, we have a standard basis for the Jordan cell.
\end{example}

\begin{remark}
Every time when we have a Jordan cell of infinite rank, %with $\Delta=\bar\Delta$, 
we can form a family of  vertex operators
$$\mathsf{V}_\varkappa=\sum_{k=0}^\infty \varkappa^k \Phi_k$$
parameterized by $\varkappa\in\RR$. Each $\mathsf{V}_\varkappa$ is a primary field of conformal dimension ${(\Delta+\varkappa,\bar\Delta+\varkappa)}$, as follows from (\ref{log OPE T Phi_k}), (\ref{log OPE Tbar Phi_k}).
%Similarly, we can form a vertex operator if $\Delta$ or $\bar\Delta$ vanishes.
Note that by this mechanism the two infinite Jordan cells of Example \ref{ex: free boson Jordan cells} give rise to the vertex operators $\nl\cos (\alpha \phi)\nr$, $\nl\sin(\alpha\phi)\nr$ -- linear combinations of the standard vertex operators $\nl e^{\pm i\alpha\phi}\nr$. Likewise, when applied to the infinite Jordan cell of Example \ref{ex: V^n Jordan cell} with $\varkappa=-\frac{g \alpha}{2}$, this mechanism produces the new vertex operator $V_{X,\rho}$ defined in  (\ref{vertex operators}).
\end{remark}

\begin{remark}
While there is a lot of recent developments in logarithmic CFT, see e.g. \cite{Simmons,FGST,AM,GRW} ($c=0$ logarithmic theories), \cite{AP,RW} ($c=2$ bosonic ghost system) %related to $\gamma\db a$ part of the $BF$ Lagrangian), 
our mechanism of getting a logarithmic theory is different. Namely, in the case treated in this paper, logs are coming out of the integration of the interaction vertex over the worldsheet.\footnote{
We also have a different, well-known, source of logs coming from the abelian ghost system  \cite{GFN,Flohr}.
} This is somewhat similar to the mechanism of getting logs in \cite{FLN}, where logs are obtained by integration over the moduli space of instantons.
\end{remark}

\newpage
\appendix

\section{Some useful plane integrals}
Let $D_R=\{u\in\CC \;|\; |u|\leq R \}$ be a disk of radius $R$ in $\CC$ centered at zero. Then we have 
\begin{equation}\label{log-integral over D_R}
\int_{D_R}\frac{d^2 u}{\pi}\, \frac{1}{(u-z)(\bar{u}-\bar{w})} = \log \left( \frac{R^2-z\bar{w}}{|z-w|^2} \right)
\end{equation}
for $z\neq w$ two points inside $D_R$. One finds this by writing the integrand as $\frac{\dd}{\dd \bar{u}} \frac{\log(\bar{u}-\bar{w})}{u-z}$, replacing the integration domain with $D_R$ with a cut from $w$ to the boundary of $D_R$ and with a small disk around $z$ removed, and applying Stokes' theorem.
Explicitly, denoting the l.h.s. of (\ref{log-integral over D_R}) by $I_R(z,w)$ and 
denoting the new integration domain $\mc{D}$, we have:
\begin{multline*}
I_R(z,w)= \int_\mc{D} d\bar{u} \frac{\dd}{\dd \bar{u}}\; \frac{du}{2\pi i}\;\frac{\log(\bar{u}-\bar{w})}{u-z} = \int_{\dd \mc{D}}  \frac{du}{2\pi i}\;\frac{\log(\bar{u}-\bar{w})}{u-z}\\
=\underbrace{-\int_{-R}^w \frac{du}{u-z}}_{I} \underbrace{-\log(\bar{z}-\bar{w})}_{II}+ \underbrace{\int_{-\pi}^\pi \frac{ d\phi}{2\pi}\; R e^{i\phi}\; \frac{\log(R e^{-i\phi}-\bar{w})}{R e^{i\phi}-z}}_{III}
\end{multline*}
Three terms here come from components of the contour $\dd D$. Term I comes from the jump of the integrand on the cut between $u=w$ and $u=-R$ and evaluates to $\log\frac{R+z}{z-w}$. Term II is the contribution of the small circle around $u=z$. Term III is the contribution of the big circle $\dd D_R$; it evaluates to
\begin{multline*}
\int_{-\pi}^\pi \frac{ d\phi}{2\pi}\; \frac{R-i\phi+\log(1-\frac{\bar{w}}{R} e^{i\phi})}{1-\frac{z}{R}e^{-i\phi}}\\ = 
\int_{-\pi}^\pi \frac{d\phi}{2\pi} \Big( \log R\; \sum_{p\geq 0}\left(\frac{z}{R}\right)^p e^{-i p \phi}  - i\phi \sum_{p\geq 0} \left(\frac{z}{R}\right)^p e^{-i p \phi}  
- \underbrace{ \sum_{p\geq 0} \left(\frac{z}{R}\right)^p e^{-i p \phi}\cdot
 \sum_{q\geq 1} \frac{1}{q}\left(\frac{\bar{w}}{R}\right)^q e^{i q \phi} }_{\mbox{only $p=q$ contributes}}
\Big)\\
=\log R +\sum_{p\geq 1}\frac{1}{p} \left(-\frac{z}{R}\right)^p -\sum_{p\geq 0} \frac{1}{p} \left( \frac{z\bar{w}}{R^2}\right)^p = \log R- \log(1+\frac{z}{R})+\log(1-\frac{z\bar{w}}{R^2})=\log \frac{R^2-z\bar{w}}{R+z}
\end{multline*}
Collecting all the terms, we get the result (\ref{log-integral over D_R}).

Similarly, one can treat the cases when one or both points $z,w$ are outside $D_R$: 
%Denoting the l.h.s. of (\ref{log-integral over D_R}) by $I_R(z,w)$, we have
\begin{equation}
\begin{array}{ll}
I_R(z,w)= - \log\left(1-\frac{R^2}{z\bar{w}}\right) & \mbox{if } |z|,|w|>R \\
I_R(z,w)= - \log\left(1-\frac{w}{z}\right) & \mbox{if } |w|<R<|z| \\
I_R(z,w)= - \log\left(1-\frac{\bar{z}}{\bar{w}}\right) & \mbox{if } |z|<R<|w| 
\end{array}
\end{equation}

One can use (\ref{log-integral over D_R}) to evaluate integrals over $\CC$ of products of expressions $\frac{1}{u-z_i}$ and $\frac{1}{\bar{u}-\bar{z}_i}$. For example, for $z,w,x$ three distinct points in $\CC$ we have
\begin{multline}\label{3-point integral}
\int_\CC \frac{d^2 u}{\pi} \;\frac{1}{(u-z)(u-x)(\bar{u}-\bar{w})}= \frac{1}{z-x} \lim_{R\ra\infty}(I_R(z,w)-I_R(x,w))
\\= \frac{2}{z-x}\log\left|\frac{x-w}{z-w} \right|
\end{multline}
where we used the expansion $\frac{1}{(u-z)(u-x)}=\frac{1}{z-x}(\frac{1}{u-z}-\frac{1}{u-w})$ to reduce the integral to (\ref{log-integral over D_R}). Integral (\ref{3-point integral}) is crucial for the computation of $3$-point functions.

Another useful integral of this type is
\begin{equation}\label{int 1/((u-z)(u-x))}
\int_\CC \frac{d^2 u}{\pi} \;\frac{1}{(u-z)(u-x)}=-\frac{\bar{z}-\bar{x}}{z-x}
\end{equation}
One obtains it by presenting the integrand as $\frac{\dd}{\dd\bar{u}}\,\frac{\bar{u}}{(u-z)(u-x)}$ and using Stokes' theorem on the plane with two small disks around $u=z$ and $u=x$ removed.

\subsection{The dilogarithm integral}\label{appendix: dilog integral}
The following integral over a disk is useful for evaluating $4$-point functions
and can be evaluated in terms of the dilogarithm function:
\begin{multline}\label{dilog-integral over D_R}
\int_{D_R}\frac{d^2 u}{\pi}\, \frac{\log|u|}{(u-z)(\bar{u}-\bar{w})} = \\
= \log^2 R+i\DD\left(\frac{z}{w}\right)%-\log^2 |z|-\log|wz|\cdot\log\left| 1-\frac{w}{z}\right| 
%-\log|z|\cdot\log\left| 1-\frac{z}{w}\right|-\log|w|\cdot\log\left| 1-\frac{w}{z}\right|-\log|w|\cdot \log|z|
-\log|zw|\cdot \log|z-w|+\log|z|\cdot\log|w|
+ O\left(\frac{\log R}{R^2}\right)
\end{multline}
Here $\DD(z)=\mr{Im}\,\mr{Li}_2(z)+\arg(1-z)\log|z|$ is the \textit{Bloch-Wigner dilogarithm},
%(the monodromy-free dilogarithm), 
see \cite{Zagier}. It is the monodromy-free variant of the standard dilogarithm $\mr{Li}_2(z)=-\int_0^z dt\frac{\log(1-t)}{t} $ -- the analytic continuation of the sum $\sum_{n\geq 1}\frac{z^n}{n^2}$ convergent on the disk $|z|\leq 1$. In particular, $\DD(z)$ is a real-analytic function everywhere on $\CC P^1$ except at $z=0,1,\infty$ where it is continuous (and vanishes) but is not differentiable. Function $\DD(z)$ satisfies the identity $\DD(1/z)=-\DD(z)$,\footnote{The more general identity is that, under a M\"obius transformation permuting points $0,1,\infty$, $\DD(z)$ changes by the sign of the permutation:  $\DD(z)=\DD(\frac{1}{1-z})=\DD(1-\frac{1}{z}) =-\DD(\frac{1}{z})=-\DD(1-z)=-\DD(\frac{z}{z-1})$.} 
thus it is clear that the r.h.s. of (\ref{dilog-integral over D_R}) conjugates when $z$ and $w$ are interchanged.
The $O\left(\frac{\log R}{R^2}\right)$ remainder term in (\ref{dilog-integral over D_R}) can be written explicitly as 
$\log R \log(1-\frac{z \bar{w}}{R^2})-\frac12 \mr{Li}_2(\frac{z\bar{w}}{R^2})$.

Starting from (\ref{dilog-integral over D_R}), similarly to (\ref{3-point integral}), one obtains
\begin{multline}\label{3-point dilog integral}
\int_\CC \frac{d^2 u}{\pi} \frac{\log|u|}{(u-z_1)(u-z_2)(\bar{u}-\bar{z}_3)}=\\
=\frac{1}{z_1-z_2}\left(i\DD\left(\frac{z_1}{z_3}\right)-\log|z_1z_3|\cdot \log|z_1-z_3|+\log|z_1|\cdot \log|z_3| -\quad  \Big(z_1\leftrightarrow z_2\Big)\right)
\end{multline}
The last term in the brackets stands for the previous terms with $z_1$ replaced by $z_2$.

%\section{Some explicit OPE computations}
%\subsection{$Ta$ OPE}
%\subsection{$J^\tot b$ OPE}
%\subsection{$J^\tot \gamma$ OPE}

%\section{A toy example for contact terms in quantum mechanics}

\end{document}